\documentclass[11pt]{article}

\usepackage[bf,small]{caption}
\usepackage[export]{adjustbox}
\usepackage[sort&compress,square,numbers]{natbib}
\usepackage[title]{appendix}
\usepackage{amsmath,amsthm,amssymb,setspace,enumitem,epsfig,titlesec,verbatim,color,array,multirow,comment,graphicx}
\usepackage{amssymb,amsfonts,amsmath,amsthm}
\usepackage{authblk}
\usepackage{bbm}
\usepackage{bm}
\usepackage{ifthen}
\usepackage{xspace}
\usepackage{complexity}
\usepackage{url}
\usepackage{fullpage}
\usepackage{graphbox}
\usepackage{xcolor}
\usepackage{tikz}
\usetikzlibrary{positioning}

\definecolor{promink}{HTML}{C25E05}
\definecolor{dormink}{HTML}{E8912F}
\definecolor{unrink}{HTML}{6E6E6E}
\definecolor{promwash}{HTML}{FDEBD8}
\definecolor{dormwash}{HTML}{FEF5EA}
\definecolor{unrwash}{HTML}{F1F1F1}
\definecolor{dormfill}{HTML}{FFE3CA}
\definecolor{classcrimson}{HTML}{DC143C}

\usepackage{graphicx}
\usepackage{hyperref}
\usepackage[capitalize]{cleveref}
\crefname{figure}{Figure}{Figures}
\usepackage{multicol}
\usepackage{subcaption}
\usepackage{times}
\usepackage{nicefrac}

\usepackage[top=2.5cm,left=2.7cm,right=2.7cm,bottom=3.2cm]{geometry}

\usepackage[nomarkers,figuresonly,nofiglist]{endfloat}

\hypersetup{
  colorlinks=true,
  linktoc=all,
  linkcolor=blue,
  citecolor=blue,
  urlcolor=blue,
}

\newtheorem{theorem}{Theorem}

\theoremstyle{definition}

\let\leq=\leqslant
\let\geq=\geqslant
\let\le=\leqslant
\let\ge=\geqslant

\newcommand{\classfontshape}{\fontfamily{cmss}\upshape\selectfont}

\newcommand{\classtext}[1]{{\color{classcrimson}\mbox{\classfontshape #1}}\xspace}
\newcommand{\NPcomplete}{\classtext{NP-complete}}
\newcommand{\NPhard}{\classtext{NP-hard}}
\newcommand{\NPhardness}{{\color{classcrimson}\mbox{\classfontshape NP-hard}}ness\xspace}

\makeatletter
\def\blfootnote{\gdef\@thefnmark{}\@footnotetext}
\makeatother

\titleformat{\section}{\sffamily \fontsize{12}{12}\bfseries}{\thesection}{1em}{}
\titleformat{\subsection}{\sffamily \fontsize{10}{10.5}\bfseries}{\thesubsection}{1em}{}

\newcommand{\SI}{{\bf SI}}

\newcommand{\figcaption}[1]{\refstepcounter{figure}\par\vspace{0.3cm}\noindent{\small\textbf{Figure \thefigure:} #1}\par}

\title{\bfseries\sffamily \LARGE Efficiently classifying shocks in complex systems requires dormant reporters}
\date{}

\author[a,b,c]{David A. Brewster}
\author[a,b]{Philippe Cluzel}

\affil[a]{John A. Paulson School of Engineering and Applied Sciences, Harvard University, Cambridge, MA~02138,~USA}
\affil[b]{Department of Molecular and Cellular Biology, Harvard University, Cambridge, MA~02138,~USA}
\affil[c]{Department of Organismic and Evolutionary Biology, Harvard University, Cambridge MA~02138,~USA}

\begin{document}
\maketitle
\onehalfspacing

\vspace{-1.1cm}

\noindent
\textbf{%
Many natural and engineered systems are large complex networks of interacting components, and external perturbations drive them along different dynamical paths.
Identifying which perturbation occurred matters for diagnosis, control, and prediction.
Yet often times only a few components can be jointly monitored.
Which components should be monitored?
Experimental practice usually favors placing reporters at the most sensitive sites, where perturbations produce the largest effects.
Using a simple dynamical model for complex systems with heterogeneous connectivity,
we ask how sparse reporter panels should be chosen to classify shocks from partial trajectories when repeated trials only approximately reproduce an ideal initial condition.
Once that reproduction is imperfect, sensitivity ranked panels fall far short of optimal, and the shortfall grows with the noise.
We find that the best panels mix two kinds of reporters.
A promiscuous reporter responds to most shocks, so it separates them mainly by degree, and degree fluctuates from trial to trial.
A dormant reporter responds to only a few shocks but strongly, and its answers do not scatter as much between trials.
As noise grows, the cost of losing a dormant reporter rises to meet the cost of losing a promiscuous one.
Panels of either kind alone classify worse than the mixture, and no property of the members collected individually explains the ordering.
Most of all, we find that only a minuscule number of reporters are needed on a panel to accurately identify which shock hit the system.
We implement an efficient algorithm to assemble such a panel.
Together these results provide a low cost practical design principle for monitoring large complex dynamical systems.
}

{\bf \blfootnote{Corresponding author e-mail address: dbrewster@g.harvard.edu}}

\section*{Introduction}

Many natural and engineered phenomena can be understood as complex systems composed of many interacting parts.
Ecological communities, gene regulatory networks, neural circuits, economic systems, cities, and social groups all exhibit collective behavior that is not obvious from any one component alone
\cite{kauffman1969metabolic,kauffman1969homeostasis,may1972stable,levin1998ecosystems,newman2003structure,menczer2020first,delriochanona2020supply}.
Network science provides a common framework for relating structure to collective dynamics.
The same ideas have been used to study small world organization, heterogeneous connectivity, spreading, collective behavior, synchronization, and epidemic transitions across physical, biological, and social systems
\cite{watts1998collective,barabasi1999emergence,strogatz2001exploring,albert2002statistical,boccaletti2006complex,pastor2015epidemic,menczer2020first}.
In biology, Boolean network models provide a simple way to study collective dynamics, attractors, robustness, evolvability, and the phase transitions between ordered and chaotic behavior in large regulatory systems
\cite{kauffman1969metabolic,kauffman1969homeostasis,aldana2003boolean,aldana2003natural,oikonomou2006topology,parmer2022influence}.

A central difficulty is that complex systems are almost never observed in full.
Instead, experiments rely on a limited set of observations.
In cell and molecular biology, fluorescent proteins, time lapse measurements, microfluidic platforms such as the mother machine, and transcriptional reporter libraries made it possible to follow gene expression and physiological state in living cells and even in single bacteria
\cite{chalfie1994green,cluzel2000ultrasensitive,golding2005realtime,le2005realtime,wang2010robust,zaslaver2006library,potvintrottier2016synchronous,balleza2018systematic}.
Such experiments also made clear that stochastic gene expression and variability between cells are intrinsic features of living systems
\cite{korobkova2004molecular,paulsson2004summing,paulsson2005models,lestas2010fundamental,hilfinger2011separating}.
In neuroscience, even large scale recordings sample only a small fraction of the underlying circuit, yet collective network states can still be inferred from partial populations.
Low order statistical models can capture large scale resting state brain activity
\cite{schneidman2006weak,watanabe2013pairwise}.
Recordings in behaving animals also show that neurons without a classical stimulus response carry information about the stimulus and about the animal's decision,
and that neurons which are not themselves selective can improve the fidelity of a population code when read out as part of an ensemble
\cite{insanally2019spike,leavitt2017correlated}.
In human systems, smartphones and other passive sensing technologies provide sparse but informative traces of behavior, mobility, and social interaction
\cite{wang2014studentlife}.
More generally, recent work has shown that carefully chosen subsets of nodes can reveal important aspects of global network behavior even when most nodes are unobserved
\cite{hilfinger2016constraints,maclaren2025observing}.
In all of these settings, one faces the same practical question:
if only a few components can be measured, which ones should they be?
In practice the choice is usually settled by prior knowledge, by sensitivity screens, by teleological assumptions, or by what the instruments happen to export
rather than by a quantitative placement criterion (see \SI{} for many examples of this in the scientific literature).

Partial observation is further complicated by noise.
In real systems, repeated trials rarely begin from exactly the same state.
Cells with the same genotype differ because of stochastic gene expression, growth fluctuations, and uncontrolled microenvironmental variation
\cite{paulsson2004summing,paulsson2005models,lestas2010fundamental,wang2010robust}.
In bacterial chemotaxis, the same fluctuations that drive behavioral variability also shape the response to small stimuli, so variability and response cannot be treated as independent properties
\cite{emonet2008relationship,park2010interdependence}.
Neural and behavioral measurements face analogous uncertainties from ongoing background activity, latent variables, finite sampling, and imperfect experimental control
\cite{schneidman2006weak,watanabe2013pairwise,wang2014studentlife}.
As a result, the observed response to a perturbation is inevitably a mixture of perturbation specific effects and variability in the underlying system state.
This kind of uncertainty motivates our focus on noisy initial conditions.

This question becomes sharper when the goal is not just to detect that a perturbation occurred, but to further identify which perturbation occurred.
In bacteria, for example, antibiotics trigger distributed transcriptional and physiological responses that depend on drug identity, dose, and combination
\cite{bollenbach2009nonoptimal,wood2014uncovering}.
Related inference problems arise in other settings as well,
from distinguishing environmental disturbances in ecological or cellular networks to identifying neural or behavioral responses from partial trajectories,
or even anticipating impending tipping points from sparse statistics
\cite{masuda2026tipping}.
Even there the most strongly fluctuating components are not automatically the most informative, because a large signal can carry proportionately large noise,
and rankings by response magnitude degrade when components differ in how noisy they are \cite{masuda2024anticipating}.
Existing approaches often emphasize nodes that are highly connected, highly central, highly influential, or maximally sensitive.
This emphasis is reinforced by the literature on influence maximization and spreading,
including recent extensions to Boolean network control which often seeks nodes that maximize reach or steer the system most effectively
\cite{kempe2003maximizing,kitsak2010identification,morone2015influence,parmer2022influence}.
But the best nodes for spreading influence need not be the best nodes for diagnosing which shock occurred.

Quantitative sensor placement has mostly served other goals than the one we study in this paper.
Classical methods choose measurements to reconstruct or estimate the state of a dynamical system \cite{joshi2009sensor,liu2013observability,summers2016submodularity}.
A related line asks how much of a network becomes structurally observable as sensors are added, and finds that the largest observable component appears through a percolation like transition \cite{yang2016observability}.
A related instinct is to reduce dimensionality first.
Principal component analysis (PCA) and its relatives summarize high dimensional statistics with a few latent coordinates \cite{pearson1901lines,jolliffe2016principal}.
But each latent coordinate is a weighted combination of possibly all of the original variables, so evaluating it still requires measuring the whole system.
A monitoring panel faces the opposite constraint.
It reports a few physical components exactly and says nothing about the rest, so the reduction must be dimension preserving.
Closer to our question, sparse sensors have been optimized to classify static labeled data \cite{brunton2016sparse},
and minimal sensor sets have been derived that discriminate the attractors of a Boolean network under a worst case number of corrupted nodes \cite{cheng2021discrimination}.
Our problem differs in both ingredients and complements this work.
The observations are settled trajectories of a perturbed dynamical system rather than static snapshots of labeled classes.
The uncertainty is stochastic variation in the initial state rather than a worst case bound.

Here we use Boolean threshold networks with heterogeneous connectivity and ask how to choose a small measurement node set to classify shocks when initial conditions are noisy.
Our finding yields a practical design principle for placing reporters in large complex systems.

\section*{Model}

We use a Boolean threshold network with $N$ nodes to model complex system dynamics.
Node $j$ has binary state $\sigma_j(t) \in \{0,1\}$ at discrete time $t$.
When $\sigma_j(t)=1$ the node is in the active state,
and when $\sigma_j(t)=0$ the node is in the inactive state.
The full system state is $\boldsymbol{\mathbf \sigma}(t) = (\sigma_1(t),\dots,\sigma_N(t))$.
Each directed edge $i \to j$ carries a weight $w_{ij} \in [-1,1]$.
All nodes are updated synchronously according to
\begin{equation}
\sigma_j(t+1) =
\begin{cases}
1, & \sum_i w_{ij}\sigma_i(t) > 0,\\
0, & \sum_i w_{ij}\sigma_i(t) < 0,\\
\sigma_j(t), & \sum_i w_{ij}\sigma_i(t) = 0.
\end{cases}
\end{equation}
Thus each node follows the sign of its total weighted input with ties leaving the node unchanged (see \textbf{Figure 1b}).
We use power law out-degree networks which provide a simple and realistic model of heterogeneous connectivity.
Edge weights are drawn independently and uniformly from $[-1,1]$.

To model exogenous shocks, we generate $d$ perturbed versions of the same underlying network.
For each shock, $g$ target nodes are chosen uniformly at random, and every outgoing edge weight of each target is independently redrawn to $-1$ or $+1$ with equal probability.
We use the two extreme values deliberately.
A shock models a dramatic intervention that either fully enables or fully suppresses each affected interaction, in contrast to the finely graded weights of the unperturbed network.
A shocked node therefore acts on its downstream neighbors at full strength with a new random pattern of activation and repression (see \textbf{Figure 1c}).
Each shock produces a distinct pattern of weight changes while leaving the rest of the network unchanged.
The unperturbed network serves as a control.

To model experimental noise, each network is assigned a single base initial condition,
generated by setting each node's state independently to $0$ or $1$ with equal probability.
The base state plays the role of the ideal preparation of an experiment.
In practice no preparation is exact, and each repeated trial starts from a slightly different state.
We quantify this with a noise parameter $\varepsilon \in [0,1]$.
Each replicate initial condition reproduces each component of the ideal state faithfully with probability $1-\varepsilon$ and re-randomizes it otherwise.
At $\varepsilon = 0$ every trial starts exactly from the ideal state.
At $\varepsilon = 1$ the replicates carry no memory of the preparation (see \textbf{Figure 1e}).
For each control and shocked network, we develop the dynamics for $T$ synchronous updates 
and retain the final portion of the trajectory (see \textbf{Figure 1d}).

The observation available to the classifier is not the full network state, but only the states of a selected subset of $m$ nodes, which we call reporters.
For a reporter panel $\{j_1,\dots,j_m\}$, the observed state at time $t$ is
$\mathbf r(t) = (\sigma_{j_1}(t),\dots,\sigma_{j_m}(t))$.
The inference task is to identify which of the $d$ shocks generated an observed reporter trajectory (see \textbf{Figure 1f}).
The unperturbed control is an eleventh alternative, so there are $d+1$ classes.
In practice, we classify shock identity from the retained reporter states using a random forest classifier.
We aggregate predictions across the retained samples to assign a single shock label to each trial
(see \SI{} for details).

The central question of the paper is how to efficiently choose a small reporter panel that maximizes the accuracy with which shocks can be identified.

\section*{Results}

\subsection*{The ensemble spans the transition between order and chaos}

The dynamical regime of the ensemble is controlled by the network's power law degree exponent $\gamma$ and the network size $N$ (\textbf{Figure 2}).
When averaged over possible states, a simple calculation proves that a single flipped threshold input changes the output of a node with in-degree $K$ with probability proportional to $1/\sqrt{K}$.
Thus noise or damage corrupting a state propagates when the mean connectivity exceeds $K_c = 4\pi/3 \approx 4.19$
(see \SI{} Section 2 for details).
Because in-degrees in this ensemble are binomially distributed (see \textbf{Figure 2c}),
$K_c$ fixes the critical exponent $\gamma_c(N)$ at every size.
We measure the steady state Hamming distance deviation when small noise $\varepsilon=0.02$ is introduced to the initial condition.
Our analytic prediction matches simulation results from $\gamma_c \approx 1.68$ at $N=50$ to $\gamma_c \approx 2.09$ at $N=5000$ (see \textbf{Figure 2a}).
Our working ensemble at $\gamma = 1.8$ and $N = 5000$ has mean connectivity $K=12.2$ and sits inside this finite size transition region, on the chaotic side of its boundary.
In this regime trajectories from nearby initial conditions decorrelate at a fraction of about $0.2$ of the nodes, so the dynamics are rich but far from random.
See \SI{} Section 2.5 for more details about the dynamics.

\subsection*{Shock identification from partial dynamics}

For each network we consider $d=10$ shocks per network and $g=50$ targets per shock.
Thus $1\%$ of the network's nodes are targeted.
These shocks do not modify the topology or completely destroy the dynamics.
The basins of attraction change, but the attractors typically remain similar to the control attractors.
The inference problem is to assign an observed partial trajectory to the correct shocked network.
In particular, we consider the problem of assigning one time snapshot of a trajectory to a shock.
We sample trajectories after $T=1000$ time steps and collect $10$ time snapshots.
This far into the dynamics, the trajectory is on or near an attractor.
Since only a small subset of nodes can be measured, different reporter panels preserve different aspects of the dynamics.
See \SI{} Section 3 for details.

\subsection*{Evolved panels beat random placement with $10$ times fewer reporters}

We next asked how the placement of a fixed number of reporters affects classification accuracy.
We compared randomly chosen panels with panels selected by a genetic algorithm to maximize classification performance (\textbf{Figure 3a}).
Every accuracy we report is the score of a finished panel on fresh trials.
The elbow of the evolved accuracy curve sits at $m = 8$ reporters at every noise level,
and beyond it each doubling of the panel size buys a few accuracy points at most.
At $m = 8$, evolved panels reach an accuracy of $1.00$ with a perfectly reproduced initial condition and $0.94$ at $\varepsilon = 1$,
against $0.47$ and $0.35$ (respectively) for random panels and a chance level of $\nicefrac{1}{11}$.
A random panel needs more than $10$ times as many reporters to match an evolved panel when $m=8$.

To understand the importance of the members of high accuracy reporter panels,
we focused on evolved panels with $m=8$ members.
First, the sensitivity of a node $S_j$ is its average Hamming distance from the control, averaged over time and initial conditions.
Evolved panel members with relatively higher sensitivity have an effective number of shocks to which it is sensitive between $8$ and $10$,
whereas evolved panel members with relatively lower sensitivity have an effective number of shocks between $3$ and $7$
depending on the noise.
The effective number of shocks is the number shocks needed if each shock produced the same sensitivities to the node with the equivalent total sensitivity
(see \textbf{Figure 3b}).
The sensitivity of nodes falls into a bimodal distribution:
some nodes have very low sensitivity (near zero) and some nodes have very high sensitivity (near $0.5$).
We consider nodes that reside closer to the latter mode as highly sensitive (see \SI{} Section 5.2 for details)
The fraction of highly sensitive nodes in a reporter panel is at least double as high in the evolved panels versus the random panels for $m=8$.
This difference disappears as $m$ becomes larger (see \textbf{Figure 3c}). 
On average, nodes are around twice as sensitive in the $m=8$ evolved panels versus the remaining nodes that were not selected by the genetic algorithm
(see \textbf{Figure 3d}).

\subsection*{Reporters separate into $3$ disjoint classes}
The composition of the evolved panels explains part of the evolved panel advantage.
Nodes fall into $3$ disjoint classes.
Promiscuous nodes are at least moderately sensitive to most shocks.
Dormant nodes are highly sensitive to only a few shocks and quiet under the rest.
Unresponsive nodes are not very sensitive to any shock.
We call a node responsive when it is either promiscuous or dormant.
A node answers a shock when its deviation reaches the cutoff $\theta$ that separates the two modes of the sensitivity distribution,
and we write $n_j$ for the number of the $d = 10$ shocks that node $j$ answers.
A node is unresponsive when $n_j = 0$, dormant when $1 \leq n_j \leq 5$, and promiscuous when $n_j \geq 6$.
The distribution of $n_j$ is U shaped and its interior minimum falls at this majority split, so the boundary sits in a valley rather than on a slope (\SI{}).
Only promiscuous nodes have high mean sensitivity.
Dormant and unresponsive nodes both have low mean sensitivity, and what separates them is that a dormant node answers at least $1$ shock while an unresponsive node answers none
(see \textbf{Figure 3e,f}).
We say that highly sensitive (or just sensitive) nodes exceed $\theta$ whereas low sensitivity (or just insensitive) nodes do not.

For the $50$ networks we examined, evolved panels of $m=8$ reporters on average contain about $4$ or $5$ promiscuous nodes, against about $2$ expected by chance.
The evolved panels almost never select a purely promiscuous strategy.
We find that $47$ of $50$ evolved panels contain at least $1$ dormant member at $\varepsilon = 1$.
There are barely any unresponsive nodes selected despite them making up roughly half of the network.
In the $8$ out of $50$ panels that have unresponsive nodes, 
$6$ have only $1$ unresponsive node, and $2$ have $2$ unresponsive nodes.
The rest of the panels are filled with dormant nodes (see \textbf{Figure 4}).
See \SI{} Section 5.3 for more details.

\subsection*{Promiscuous and dormant reporters encode complementary information}

To understand the roles of the classes of reporters, we examined how each member of an evolved panel responds to each individual shock (\textbf{Figure 3f}).
The dynamics pin a dormant node's state.
It holds its control value under most shocks and flips completely under a few.
So its reading is the part of the observation that repeats when the trial does not, with about half the trial to trial scatter of a promiscuous reading.
Dormant nodes are not uninformative.
Their states track the coarse dynamical state of the network.
Dormant nodes do not report persisting attractor basins since a shock leaves none of the control attractors intact---%
knowing the control basin says almost nothing about where a trajectory ends up.
Rather, the node level fingerprint of the dynamic survives the shock.
A node that is stable in the control landscape stays stable and near its control value in the shocked one,
and a thin subset of such nodes differs reproducibly across basins and across shocks (see \SI{} Section 3.2).
This pinning is what makes dormant reporter answers repeatable, and it becomes valuable once the initial state is not perfectly controlled.
A trajectory can then differ from trial to trial even before the response to the shock is fully expressed.
Thus a reading that survives that variation is exactly what the classifier can still lean on.

Promiscuous reporters can also have extremely high sensitivities to particular shocks, similar to dormant reporters (see \textbf{Figure 4}).
Though the discrimination power that dormant reporters have is lost when too many shocks cause a significant response.
Thus the roles that promiscuous and dormant reporter play differ.

The information carried by a panel is also structured across members.
Evolved panels have higher marginal entropy per member than random panels matched for sensitivity composition,
$0.81$ to $0.88$ bits against $0.68$ to $0.76$.
At the same time, evolved panel members share less pairwise mutual information.
The genetic algorithm therefore selects members that are individually informative and mutually complementary in the dynamical space rather than in the topological space
(see \SI{} Section 5.3 for more details).

\subsection*{The balance between the two classes shifts with noise}

We quantified the division of labor by removing members from evolved panels (ablation)
and measuring the loss in accuracy across $15$ noise levels,
using every removal subset of up to $m-1=7$ members (\textbf{Figure 5}).
Panel composition itself changes only mildly with noise,
but the number of sensitive nodes decreases with noise (\textbf{Figure 5b}).
At nearly perfect control of the initial state,
removing one sensitive member costs $0.06$ in accuracy, against $0.04$ for one insensitive member.
As noise grows both penalties climb,
to $0.10$ and $0.08$ (respectively) at $\varepsilon = 1$,
because every member matters more once the panel's margins have thinned (\textbf{Figure 5c}).
The insensitive share of the total removal penalty sits near $40\%$ at low noise and rises to $44\%$ at $\varepsilon = 1$ (\textbf{Figure 5d}).
Comparing removal subsets of the same panel and the same size that differ in exactly one member,
the extra cost of removing a sensitive member rather than a member below the sensitivity cutoff peaks in the middle of the panel,
reaching $0.067$ per member at depth $3$ with a perfectly reproduced initial condition against $0.032$ at $\varepsilon = 1$ (\textbf{Figure 5e}).

Resolving the ablations by composition sharpens this picture in $4$ ways (\textbf{Figure 6}).
First, dormant reporters are never dispensable.
Removing only dormant members costs $0.02$ in accuracy at a single removal even with a perfectly reproduced initial condition, and $0.20$ by 3 removals, so panels keep no dead weight (\textbf{Figure 6a}).
Second, the promiscuous premium is spent by the time noise is full.
Within one panel, removing one promiscuous instead of one dormant member costs $0.012$ extra at $\varepsilon = 0$,
about as much at $\varepsilon = 0.02$ and $\varepsilon = 0.5$ though no longer resolved from zero there,
and $-0.001$ at $\varepsilon = 1$ (see \textbf{Figure 6b}).
Under full noise a dormant reporter is worth as much as a promiscuous one.
The premium never exceeds a fifth of the cost of the removal itself,
so what separates the classes is not how much accuracy one member carries but when it carries it.
Third, the two classes are a discretization of a continuous quantity.
The accuracy drop from removing a single member rises with that member's sensitivity $S$ and then plateaus (\textbf{Figure 6c}).
It shows no jump at the class cutoff.
Fourth, the classes interact.
At $\varepsilon = 0$, removing one dormant member from an intact panel costs $0.02$,
but the same removal costs $0.06$ once a promiscuous member is already gone,
nearly a tripling that replicates in every cohort.
Under noise the intact cost is already near $0.07$ and climbs to $0.09$ after a promiscuous loss (\textbf{Figure 6d}).

The dormant members are a reserve.
When trials are reproducible their information is partly redundant with the promiscuous readout, and it becomes load bearing exactly when that readout degrades, whether through noise or through the loss of promiscuous reporters.
Promiscuous nodes dominate when trials are reproducible.
Dormant nodes grow in importance as variability between trials increases.

The dormant members of a panel also avoid answering the same shocks.
Comparing each panel against draws of other dormant nodes of the same network with the same breadth,
the answered sets overlap less than chance at every noise level (mean pairwise Jaccard $0.198$ against $0.235$, $0.129$ against $0.247$, and $0.169$ against $0.267$ respectively),
and cover more distinct shocks.
The effect is modest and the coverage is far from complete---about $7$ of the $10$ shocks---so the dormant members spread out rather than tile.
They do not, however, specialize in the shocks that the promiscuous members confuse.
The fraction of confusable shock pairs a panel's dormant members separate matches the fraction of already separable pairs they separate.
The difference is $+0.001$, $-0.001$, and $+0.027$ at $\varepsilon = 0$, $0.5$, and $1$.
A greedy control built to target the confusable pairs reaches $+0.27$, $+0.13$, and $+0.16$ (see \SI{} for details).
See \SI{} Section 5.3 for more details, experiments, and analyses.

\subsection*{Information based selection nearly matches the search}

Finally, we compared the evolved panels against simpler placement strategies, all scored with the same classifier (\textbf{Figure 7a,b,c}).
Strategies based solely on network topology fail.
Selecting hubs or covering distinct neighborhoods performs no better than random selection at every noise level.
There is a precedent for this kind of failure in spreading processes~\cite{radicchi2017superblockers}.
Selecting the most sensitive nodes is better,
but its gap to the evolved panels widens with noise,
reaching $0.70$ against $0.94$ at $m=8$ reporters under high noise.
The strongest heuristic is greedy information gain,
which adds the reporter that most reduces the empirical uncertainty about the shock given the panel.
At high noise it reaches $0.92$ at $m=8$ reporters, within about $2$ points of the genetic algorithm at a small fraction of the evaluations.

The wiring diagram alone does not reveal which nodes make good reporters.
See \SI{} Section 5.2 for more details.

\subsection*{An efficient high-accuracy selection rule rediscovers the promiscuous-dormant mix}

The classes explain what evolved panels contain, but knowing the classes is not enough to build one.
We tried, scoring $6$ designed panels at full noise with the same classifier (see \SI{} for details).
A panel of $8$ dormant nodes chosen to cover as many distinct shocks as possible covers all $10$ shocks in every network,
yet reaches an accuracy of $0.550$, barely above the $0.507$ of $8$ dormant nodes drawn at random.
Coverage is not identification, because two shocks that trip the same members still cannot be told apart.

What orders the designs is not any property of the mean response profile.
Rather it is how the members' separations stand against their trial to trial variability (see \SI{} for details).
Computing the discriminability of every pair of alternatives from single trials with each member's separation measured against its variance pooled across all conditions,
the low tail of that distribution over pairs orders all $6$ designs exactly as their accuracy does.
Further, the order correlates with accuracy at $\rho = 0.84$ across the $300$ panels, against $0.50$ for the best statistic of the mean response profile.

A dormant reporter is variable exactly where it flips, so judged against its local noise it looks unreliable at its moments of action,
while judged against its overall variability it is a node that barely fluctuates at all.
The classifier's behavior follows the second judgment.
Panel quality is in this sense collective, not because members interact, but because the criterion is a property of the set.
It is the worst separated pairs of the whole panel that bind.
This is why rankings of nodes one at a time fall increasingly short as noise grows
while the selectors that approach the search optimize set level objectives.
The same criterion applied at the set level succeeds at every noise level (\textbf{Figure 7} and \SI{}).

Greedily adding whichever node most raises the panel's low tail discriminability, computed from the same pilot replicates the search consumes, reaches
$0.97$, $0.92$, and $0.90$ at $\varepsilon = 0$, $0.5$, and $1$, within $3$ to $4$ points of the genetic algorithm's $1.00$, $0.95$, and $0.94$,
and it never calls the classifier: the search scores $3000$ panels, the rule scores one.
We call this selector the spring rule: 
choosing the next member is like tying the candidate to every current member with a spring in response space and letting it settle,
so each pick lands at least some minimum distance from all of them.

The panels the spring rule builds mix promiscuous and dormant reporters without fed knowledge that the classes exist (see \textbf{Figure 7d-f}).
It performs on par with greedy information gain, $0.90$ against $0.92$ at full noise.
The two are not the same criterion: conditional entropy is an average over readout patterns, so a panel can score well while leaving one pair of shocks confused, whereas the low tail scores the binding pair.
The spring rule is also the cheaper one to feed, needing only each candidate's mean and spread per condition, about $100$ replicate readouts per node, so candidates can be screened separately.
In contrast, information gain needs the joint pattern of all candidates in the same snapshot and a number of snapshots that doubles with every member added, $2^m$ for a panel of $m$.
With the $1100$ snapshots retained per network that estimate saturates near $m=10$ members,
which is why the entropy strategies stop at $m = 16$ while the rule runs to $m = 32$ and carries over unchanged to continuous readouts.
Ranking nodes by the identical quantity computed one at a time reaches $0.58$ at full noise, so the ingredient is worthless outside the set, and the rule uses that ranking only to prune the candidate pool.

The reporter panel selection problem is NP-complete in the worst case,
thus likely computationally intractable (see \SI{} Section 5.4 for details).
On typical instances of this ensemble polynomial time rules land within a few points of the search that our results are built on.
So the hardness lives in the worst case and not in the problem an experimenter actually faces.

\section*{Discussion}

Our results show that reporter placement in a partially observed dynamical system is not simply a problem of finding the nodes that respond most strongly to perturbation.
That strategy works only when the initial state of the system is reproduced with nearly perfect precision.
Once even modest uncertainty is introduced, classification depends on two kinds of reading that fail in opposite ways.
A promiscuous node reads out most shocks but typically indiscriminately between imperfect trials.
A dormant node reads out few shocks, but the dynamics pin its state so what it says survives the variation.
In this sense the problem is to secure readings that mean the same thing from one trial to the next.

This perspective helps explain why optimized reporter panels differ from panels chosen to maximize perturbation response alone.
In many settings, the most obvious candidate nodes are those that amplify, spread, or most visibly register an external perturbation.
That logic is natural, and it is closely related to ideas from influence maximization and network control.
But diagnosis under partial observation is a different task.
A node may be highly sensitive and still fail to situate a trajectory within the system's broader dynamical landscape.
Our results suggest that successful measurement design must therefore balance information specific to the perturbation against information specific to the underlying state.
The best reporters are the ones that jointly resolve both.

A concrete scenario makes the recommendation tangible.
A city wants to know, from a few monitored sites, which disruption is underway, a marathon, a stadium event, a transit strike, a storm.
Live monitoring traffic intersection volume is affordable at only a handful of sites.
The natural picks are the downtown hubs that react to everything, and they do spread the event types apart, but they mostly report that something is happening, and their day to day variability is largest.
The spring rule we proposed keeps a couple of hubs and then recruits quiet peripheral sites,
the road along the marathon route,
the street by the stadium,
sites that sit at a predictable level on almost every day,
so that on the rare days they do move the reading is unambiguous.
The watch list that identifies the disruption is not the list of busiest intersections.
Even in the case when all components of a system are available for live tracking---city bike ridership activity per station, for example---%
reducing analysis and identification to a few components of the system can lead to more explainable predictions and more efficient analyses.
This sort of dimensionality reduction is widespread among the sciences in the form of PCA and related techniques.

The Boolean threshold model is intentionally minimal, and that simplicity is both a strength and a limitation.
It lets us isolate a clear principle without committing to the details of a particular biological, neural, or behavioral mechanism.
At the same time, real systems can have graded responses, asynchronous updates, multiple timescales, structured noise, and perturbations that act through more specific pathways than the shocks considered here.
The literature has made clear that discrete network dynamics depend strongly on modeling choices.
Synchronous, asynchronous, and stochastic update schemes can generate different attractors and different perturbation responses
\cite{garg2008synchronous,albert2014boolean,park2026succession}.
There are approaches that expose attractor structure and control logic directly from logical models,
and recent work has extended these ideas to asynchronous attractor landscapes
\cite{zanudo2015cell,park2026succession}.
More broadly, recent reviews and assessments have emphasized that the dynamical repertoire supported by a regulatory network depends not only on topology,
but also on the choice of logical functions, threshold formalism, multivalued rules, and continuous realizations
\cite{gedeon2024network,kadelka2025critical}.
We therefore use a Boolean threshold model not as a claim of unique realism
but as a deliberately simple and analyzable setting in which to study measurement design under partial observation.

The promiscuous and dormant labels should therefore not be viewed as immutable properties attached to nodes in isolation,
but rather as properties of a node relative to a perturbation ensemble, a noise level, and an inference objective.
We also did not consider classifiers that operate directly on time-spanning windowed trajectories rather than single timepoint snapshots.
Our setting is closer to snapshot data, which makes the framework generic across experimental systems where only sparse observations are available.
When full single cell time series can be collected, however, trajectory-based representations may support stronger classification, as recent work in \textit{E. coli} suggests \cite{fraisse2025representation}.
Extending these ideas to models with continuous states, experimentally constrained perturbations, and adaptive or time dependent measurement schemes will be an important next step.

Even with these caveats, the main message is clear.
Sparse measurements work best when they combine nodes that respond strongly with nodes that preserve global dynamical context.
We expect the same principle to hold in experimental settings such as bacterial drug response, neural recordings, and human behavioral data.
Testing this prediction in those systems is a natural next step.
Our results suggest practical strategies for reporter design in systems biology, neuroscience, and behavior, especially when measurement budgets are limited and variability between trials cannot be ignored.
More broadly, our findings suggest a simple principle for observing complex systems:
to infer which shock occurred, one must also know what state the system was in when the shock arrived.

\hfill

{\small

\section*{Methods}

\noindent%
Here we specify the simulation, classification, and selection procedures.
The \SI{} contains full derivations, extended analyses, and implementation notes.

\medskip
\noindent\textbf{Network ensemble.}
All experiments use $50$ independent Boolean threshold networks with $N = 5000$ nodes.
Out-degrees are drawn from the power law $P(k) \propto k^{-\gamma}$ with $\gamma = 1.8$,
truncated at $k = N$, which gives mean connectivity $K \approx 12.24$.
Each node connects to distinct targets chosen uniformly at random (self-connections allowed).
Each edge weight is drawn uniformly from $[-1,1]$.
States update synchronously by the threshold rule in the Model section.

\medskip
\noindent\textbf{Shocks.}
Each network receives $d=10$ independent shocks (a.k.a. perturbations).
A shock selects $g = 50$ target nodes uniformly at random ($1\%$ of the network)
and redraws every outgoing weight of each target to $-1$ or $+1$ with equal probability.
The unperturbed network is the control, so the classification task has $d+1 = 11$ classes.
Target nodes and redrawn weights are fixed once per network and shared across all trials.

\medskip
\noindent\textbf{Trajectories.}
For each network and each shock,
dynamics run from $n_{\textrm{IC}}=10$ initial conditions:
the base state itself and $9$ noisy copies at noise $\varepsilon$.
Each trajectory runs for $T = 1000$ synchronous updates and the last $L = 10$ states are retained.
$n_{\textrm{IC}}$ counts initial condition replicates, not distinct initial condition states.

\medskip
\noindent\textbf{Classifier.}
A reporter panel of size $m$ reduces each retained snapshot to an $m$ bit observation.
We train a random forest classifier of $100$ trees \cite{breiman2001random},
implemented in \texttt{scikit-learn} \cite{pedregosa2011scikit} with default settings otherwise.
We took the retained last $L$ states, per each of the $n_{\textrm{IC}}$ trajectories associated with each noisy trial, and per each of the $d$ shocks and control (i.e., $d+1$ labels).
Then each of these $s:=L\times n_{\textrm{IC}}\times (d+1)=10\times 10\times 11=1100$ snapshots is individually placed into a basket along with its label.
Then half of the items in the basket are randomly removed and stowed for the training data,
such the number of training representatives is the same for each shocks (and the control).
Equivalently, the training data is assembled by removing $s/(2(d+1))=50$ snapshots per class from the basket, sampled uniformly without replacement.
The snapshots that remain in the basket are withheld for testing.
The random forest classifier predicts each withheld snapshot of a class,
and the plurality of those predictions is the label assigned to that class.
The accuracy of one such split is the fraction of the $d+1 = 11$ classes labeled correctly.
Accuracies are averaged over $10$ independent train and test splits.
Additionally,
we found that sampling the train and test split so that no initial condition has a snapshot that appears in both training and testing data
does not significantly affect classification performance.
See \SI{} for more details.

\medskip
\noindent\textbf{Genetic algorithm.}
Panels are evolved separately for each network and each panel size $m$.
An individual is a set of exactly $m$ distinct nodes, and its fitness is the classifier accuracy from $10$ independent train and test splits.
Fitness guides the search only.
Every accuracy reported in this paper is the score of a finished panel on fresh random splits.
The best fitness met during optimization is the maximum of a noisy score and can overstate the panel.
Fitness is accuracy.
The population holds $100$ individuals and evolves for up to $30$ generations. %
The top $10\%$ of the population (i.e., the elites) carry over into the next generation unchanged, with their fitnesses re-evaluated per generation.
For the remaining population, there are two modes of reproduction.
Parents are sampled with probability proportional to $e^{r/\tau}$, where $r$ is fitness and $\tau = 0.1$ is the selection temperature.
Both modes of reproduction draw members by one rule.
With probability $0.9$ the draw is a uniformly random member of an independently sampled parent.
Otherwise it is a uniformly random node.
Half of the non-elite offspring are built by crossover.
Crossover fills all $m$ member slots by this rule.
The other half copy a single parent.
Each of their members is then replaced with probability $0.1$, using the same rule.
Duplicate members are refilled with distinct random nodes to ensure each individual always consists of $m$ members.
Thus, there is no selection pressure to have fewer members in a panel.
The evolved panel is the fittest individual of the final generation, and the analyses use panels with exactly $m=8$ distinct members.

\medskip
\noindent\textbf{Baselines.}
Random panels draw $m$ nodes uniformly without replacement, $10$ panels per network and size.
Heuristic panels take the first $m$ nodes of a ranking:
by sensitivity,
by in-degree,
by out-degree,
by greedy minimum mean squared error (MMSE) column selection on the control covariance,
or by greedy selection of the node whose downstream target set is most dissimilar, in Jaccard distance, from the targets already covered.
Ranking ties are resolved by seeded random jitter and repeated draws are averaged.
All strategies are scored with the same classifier. %

\medskip
\noindent\textbf{Selection rules.}
Greedy information gain adds the node that most reduces the conditional entropy of the shock label given the panel.
The entropy is a plug in estimate over the retained snapshots.
The spring rule adds the node that most raises the panel's low tail discriminability.
That quantity is the $10^{\text{th}}$ percentile, over all pairs of conditions, of the Mahalanobis separation between condition means.
The separation uses the within condition covariance pooled across conditions, with shrinkage toward its diagonal.
Candidates are pre screened to the $400$ nodes of highest mean single node separation.
Neither rule calls the classifier while it builds a panel.
Both are scored afterwards with the same classifier as every other strategy.

\medskip
\noindent\textbf{Sensitivity.}
The deviation of node $j$ under shock $q$ is $\Delta_{j,q} = \bigl\langle |\sigma_j^{(q)} - \sigma_j^{(0)}| \bigr\rangle$, the mean absolute difference between its shocked state $\sigma_j^{(q)}$ and its control state $\sigma_j^{(0)}$, averaged over all replicates and all retained snapshots.
The sensitivity of node $j$ is the mean deviation over the $d=10$ shocks, $S_j = \langle \Delta_{j,q} \rangle_q$.
The cutoff $\theta$ that separates the two modes is placed at the minimum of the smoothed histogram pooled over all $50$ networks, searched on $[0.05, 0.40]$. %
Node $j$ answers shock $q$ when $\Delta_{j,q} \geq \theta$, and $n_j$ counts the number shocks it answers.
Nodes with $n_j = 0$ are unresponsive,
those with $1 \leq n_j \leq 5$ are dormant,
and those with $n_j \geq 6$ are promiscuous.
The $3$ classes are disjoint, and we call a node responsive when it is either promiscuous or dormant (see \textbf{Figure 3e}).
A highly sensitive (or just sensitive) node has (average) sensitivity that exceeds $\theta$ whereas a low sensitive (or just insensitive) node does not.

\medskip
\noindent\textbf{Ablation.}
For every evolved panel with exactly $8$ distinct members we re-estimate the baseline accuracy, remove every subset of $1$ to $7$ members, and retrain the classifier on the remaining members.
Over the $50$ networks, each evaluation averages $30$ independent train and test splits.
The penalty of a subset is the baseline accuracy minus the ablated accuracy.

\medskip
\noindent\textbf{Reproducibility.}
Random seeds are assigned deterministically per network, perturbation, and replicate.
The submission scripts in the code repository record every invocation and its parameters.

\medskip
\noindent\textbf{Code availability.}
All simulations were performed using \texttt{Rust}.
All numerical calculations, optimization algorithms, machine learning, data analyses, plotting, and visualizations were performed using \texttt{Python}.
Our code is available at the following web address: \url{https://github.com/davidb2/boolean-threshold-network/}.

\medskip
\noindent
{\bf Use of generative artificial intelligence (AI).} \\
We used AI assistants (Anthropic Claude, the Opus 5 and Fable 5 models)
for help with coding, data analysis, writing the code that generates the data figures, literature search, and editing of the text.
No figure contains AI generated imagery.
Simulation code in \texttt{Rust} was manually written,
with the exception of one line of code that an AI assistant added that fixed a critical bug.
All research questions, hypotheses, modeling choices, and design decisions are the authors' own.
The authors reviewed and verified all AI assisted output and take full responsibility for the content of this work.

\medskip
\noindent
{\bf Acknowledgments.} \\
D.A.B. is supported by a Harvard Graduate School of Arts and Sciences Prize Fellowship.
The authors are grateful for Harvard's FASRC clusters where much simulation and analyses were performed.

\medskip
\noindent
{\bf Author contributions.} \\
D.A.B. contributed ideas, designed and implemented the model and the simulations, performed the experiments and the analyses, made the figures, and wrote the manuscript.
P.C. conceived the study, supervised the project, and secured funding.
All authors discussed the results and approved the final manuscript.

\medskip
\noindent
{\bf Competing interests.} \\
The authors declare no competing interests. \\

}

\clearpage
\singlespacing
\newgeometry{top=1cm,left=1cm,right=1cm,bottom=1.5cm}

\noindent
\begin{center}
\includegraphics[width=0.88\textwidth]{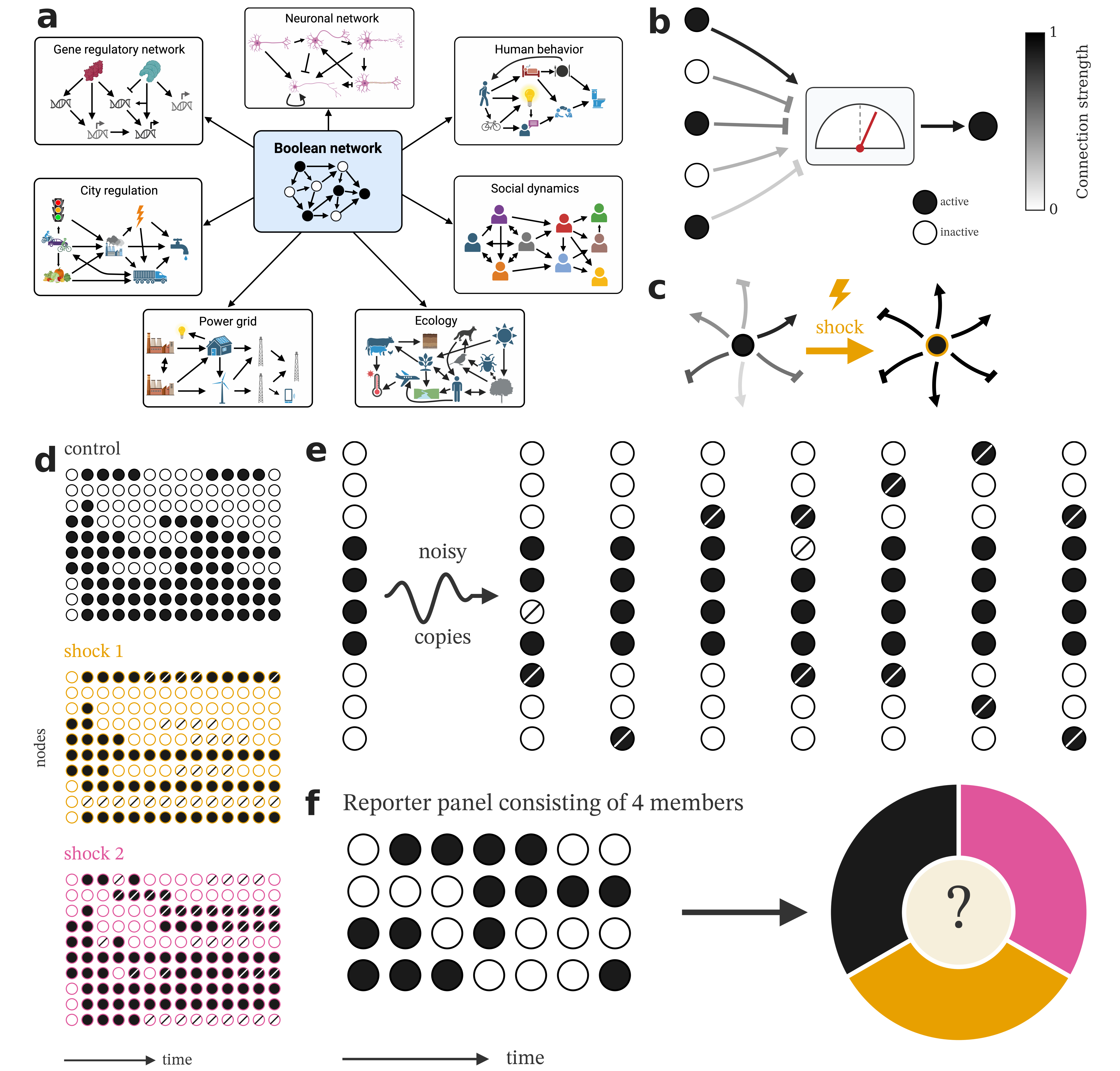}
\end{center}
\figcaption{
\textbf{Shock classification from partially observed dynamics in a Boolean threshold network.}
\textbf{a}, Several examples of domains where complex systems can be modeled as a Boolean network.
Created in BioRender \url{https://BioRender.com/46jpe5e}.
\textbf{b}, Our Boolean networks use threshold updating. 
Input nodes can either be active or inactive.
Connections can be enabling (positive weight, pointed arrowhead) or repressing (negative weight, flat arrowhead), each with varying strengths.
The active input nodes are weighed by their connections and summed.
If the sum is positive (negative), the output of the threshold gate becomes active (inactive, respectively).
At each time step, each node in the network synchronously updates their state using this threshold rule.
\textbf{c}, A node targeted during a shock.
A shock selects target nodes and independently redraws each of their outgoing weights to $-1$ or $+1$ with equal probability.
One active target node is drawn twice with $6$ of its outgoing edges, shaded as in \textbf{b}.
Before the shock the edges carry mixed signs and magnitudes.
After it every edge has magnitude $1$ and some of the signs have switched.
\textbf{d}, Dynamics of a network with $10$ nodes under the rule in \textbf{b}, for a control and two shocked copies of it, all started from the same initial condition.
Circles are filled for active nodes and open for inactive ones, each copy is outlined in the color of its shock, and a diagonal marks every state that differs from the control at that time.
The differences grow with time.
\textbf{e}, Noisy copies.
The initial condition of \textbf{d}, then $7$ noisy copies of it at $\varepsilon = 0.3$, where each node is flipped independently with probability $\varepsilon/2$.
A diagonal marks every node that flipped.
One trial starts from the initial condition itself and the rest start from copies like these.
\textbf{f}, The inference task.
A panel of $4$ reporters is watched for $7$ time steps in one noisy trial, and a classifier must decide which of the $3$ alternatives produced it: the control, shock $1$, or shock $2$.
}
\label{fig:main-1}
\clearpage

\noindent
\begin{center}
\includegraphics[width=0.98\textwidth]{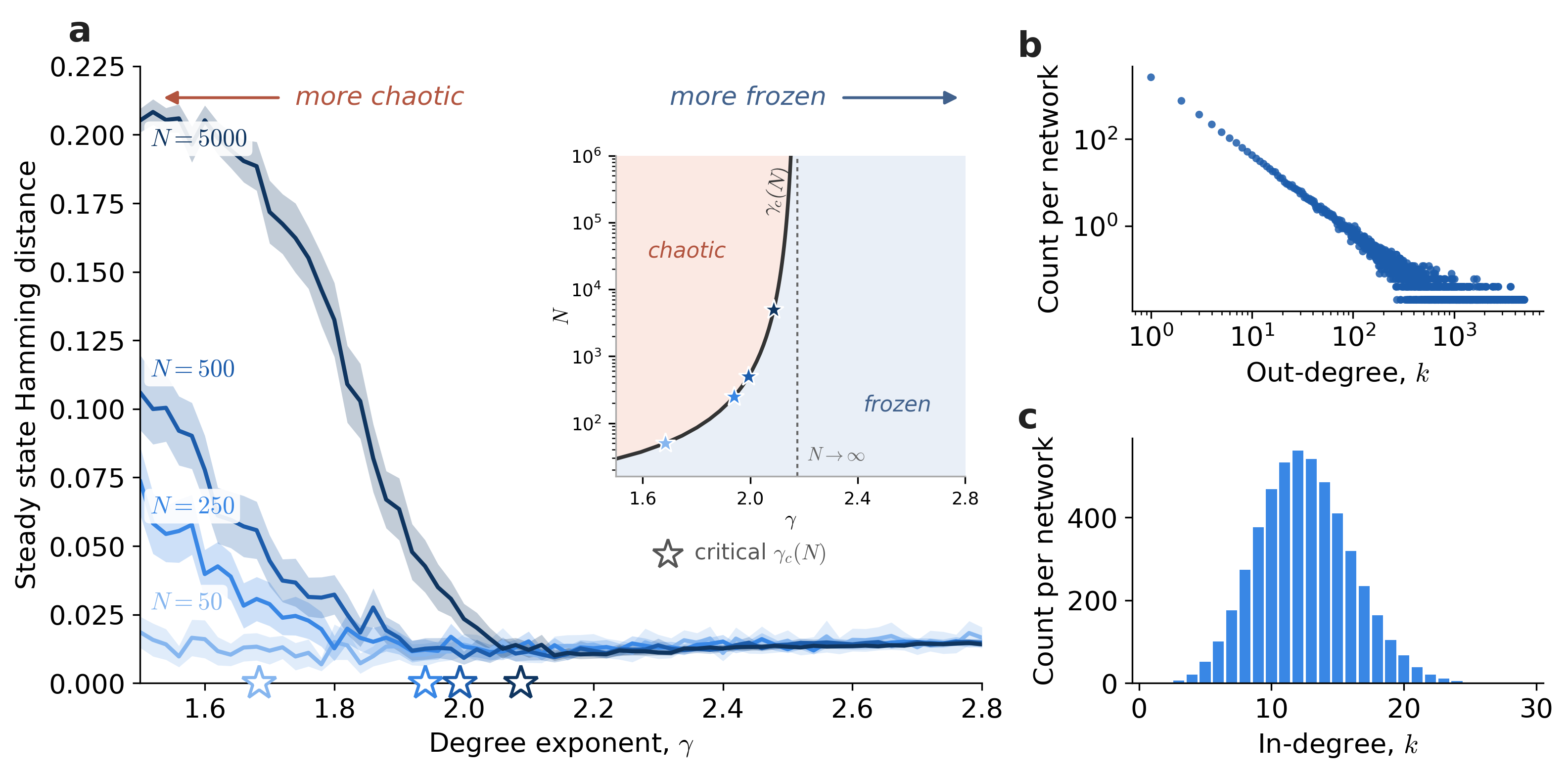}
\end{center}
\figcaption{
\textbf{Phase transitions between order and chaos.}
\textbf{a}, Steady state Hamming distance between trajectories started from imperfect copies of the same initial condition at noise $\varepsilon = 0.02$, as a function of the degree exponent $\gamma$ for $4$ different network sizes.
Each network contributes every pair of its $10$ replicates.
Curves show the mean over $100$ networks per point with $95\%$ confidence bands.
White stars on the horizontal axis mark the critical exponent $\gamma_c(N)$ predicted by the annealed calculation in the \SI{}, where criticality occurs when the mean connectivity crosses $K_c \approx 4\pi/3$.
In the inset panel, the phase diagram is in the $(\gamma, N)$ plane.
The boundary curve $\gamma_c(N)$ separates the chaotic and frozen regions and approaches $\gamma_c(\infty) \approx 2.17$ as $N \to \infty$.
Stars mark the same $4$ network sizes as in panel \textbf{a}.
\textbf{b}, Out-degree distribution pooled over $50$ simulated networks at $\gamma = 1.8$, $N = 5000$.
\textbf{c}, The corresponding in-degree distribution is narrow and binomial.
}
\label{fig:main-2}
\clearpage

\noindent
\begin{center}
\includegraphics[width=0.95\textwidth]{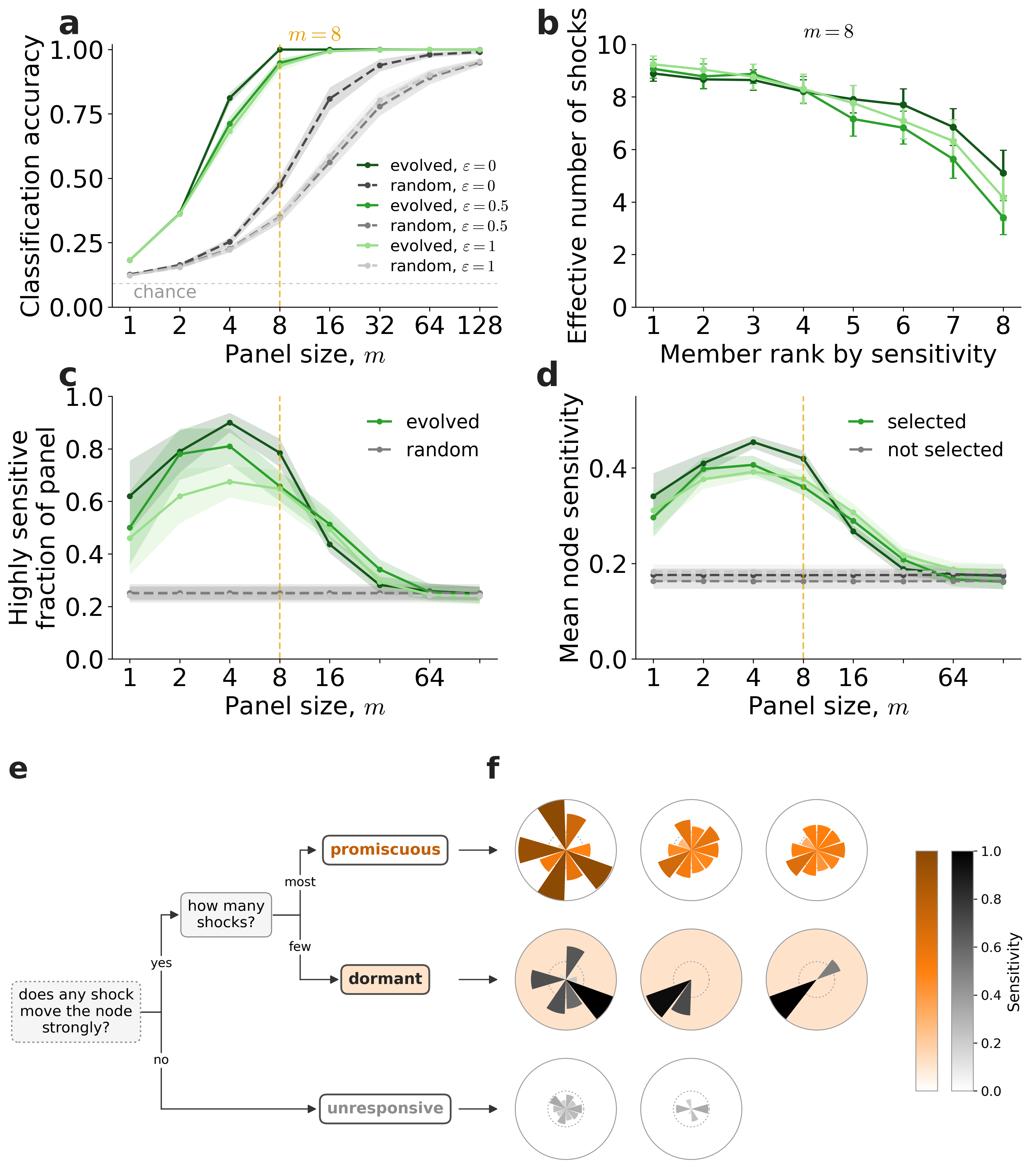}
\end{center}
\figcaption{
\textbf{The elbow of the accuracy curve sits at $8$ reporters, and evolved panels mix $3$ classes of them.}
\textbf{a}, Classification accuracy versus panel size $m$ for panels selected by the genetic algorithm and for random panels, at noise $\varepsilon = 0$, $0.5$, and $1$.
The elbow of the evolved curve sits at $m = 8$ at every noise level, and a random panel needs more than $10$ times as many reporters to match an evolved panel of $8$.
\textbf{b}, The effective number of shocks each member responds to, by member rank, across all $50$ networks.
It falls from about $9$ for the most sensitive member to between $3$ and $5$ for the least, with darker shades lower noise as in \textbf{a}.
\textbf{c}, The highly sensitive fraction of the panel versus $m$, against the random expectation.
Darker shades are lower noise, as in \textbf{a}.
\textbf{d}, Mean sensitivity of selected and unselected nodes versus $m$.
Enrichment is strongest for small panels and dissolves by $m \approx 64$.
\textbf{e}, How a node is assigned a class.
Moving strongly means the deviation reaches the sensitivity cutoff $\theta$ defined in Methods.
A node that no shock moves strongly is unresponsive.
A node that most shocks move strongly is promiscuous.
A node that only a few shocks move strongly is dormant.
An arrow runs from each leaf to the row of \textbf{f} it classifies. %
\textbf{f}, The $m = 8$ members of one evolved panel at high noise ($\varepsilon = 1$), grouped by class and ordered by mean sensitivity within each group.
Each disk is one member and each wedge is one of the $10$ shocks, with wedge length and color depth equal to the sensitivity of that member to that shock. %
The dotted circle inside each disk marks the sensitivity cutoff $\theta$, so a wedge that crosses it is an answered shock.
Promiscuous members respond to nearly every shock and differ only in how strongly, so their profiles overlap heavily.
Dormant members, drawn on a translucent orange ground, concentrate a large response in $1$ or $2$ shocks and stay quiet under the others.
The $2$ unresponsive examples show wedges under many or a few shocks, none of which reaches the cutoff.
}
\label{fig:main-3}
\clearpage

\noindent
\begin{center}
\includegraphics[width=0.88\textwidth]{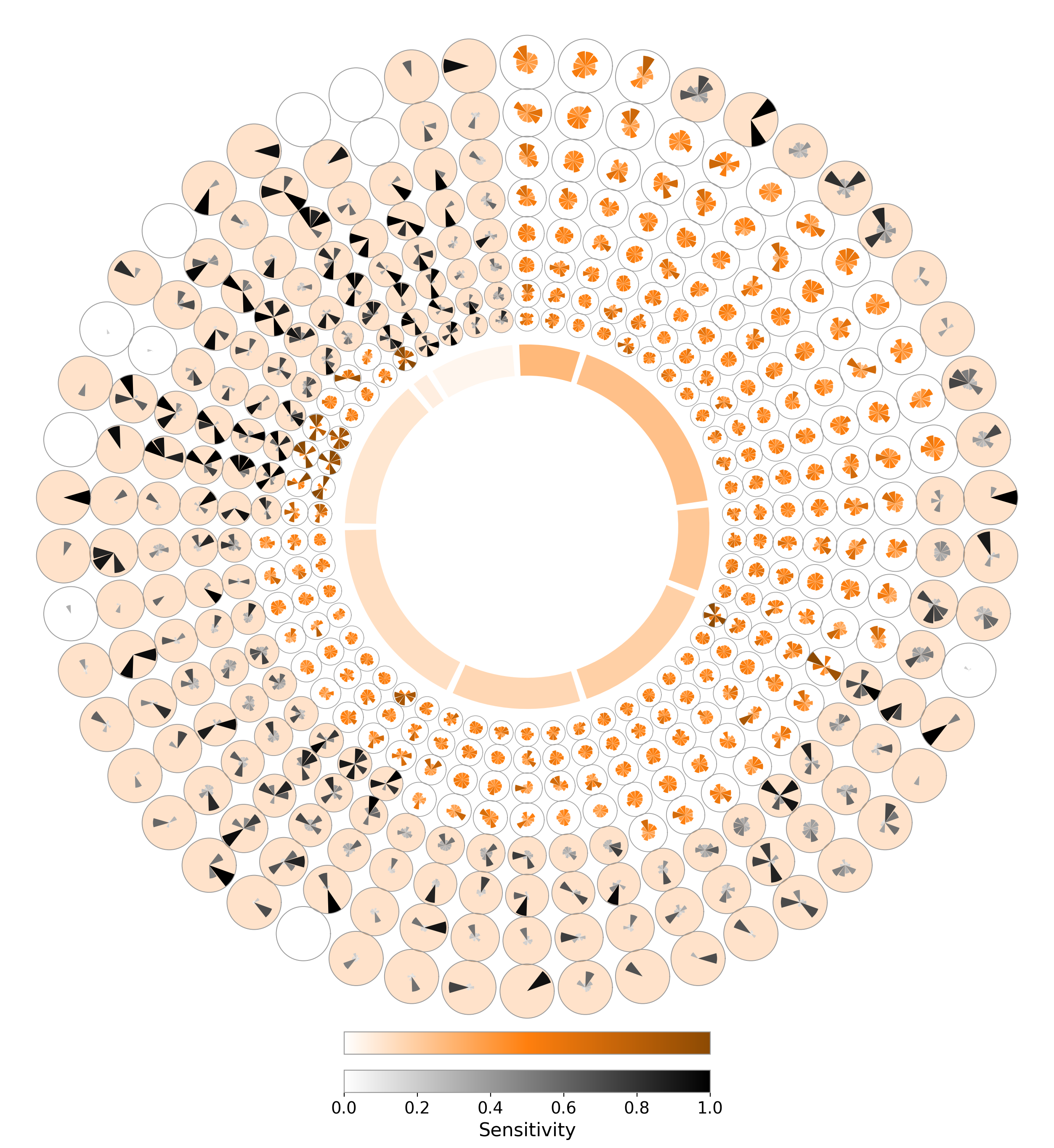}
\end{center}
\figcaption{
\textbf{Nearly every evolved panel mixes promiscuous reporters with dormant reporters.}
Each of the $50$ rays is one network's evolved panel of $8$ reporters at high noise ($\varepsilon = 1$).
Each small disk shows one member's sensitivity to the $10$ shocks, as in Figure 3f.
Each wedge is one shock, and wedge length and color depth both give the sensitivity of that member to that shock, on the shared scale below.
The classes are those defined in Figure 3e from a node's response across shocks.
Promiscuous members respond to nearly every shock and are drawn with orange wedges on white.
Dormant members concentrate their response in only a few shocks and are drawn with black wedges on a translucent orange ground. %
Unresponsive members respond to no shock and appear as nearly empty disks on white.
Across the $400$ members shown, $54.3\%$ are promiscuous, $42.5\%$ dormant, and $3.3\%$ unresponsive.
Rays are ordered by the number of promiscuous members, which therefore is non-increasing as the figure is read clockwise from the top, ranging from $8$ to $0$.
The inner ring encodes the same count: the intensity of orange is proportional to it, and a gap marks each change in it, so arc length shows how many networks share each composition.
Disk size grows with radius for layout only and carries no meaning.
Of the $50$ panels, $47$ contain at least $1$ dormant member.
The other $3$ contain none.
Of the same $50$, $4$ contain no promiscuous member.
}
\label{fig:main-4}
\clearpage

\noindent
\begin{center}
\includegraphics[width=0.92\textwidth]{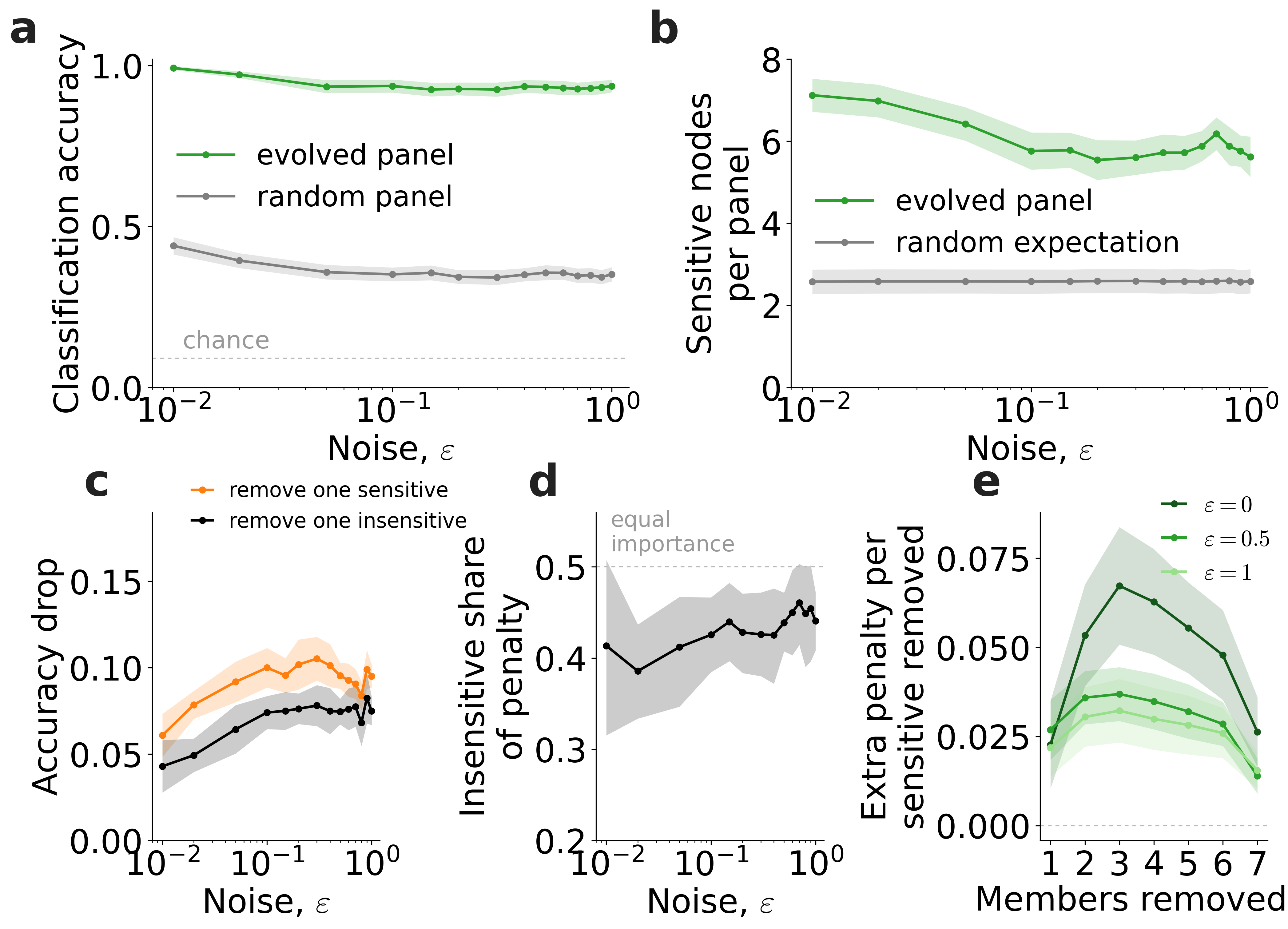}
\end{center}
\figcaption{
\textbf{Evolved panels nearly always mix sensitive and insensitive reporters, and the balance of their importance shifts with noise.}
\textbf{a}, Classification accuracy of evolved and random panels of $8$ reporters as a function of the initial condition noise $\varepsilon$.
\textbf{b}, Number of sensitive nodes per evolved panel against the random expectation.
\textbf{c}, Accuracy drop after removing one sensitive or one insensitive member from an evolved panel.
Both penalties climb as noise grows, and losing a sensitive member costs more at every level.
\textbf{d}, The low sensitivity share of the total removal penalty slightly rises with noise, suggesting that the curves from \textbf{c} are getting slightly closer to each other.
\textbf{e}, The extra penalty per sensitive node removed, estimated within each panel at each removal depth, at noise $\varepsilon = 0$, $0.5$, and $1$.
The premium peaks at intermediate depth and is largest with a perfectly reproduced initial condition.
}
\label{fig:main-5}
\clearpage

\noindent
\begin{center}
\includegraphics[width=0.98\textwidth]{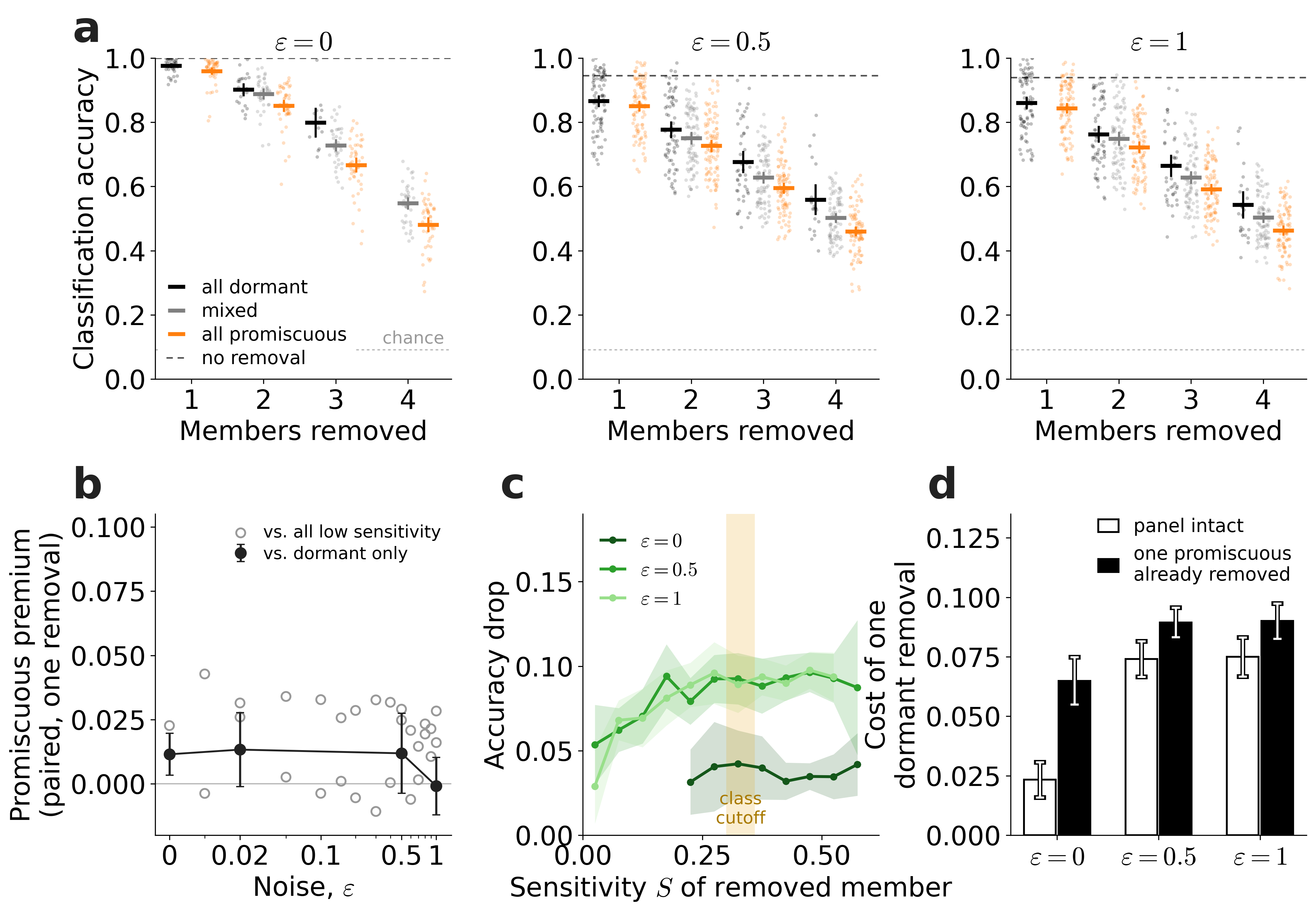}
\end{center}
\figcaption{
\textbf{What it costs to remove promiscuous versus dormant reporters, and how noise changes the answer.}
A continuation of Figure \ref{fig:main-5}, but now looking at the specific reporter classes.
\textbf{a}, Classification accuracy after removing $1$ to $4$ members from evolved $8$ member panels, at noise $\varepsilon = 0$, $0.5$, and $1$.
Each removal subset is classified by its composition: all removed members promiscuous, all dormant, or mixed.
Small points are per panel means, bars are class means with $95\%$ confidence intervals, the dashed line is the unablated baseline, and cells backed by fewer than $10$ panels are not drawn.
Removing only dormant members is never free.
At $\varepsilon = 0$ and one removal it already costs $0.02$ in accuracy, and $0.20$ by $3$ removals.
\textbf{b}, The premium for removing one promiscuous instead of one dormant member, paired within each panel.
Open circles compare promiscuous members against every member below the sensitivity cutoff (insensitive), which is available at all $15$ noise levels but pools unresponsive members into the reference and so slightly overstates the premium.
Filled circles compare them against genuinely dormant members only, at the $4$ noise levels for which per shock deviations were retained, with $95\%$ confidence intervals across panels.
The noise axis is logarithmic above $0.02$ and linear below, so the zero point can be shown.
Corrected this way the premium is $0.012$ at $\varepsilon = 0$ and stays near that level. %
Under full noise a dormant reporter is worth nearly as much as a promiscuous one, something that the sensitivity analysis in Figure \ref{fig:main-5} could not capture.
\textbf{c}, The accuracy drop from removing a single member as a function of that member's sensitivity $S$.
The accuracy drop rises with $S$ and then quickly plateaus. %
\textbf{d}, %
Cost of removing one dormant member from an intact panel (white bars) and after one promiscuous member is already gone (black bars).
At $\varepsilon = 0$ the dormant cost nearly triples once a promiscuous member is lost.
Under noise the dormant members carry that load from the start.
Bars are means with $95\%$ confidence intervals across panels.
}
\label{fig:main-6}
\clearpage

\noindent
\begin{center}
\includegraphics[width=0.98\textwidth]{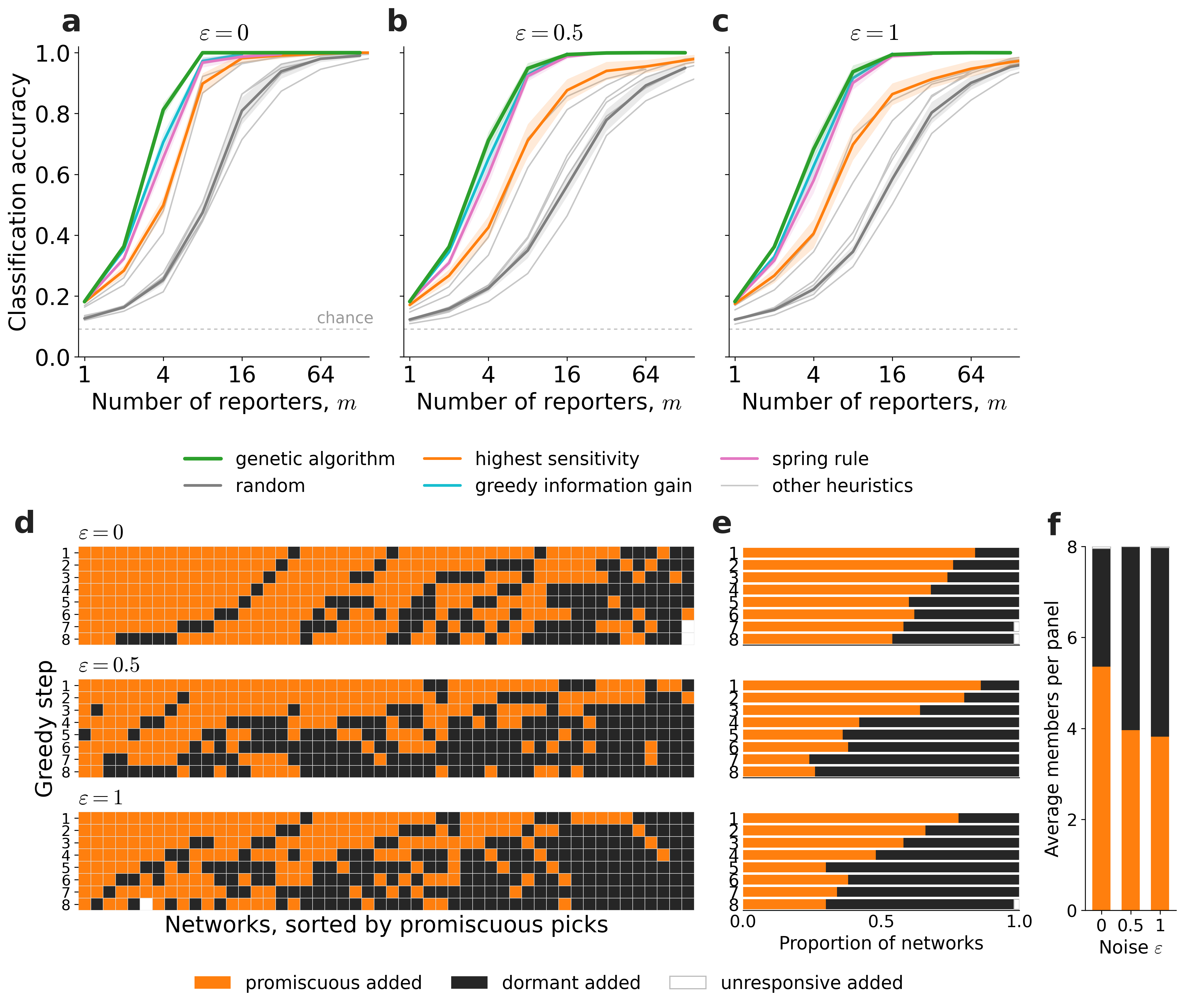}
\end{center}
\figcaption{
\textbf{The genetic algorithm sets the accuracy ceiling, set level rules nearly reach it, and single node rankings fall behind.}
Classification accuracy versus panel size for the genetic algorithm, random selection, $8$ placement heuristics, and the spring rule at noise $\varepsilon = 0$ (\textbf{a}), $0.5$ (\textbf{b}), and $1$ (\textbf{c}).
All strategies are scored with the same random forest evaluator, and every curve reports the accuracy of finished panels on fresh trials. %
The genetic algorithm reaches $1.00$, $0.95$, and $0.94$ at $m = 8$.
Greedy information gain and the spring rule follow within $2$ to $4$ points at a small fraction of the evaluations. %
Selecting the most sensitive nodes is next, with a gap that widens with noise.
Structural strategies perform like random selection or worse.
The thin gray curves are the remaining heuristics:
degree based selection,
greedy minimum mean squared error (MMSE) reconstruction,
Jaccard coverage,
upstream coverage,
and entropy with a diversity penalty.
The full labeled curves for every heuristic are in the \SI{}.
\textbf{d}, Which class the spring rule adds at each greedy step, at $\varepsilon = 0$, $0.5$, and $1$, top to bottom.
Each column is one network, sorted left to right by its number of promiscuous picks, each row is one step of the greedy, and each cell is colored by the class of the reporter added, orange for promiscuous, black for dormant, white for unresponsive.
\textbf{e}, The proportion of networks whose pick at each step is promiscuous, dormant, or unresponsive, row aligned with the grids.
The spring rule buys broad coverage first and then recruits dormant reporters.
This is the same shift toward dormant reporters that the genetic algorithm discovers.
Unresponsive nodes are essentially never chosen, $3$ picks out of $1200$.
\textbf{f}, The average composition of the $8$ member panels the spring rule builds.
Promiscuous members fall from $5.4$ of $8$ with a perfectly reproduced initial condition to $4.0$ at $\varepsilon = 0.5$ and $3.8$ at $\varepsilon = 1$. %
The selection problem is NP-complete in the worst case (see \SI{} for details),
yet on typical instances of this ensemble polynomial time rules land within a few points of the search.
}
\label{fig:main-7}
\clearpage

\clearpage
\setcounter{figure}{0}
\setcounter{table}{0}
\setcounter{equation}{0}
\setcounter{section}{0}
\renewcommand{\thefigure}{S\arabic{figure}}
\renewcommand{\thetable}{S\arabic{table}}
\renewcommand{\theequation}{S\arabic{equation}}

\begin{center}
{\sffamily\bfseries\LARGE Supplementary Information}\\[0.7em]
{\large Efficiently classifying shocks in complex systems requires dormant reporters}\\[0.6em]
{David A. Brewster and Philippe Cluzel}
\end{center}
\vspace{0.5em}

\clearpage
\newpage
\tableofcontents
\clearpage
\newpage

\part{The Problem}

\section{Complex networks}

This Supplementary Information (\textbf{SI}) gives the full model definitions, algorithmic details,
and mathematical notes behind the main text.
Boolean threshold network models are used here as deliberately minimal descriptions
of interacting systems with many degrees of freedom, finite state spaces, and rich attractor
structure \cite{kauffman1969metabolic,aldana2003boolean,albert2014boolean}.
The goal of this supplementary text is to explain the dynamics in more detail
and make explicit which features of the setup matter for the paper's main claims.

\subsection{Network structure}

We model a complex system as a directed weighted network on $N$ nodes and some edges between the nodes.
Node $j$ has binary state $\sigma_j(t) \in \{0,1\}$ at discrete time $t$, and the full state of the system at time $t$ is
\begin{equation}
\bm \sigma(t) = \bigl(\sigma_1(t),\dots,\sigma_N(t)\bigr).
\end{equation}

\subsubsection{Topology}
The in-degree of a node $j$ is the number of nodes $i$ such that $i\to j$ is an edge.
The out-degree of a node $i$ is the number of nodes $j$ such that $i\to j$ is an edge.
For power law networks, the out-degree is sampled from a power law distribution with exponent $\gamma$,
\begin{equation}
P\!\left(K^{\mathrm{out}}_i = k\right)
= \frac{k^{-\gamma}}{\zeta_N(\gamma)},
\qquad
k=1,\dots,N,
\end{equation}
where
\begin{equation}
\zeta_N(\gamma) = \sum_{k=1}^{N} k^{-\gamma}
\end{equation}
is the truncated Riemann zeta function.
There are $K^{\mathrm{out}}_i$ nodes $j_1,\ldots,j_{K^{\mathrm{out}}_i}$ selected uniformly at random without replacement from the network
and connected to $i$ such that $i\to j_{\ell}$ is an edge for $\ell=1,\ldots,K^{\mathrm{out}}_i$.
The corresponding expected out-degree is
\begin{equation}
\left\langle K^{\mathrm{out}}_i \right\rangle
=\sum_{k=1}^N P( K_i^{\mathrm{out}} =k)\cdot k
= \frac{\zeta_N(\gamma-1)}{\zeta_N(\gamma)}.
\end{equation}
The in-degree distribution is $\mathrm{Binomial}(N, p)$ where
\begin{align}
  p
  &= \sum_{k=0}^N P(i\to j \mid K_i^{\textrm{out}}=k)\cdot P(K_i^{\textrm{out}}=k) \\
  &= \sum_{k=0}^N P(K_i^{\textrm{out}}=k) \cdot \binom{N-1}{k-1}\Bigg/\binom{N}{k} \\
  &= \frac{1}{N}\sum_{k=0}^N P(K_i^{\textrm{out}}=k)\cdot k \\
  &= \langle K_i^{\textrm{out}}\rangle / N.
\end{align}
Thus we have
\begin{equation}
  \langle K_i^{\mathrm{in}} \rangle = Np = \langle K_i^{\mathrm{out}} \rangle.
\end{equation}
Further, we can denote
\begin{equation}
  K := \langle K_i^{\mathrm{out}} \rangle = \langle K_j^{\mathrm{in}} \rangle
\end{equation}
for all $i,j$ as the expected degree without reference to in- or out-degrees and without reference to a specific node.

\subsubsection{Weights}
Each directed edge $i \to j$ has a weight $w_{ij} \in [-1,1]$.
Each edge weight is sampled independently from the uniform distribution on $[-1,1]$.

\subsection{Regulatory dynamics}

Dynamics are updated synchronously.
Given state $\bm\sigma(t)$, node $j$ receives the total weighted input
\begin{equation}
\sum_i w_{ij}\cdot \sigma_i(t),
\end{equation}
where the sum ranges over all $i$ such that $i\to j$ is an edge in the network.
The next state is described by the threshold rule
\begin{equation}
\sigma_j(t+1) =
\begin{cases}
1 &\text{if } \sum_i w_{ij}\cdot \sigma_i(t) > 0,\\
0 &\text{if } \sum_i w_{ij}\cdot \sigma_i(t) < 0,\\
\sigma_j(t) &\text{if } \sum_i w_{ij}\cdot \sigma_i(t) = 0.
\label{eq:threshold-updating}
\end{cases}
\end{equation}
If a node has no input, its state never changes over time. 

For each network, a single \emph{base} initial condition $\bm \beta \in \{0,1\}^N$ is sampled by setting each $\beta_j$ to $0$ or $1$ uniformly at random, independently.
The base state plays the role of the ideal preparation of an experiment, and it is reused across the control and all shocked versions of the same network.
No preparation is exact in practice, so each repeated trial starts from an imperfect copy of the ideal state.
Replicate initial conditions are generated with a noise parameter $\varepsilon \in [0,1]$.
Each component of the base state is reproduced faithfully with probability $1-\varepsilon$ and re-randomized to a fair coin otherwise,
\begin{equation}
\sigma_j^{(r)}(0) =
\begin{cases}
\beta_j, & \text{with probability } 1-\varepsilon,\\
0 \text{ or } 1 \text{ with equal probability}, & \text{with probability } \varepsilon.
\end{cases}
\end{equation}
The parameter $\varepsilon$ is the fraction of the ideal state that escapes experimental control.
At $\varepsilon = 0$ every trial reproduces the ideal state exactly, and at $\varepsilon = 1$ the replicates carry no memory of it.

\medskip
\noindent\textbf{Equivalence with the copy probability parameterization.}
A re-randomized bit lands on the value opposite to $\beta_j$ with probability $1/2$, so the total flip probability per bit is $\varepsilon/2$.
The same replicate ensemble is therefore produced by copying each bit with probability $\rho$ and flipping it with probability $1-\rho$, provided
\begin{equation}
1-\rho = \frac{\varepsilon}{2},
\qquad\text{or in other words,}\qquad
\varepsilon = 2(1-\rho).
\end{equation}
The code simulations are parameterized by the copy probability $\rho$, and every noise level reported in this paper corresponds to $\rho = 1-\varepsilon/2 \in [1/2, 1]$.
Flip probabilities above $1/2$---corresponding to $\rho < 1/2$---cannot arise from loss of control, because they correlate replicates with the complement of the base state.
The model is symmetric under replacing $(\rho, \bm\beta)$ with $(1-\rho, 1-\bm\beta)$, and base states are sampled uniformly.
So restricting to $\rho \ge 1/2$ loses no generality.

The initial state is $\bm\sigma(0)$.
The ordered states $\bm\sigma(0),\bm\sigma(1),\bm\sigma(2),\ldots$ are collectively
called a \emph{trajectory}.
Since all updating is synchronous and deterministic,
$\bm\sigma(0)$ uniquely defines a trajectory.
We define
\begin{equation}
  F_{\bm \sigma(0)}:\{0,1\}^N\to\{0,1\}^N
\end{equation}
as the dynamical map for the network with initial state $\bm\sigma(0)$.
Thus for all $t$,
\begin{equation}
  \bm\sigma (t+1) = F_{\bm\sigma(0)}(\bm\sigma(t)).
\end{equation}

Because the state space is finite, every trajectory eventually enters a periodic orbit.
We therefore use the terms \emph{attractor} and \emph{basin of attraction} in the usual finite state sense:
an attractor is a fixed point or cycle, and its basin is the set of initial states that eventually flow into it.

The synchronous threshold rule is a deliberate modeling choice rather than a claim of universality.

\noindent\textbf{Finite-state attractors and baseline noise scales.}

Because the update rule $F_{\bm\sigma(0)}$ of each perturbation acts on the finite state space $\{0,1\}^{N}$, every trajectory is eventually periodic.
That is, for every initial condition there exist integers $t_{\star} < 2^N$ and $1 \le p \le 2^N$ such that
\begin{equation}
\bm\sigma (t+p) = \bm\sigma(t)
\qquad \text{for all } t \ge t_{\star}.
\end{equation}
If a trajectory reaches a periodic orbit by time $t_{\star} \le T-L+1$ and the orbit has period $p \le L$, then the stored window
\begin{equation}
\bm \sigma(T-L+1),\dots,\bm\sigma(T)
\end{equation}
contains at least one full cycle.
This is the basis for the language of attractors and attractor basins.

The noisy initial condition model also sets a natural baseline scale before any shock is applied.
For a base state $\bm x \in \{0,1\}^{N}$ and an imperfect copy $\bm y$ of $\bm x$ at noise $\varepsilon$, the normalized Hamming distance is
\begin{equation}
d(\bm x,\bm y) := \frac{1}{N}\sum_{j=1}^{N} |x_j-y_j|,
\end{equation}
with expectation
\begin{equation}
\left\langle d(\bm x,\bm y) \right\rangle = \frac{\varepsilon}{2}.
\end{equation}
If $\bm y$ and $\bm z$ are two independent copies of the same base state at noise $\varepsilon$, then
\begin{equation}
P(y_j \neq z_j) = \varepsilon\left(1-\frac{\varepsilon}{2}\right),
\end{equation}
so
\begin{equation}
\left\langle d(\bm y,\bm z) \right\rangle = \varepsilon\left(1-\frac{\varepsilon}{2}\right).
\label{eq:d0}
\end{equation}
These expressions quantify the variability between trials induced by imperfect preparation of the initial state.

\section{The order to chaos transition}
\label{sec:phase-transition}

Boolean networks exhibit a transition between an ordered regime
where small perturbations of the state die out,
and a chaotic regime
where they spread through the system \cite{kauffman1969metabolic,derrida1986random,aldana2003boolean}.
Here we locate the transition for our ensemble using the annealed approximation of Derrida and Pomeau \cite{derrida1986random}, adapted to threshold units \cite{rohlf2002criticality} and to heterogeneous out-degrees \cite{aldana2003boolean}.

\subsection{Damage spreading}

Fix one network and let $\bm\sigma(t)$ and $\bm\sigma'(t)$ be the trajectories of two independent replicate initial conditions built from the same base state $\bm\beta$ at noise $\varepsilon$.
Their normalized Hamming distance at time $t$ is
\begin{equation}
d_t := \frac{1}{N}\sum_{j=1}^{N}\left|\sigma_j(t)-\sigma'_j(t)\right|.
\label{eq:dt}
\end{equation}
At $t=0$ the two trajectories are exactly the pair of independent copies treated in \cref{eq:d0}.
Without loss of generality, we can look at the state of node $j=1$ in the two initial condition replicates and observe the probability that they differ from each other.
In order for the node state to differ, it must be the case that exactly one of the initial conditions got flipped by noise.
That probability is $2\times (1-\varepsilon/2)\times (\varepsilon/2)$, so
\begin{equation}
\left\langle d_0 \right\rangle = \varepsilon\left(1-\frac{\varepsilon}{2}\right),
\label{eq:d0-avg}
\end{equation}
which rises from a minimum of $0$ at $\varepsilon = 0$ to a maximum of $1/2$ at $\varepsilon = 1$.

Identical trajectories stay identical, so $d = 0$ is a fixed point.
The phases of the map $d_t$ are distinguished by its stability under the smallest perturbation the network admits---a single flipped node.
Let $\bm\sigma$ and $\bm\sigma'$ agree everywhere except at one node $i$, so that $d_t = 1/N$, and let
\begin{equation}
\Lambda := N\cdot \left\langle d_{t+1} \right\rangle.
\label{eq:orderparam}
\end{equation}
The ensemble is frozen when $\Lambda < 1$,
critical when $\Lambda = 1$,
and chaotic when $\Lambda > 1$.
Note that $d_t \approx d_0 \Lambda^{t}$ while $d_t$ stays small,
thus $\ln \Lambda$ is a Lyapunov exponent.
The expectation runs over the in-degree, the weights, the states of the inputs, the node's own state, and which input carries the damage.

The quantity plotted in main text \textbf{Figure 2} is an approximation
of $\langle d \rangle := \lim_{t\to\infty}\left\langle d_{t} \right\rangle$.
Each trajectory is run for $T$ steps and its last $L$ states are retained.
The distance is averaged over that retained window,
then over every unordered pair of the $n_{\mathrm{IC}}$ replicates of a network,
then over the $M$ sampled networks,
giving the approximation $\langle d \rangle \approx \hat{d}$ where
\begin{equation}
\hat{d}
:= \frac{1}{M}\sum_{n=1}^{M}
  \binom{n_{\mathrm{IC}}}{2}^{-1} \sum_{1 \leq r_1 < r_2 \leq n_{\mathrm{IC}}}
  \frac{1}{L}\sum_{\ell=0}^{L-1} d^{\,(n,r_1,r_2)}_{\,T-L+1+\ell}.
\label{eq:dbar}
\end{equation}
The main text \textbf{Figure 2} uses
$T = 1000$, $L = 10$, $\varepsilon = 0.02$, $M = 100$ networks per point,
$n_{\mathrm{IC}} = 10$ replicates, and therefore $45$ pairs per network.

Using \eqref{eq:orderparam} in the limit of minuscule amounts of noise (i.e., $\varepsilon \ll 1$):
in the frozen regime $\langle d\rangle \ll 1/N$,
in the critical regime $\langle d\rangle \approx 1/N$,
and in the chaotic regime $\langle d\rangle \gg 1/N$.
Note that it is possible that some nodes have in-degree zero.
The consequence is that their node states will never change, per the threshold updating rules in \eqref{eq:threshold-updating}.
Thus since in-degrees are distributed as $\mathrm{Binomial}(N, K/N)$ and using \eqref{eq:d0-avg}, for every $t$ we have %
\begin{equation}
\left\langle d_t \right\rangle \geq \left(1-\frac{K}{N}\right)^{\!N} \varepsilon\left(1-\frac{\varepsilon}{2}\right).
\label{eq:floor}
\end{equation}
In the limit of large $N$, this gives us that $\langle d_t \rangle \geq e^{-K}\cdot\varepsilon(1-\varepsilon/2)$.
The bound is not tight, because freezing propagates:
a node whose inputs are all frozen is itself frozen from the next step on
at a value that can differ between the replicates.
At $\varepsilon = 0.02$ where $\left\langle d_0 \right\rangle = 0.0198$,
the actual minimum is close to $0.014$;
all $4$ curves of main text \textbf{Figure 2} share something close to this minimum for $\gamma \gtrsim 2.2$.
Then the curves all ever so slightly increase.

\subsection{Response of a single threshold node}

We consider an initial original state drawn uniformly at random.
We first ask how likely a single flipped input is to change the output of one node.
Fix a focal node with in-degree $k$.
Single out one of its inputs, labeled $1$, with incoming weight $w_1$.
The rest of the inputs have labels $2,\ldots,k$ with corresponding incoming weights $w_2,\ldots,w_k$.
Split the node's total pre-threshold sum as
\begin{equation}
w_1\,\sigma_1 + R
\qquad\text{where}\qquad
R := \sum_{i=2}^{k} w_{i}\,\sigma_i.
\label{eq:field-split}
\end{equation}
In the original trajectory and the trajectory with the flipped input,
both focal nodes agree on inputs $2,\dots,k$ and disagree on input $1$.
So the sum equals $R$ in one node and $R+w_1$ in the other.
The threshold rule sends a positive sum to $1$ and a negative sum to $0$, and holds the previous state when the sum vanishes.

Write $a_t := P(\sigma_i(t) = 1)$ for the activity of node $i$,
with the probability taken over initial states, networks, weights, etc.
A node with at least one active input (i.e. where an input node state has value $1$) receives a sum that is a sum of weights drawn symmetrically about the origin.
Thus, such a sum has a continuous symmetric distribution and the node becomes active with probability exactly $1/2$.
A node with no active input retains its previous state, which is active with probability $a_t$.
Writing $z_t^{(j)} := \prod_{i\to j} (1-a^{(i)}_t)$ for the probability that every input is inactive for node $j$,
one synchronous update maps
\begin{equation}
a^{(j)}_{t+1} = \frac{1-z^{(j)}_t}{2} + z^{(j)}_t\,a^{(j)}_t.
\end{equation}
So the point $\mathbf{a}\in[0,1]^N$ such that $a_j = 1/2$ for all $j$
is a fixed point since $a\,(1-z) = (1-z)/2$ is true when $a=1/2$.
Initial states are drawn uniformly, so $a = 1/2$ holds at $t = 0$ and is preserved at every step afterwards.

\textbf{Case when $R=0$.}
When $R=0$, all of the nonfocal inputs are inactive.
So one of the trajectories and holds its previous state value
while the other takes the sign of $w_1$ which is uniform on $[-1,1]$.
Thus after the threshold updating,
these two focal node states disagree with probability $a\times (1/2) + (1-a)\times (1/2) = 1/2$, giving $q(1) = 1/2$.

\textbf{Case when $R\neq 0$.}
First, suppose $R \neq 0$.
If $w_1 > 0$ the two outputs differ exactly when $R < 0 < R+w_1$, that is when $R$ lies in $(-w_1,\,0)$.
If $w_1 < 0$ they differ exactly when $R+w_1 < 0 < R$, that is when $R$ lies in $(0,\,-w_1)$.
Either way $R$ must land in an interval of width $|w_1|$ with one endpoint at the origin.

Let $p_W$ denote the probability density of a single weight.
The summands of $R$ in \cref{eq:field-split} are independent and identically distributed products of an independent weight and state, with
\begin{equation}
\left\langle w_i\,\sigma_i \right\rangle = a\int_{-1}^{1} w\,p_W(w)\,\mathrm{d}w = 0
\qquad\text{and}\qquad
\left\langle \left(w_i\,\sigma_i\right)^{2} \right\rangle = a\int_{-1}^{1} w^{2}\,p_W(w)\,\mathrm{d}w = s^{2}a,
\end{equation}
the first vanishing because $p_W$ is symmetric, and the second using $\sigma_i^{2} = \sigma_i$ for states in $\{0,1\}$, with $s^{2}$ the variance of the weight distribution.
A local central limit theorem applied to these $k-1$ summands of $R$ gives the probability density $f_R$ of $R$ at the origin,
\begin{equation}
f_R(0) = \frac{1}{\sqrt{2\pi\,(k-1)\,s^{2}a}}\left(1 + O(k^{-1})\right).
\label{eq:fR0}
\end{equation}

Fix $w_1$. %
Since $f_R$ is symmetric about the origin,
the window $(-w_1,\,0)$ and the window $(0,\,-w_1)$ carry the same probability.
So for either sign of $w_1$ we have that
\begin{equation}
P\!\left(\text{threshold outputs differ}\mid w_1\right) = \int_{0}^{|w_1|} f_R(r)\,\mathrm{d}r.
\end{equation}
Averaging this against the weight density gives the single input sensitivity
\begin{equation}
q(k)
:= \int_{-1}^{1}\left[\,\int_{0}^{|w|} f_R(r)\,\mathrm{d}r\,\right] p_W(w)\,\mathrm{d}w
= \int_{0}^{1}\left[\,\int_{0}^{w} f_R(r)\,\mathrm{d}r\,\right]\mathrm{d}w,
\label{eq:qK-integral}
\end{equation}
where the second equality uses $p_W(w) = 1/2$ on $[-1,1]$ together with the invariance of the inner integral under $w \mapsto -w$.
The probability density $f_R$ varies on the scale $\sqrt{k s^{2} a}$ while the inner integral runs over a range of at most $1$.
So expanding about the origin---while keeping in mind $f_R'(0)=0$---gives $f_R(r) = f_R(0)\left(1 + O(k^{-1})\right)$ throughout that range. 
Thus, we have
\begin{equation}
q(k) = f_R(0)\int_{0}^{1} w\,\mathrm{d}w\,\left(1 + O(k^{-1})\right)
= \mu f_R(0)\left(1 + O(k^{-1})\right)
\qquad
\text{where}
\qquad
\mu := \langle |w_1|\rangle = \int_{-1}^{1} |w|\,p_W(w)\,\mathrm{d}w.
\end{equation}
Substituting \cref{eq:fR0} and $a = 1/2$,
\begin{equation}
q(k) = \frac{c}{\sqrt{k}}\left(1 + O(k^{-1})\right)
\qquad\text{where}\qquad
c = \frac{\mu}{s\sqrt{2\pi a}} = \frac{\mu}{s\sqrt{\pi}}.
\label{eq:qK}
\end{equation}
The variance in \cref{eq:fR0} carries $k-1$ and not $k$,
but $(k-1)^{-1/2}$ and $k^{-1/2}$ differ at relative order $k^{-1}$ which is why we use $k$ in \cref{eq:qK}.
For our uniform weights on $[-1,1]$ the density is $p_W(w) = 1/2$ on that interval, so
\begin{equation}
\mu = \int_{0}^{1} w\,\mathrm{d}w = \frac{1}{2},
\qquad
s^{2} = \int_{-1}^{1} \frac{w^{2}}{2}\,\mathrm{d}w = \frac{1}{3},
\qquad
c = \sqrt{\frac{3}{4\pi}} \approx 0.4886.
\end{equation}

Direct simulation of a single threshold unit under the exact update rule reproduces both values and matches \cref{eq:qK} to within $2\%$ for $k \geq 8$ (\cref{fig:si-qk}).
The largest departure is at $k = 2$, where \cref{eq:qK} gives $\sqrt{3/(8\pi)} = 0.3455$ against the exact $3/8$, a shortfall of $7.9\%$.
Indeed, at $k = 2$, the single background input (from the node labeled $2$) is inactive with probability $1/2$,
which reproduces the case $k = 1$ (i.e., $q(1)=1/2$).
Otherwise the background input node is active, making $R$ uniform on $[-1,1]$ so a window of width $|w_1|$ is hit with probability $|w_1|/2$.
averaging over $w$ as in \cref{eq:qK-integral},
\begin{equation}
q(2) = \frac{1}{2}\cdot\frac{1}{2} \;+\; \frac{1}{2}\int_{0}^{1}\frac{w}{2}\,\mathrm{d}w
= \frac{1}{4}+\frac{1}{8} = \frac{3}{8}.
\end{equation}

The essential feature is the decay $q(k) \sim k^{-1/2}$.
In a Kauffman network with random lookup tables the same probability is a constant, $2p(1-p)$, independent of $k$ \cite{derrida1986random,aldana2003boolean}.
A threshold unit sums up its inputs, so one flipped input must overcome a sum whose typical magnitude grows as $\sqrt{k}$.

\subsection{Branching parameter and criticality}

In our ensemble the out-degrees follow the power law and each edge lands on a uniformly chosen target.
The in-degree of a node is distributed as $\mathrm{Binomial}(N, K/N)$. %
Degree heterogeneity therefore enters the damage calculation only through the mean connectivity $K$.

Take the single damaged node $i$ of \cref{eq:orderparam} and ask which other nodes can differ at the next step;
only a node that reads $i$ as an input can.
Consider a node $j$ of in-degree $k$.
Its inputs are drawn uniformly, so it reads $i$ with probability
$1 - \left(1 - 1/N\right)^{k} = k/N + O(N^{-2})$. %
By \cref{eq:qK}, if $i\to j$ is an edge, node $j$ then flips with probability $q(k)$.
Summing over the nodes of the network and averaging over the in-degree distribution,
\begin{equation}
\Lambda = N\cdot \left\langle \left(\frac{k}N+O(N^{-2})\right)\cdot q(k) \right\rangle = \langle k\,q(k)\rangle + O(N^{-1})
\approx c\,\sqrt{K},
\label{eq:branching}
\end{equation}
where the last step uses \cref{eq:qK} and by expanding the square root its the binomial distribution mean $K$.
The transition sits at $\Lambda = 1$.
This gives the critical mean connectivity
\begin{equation}
K_c = \frac{1}{c^2} = \frac{\pi s^2}{\mu^2} = \frac{4\pi}{3} \approx 4.19.
\label{eq:Kc}
\end{equation}

With Kauffman networks,
$\Lambda = 2p(1-p)K$, so criticality at $p=\tfrac12$ requires $K = 2$ \cite{derrida1986random,aldana2003boolean}.
The $k^{-1/2}$ suppression in \cref{eq:qK} makes threshold networks more ordered, pushing the transition to larger connectivity.

\subsection{Finite $N$ and the large $N$ limit}

The mean connectivity of the power law ensemble depends on both $\gamma$ and $N$,
\begin{equation}
K(\gamma, N) = \frac{\zeta_N(\gamma-1)}{\zeta_N(\gamma)}.
\end{equation}
The critical exponent $\gamma_c(N)$ solves $K(\gamma_c, N) = K_c$.
Because the truncated sum $\zeta_N(\gamma-1)$ grows with $N$ whenever $\gamma \leq 2$,
the mean connectivity at fixed $\gamma$ increases with $N$ and the critical exponent moves to the right,
\begin{equation}
\begin{array}{r|ccccc}
N & 50 & 250 & 500 & 5000 & \infty \\
\hline
\gamma_c(N) & 1.68 & 1.94 & 1.99 & 2.09 & 2.17
\end{array}
\label{eq:gammacN}
\end{equation}
In the limit $N \to \infty$ the truncated sums become Riemann zeta functions and $\gamma_c(\infty)$ solves
\begin{equation}
\frac{\zeta(\gamma_c - 1)}{\zeta(\gamma_c)} = \frac{4\pi}{3},
\qquad
\gamma_c(\infty) \approx 2.17.
\end{equation}
For $\gamma \leq 2$ the mean connectivity diverges with $N$, so an infinite network with such a heavy tail is always chaotic.
These predictions are compared with simulation in main text \textbf{Figure 2}.

\begin{figure}
\centering
\includegraphics[width=0.5\textwidth]{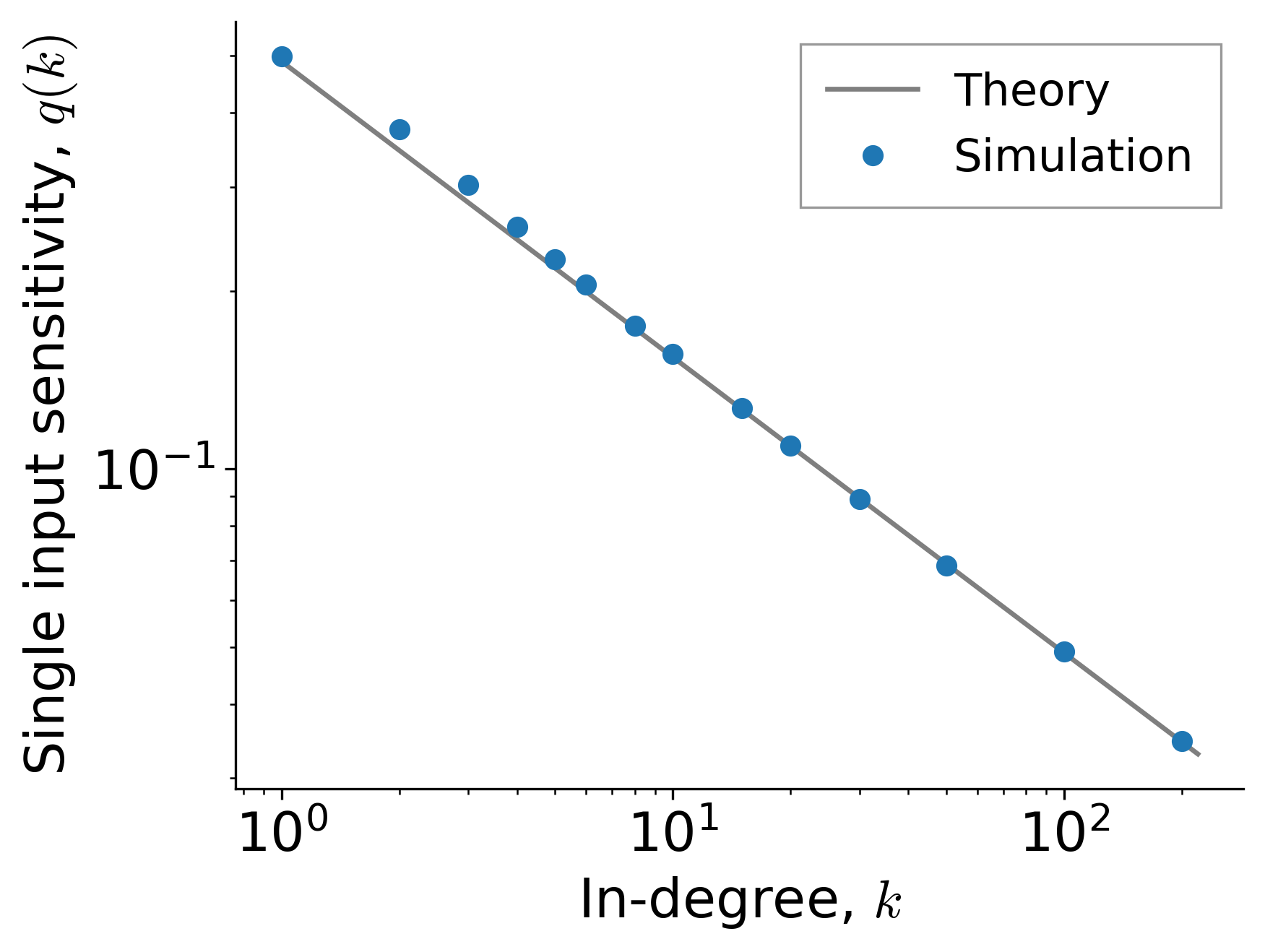}
\caption{
\textbf{Simulation validation of the single input sensitivity.}
The probability $q(k)$ that flipping one input changes the output of a threshold node with in-degree $k$, estimated from $5\times 10^5$ samples per point under the exact update rule with uniform weights on $[-1,1]$ and unbiased inputs.
The solid line is the annealed calculation prediction, $q(k) = \sqrt{3/(4\pi k)}$.
}
\label{fig:si-qk}
\end{figure}

\subsection{Attractor landscape across the transition}

Whether the number of steps $T$ in a simulation samples transient or attractor dynamics depends on $\gamma$.
So we took a census of the attractor landscape across the transition.
For each $\gamma$ we simulated $20$ networks at $N=5\times 10^3$
with $10^3$ random initial conditions each,
followed every trajectory for up to $2\times 10^4$ steps,
and recorded the first exact recurrence of the full network state (\cref{fig:si-census}).

The census resolves $3$ regimes.
Deep in the chaotic phase the accessible state space is effectively transient.
At $\gamma=1.5$ no trajectory recurs within $2\times 10^4$ steps,
and at $\gamma=1.6$ only $5\%$ do, with a median cycle length above $2.5\times 10^3$ among those that recur.
In the frozen phase the landscape is trivial.
From $\gamma=2.0$ on,
every initial condition reaches a fixed point or a two cycle within at most roughly $10^2$ steps,
distinct initial conditions reach distinct states,
and the number of attractors saturates the number of sampled initial conditions.

The working ensemble at $\gamma=1.8$ sits between these extremes.
There, $95\%$ of trajectories recur, $78\%$ are already on their attractor by $T=10^3$, the median cycle length is $54$,
and a typical network funnels its initial conditions into about $5$ attractors, with the largest basin absorbing $71\%$ of them.
The dynamics in near $T=10^3$ steps are therefore dominated by a small set of large cycles rather than by free transients.
This supports the reserve picture.
A small number of large basins makes it such that different initial conditions can lead to the same attractor.
Our regime is where trials are hard to reproduce and readings are taken mid relaxation.
Waiting for full relaxation (i.e. steady state dynamics only) instead would force a choice between
(i) the frozen phase where the transient time is short due to trivial dynamics and
(ii) the necessity of exponentially long time windows.
Sampling expression profiles hours after a perturbation without verifying that the system has completely settled
also matches common experimental practice.

\begin{figure}
\centering
\includegraphics[width=0.9\textwidth]{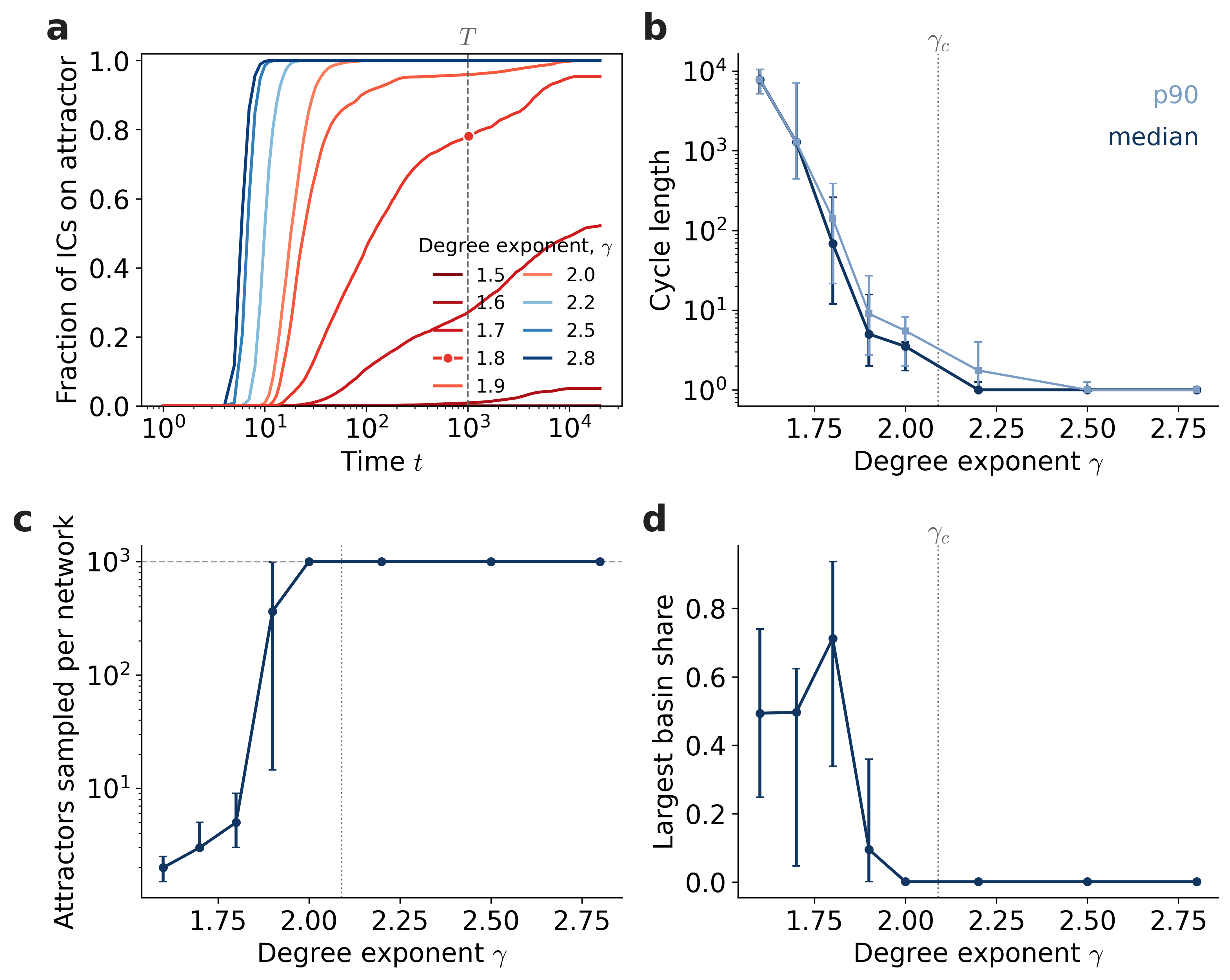}
\caption{
\textbf{Attractor landscape across the phase transition.}
\textbf{a}, Fraction of initial conditions that have reached their attractor by time $t$, one curve per $\gamma$.
Trajectories that do not recur within $2\times 10^4$ steps are counted as not yet settled, so the curves are lower bounds.
The dashed line marks the trajectory length $T=10^3$ used throughout the paper.
At $\gamma=1.5$ no trajectory recurs.
\textbf{b}, Median and $90^{\text{th}}$ percentile cycle length per network.
\textbf{c}, Number of distinct attractors found per network.
Our census samples $10^3$ initial conditions per network, so the dashed ceiling marks saturation.
\textbf{d}, Share of initial conditions absorbed by the largest basin.
Points are medians across $20$ networks and bars span quartiles.
Panels \textbf{b} to \textbf{d} omit $\gamma=1.5$ because no trajectory recurs there.
The dotted vertical line in those panels marks $\gamma_c(N)$ at $N = 5000$.
}
\label{fig:si-census}
\end{figure}

\section{Shocks}

\subsection{Shock construction}

For each network we generate multiple independently perturbed copies.
The unperturbed network serves as the matched control.
Each perturbation selects $g$ target nodes uniformly at random and modifies all outgoing edges of those nodes.
If $w_{ij}$ is an affected edge, the perturbed weight is
\begin{equation}
w'_{ij} = (1-c)w_{ij} + c\xi_{ij},
\end{equation}
where $c \in [0,1]$ is the shock strength and $\xi_{ij} \in \{-1,+1\}$ is chosen independently with equal probability.
All experiments in this work use $c = 1$.
In that limit the perturbed weight is simply $w'_{ij} = \xi_{ij}$, so each affected edge is redrawn to a random sign of full magnitude.
We keep the general form because graded shocks are a natural direction for future work.
Equivalently, the shock can be written as an edge increment
\begin{equation}
\Delta w_{ij} = w'_{ij} - w_{ij} = c(\xi_{ij} - w_{ij}).
\end{equation}

This perturbation rule has several useful properties.
Firstly, $w'_{ij}$ always remains in $[-1,1]$ because it is a convex combination of $w_{ij}$ and $\pm 1$.
Secondly, conditional on $w_{ij}$,
\begin{equation}
\left\langle w'_{ij} \right\rangle_{w_{ij}}  = (1-c)w_{ij}
\qquad\text{and}\qquad \left\langle (w'_{ij})^2 \right\rangle_{w_{ij}}  = [(1-c)w_{ij}]^2 + c^2.
\end{equation}
Thus,
\begin{equation}
\mathrm{Var}\!\left(w'_{ij}\mid w_{ij}\right) = c^2.
\end{equation}
Likewise, conditional on $w_{ij}$,
\begin{equation}
\left\langle \Delta w_{ij} \right\rangle_{w_{ij}} = -cw_{ij},
\qquad\text{and}
\qquad
\mathrm{Var}\!\left(\Delta w_{ij}\mid w_{ij}\right) = c^2.
\end{equation}
The expected absolute change in an affected edge is
\begin{equation}
\left\langle \left|\Delta w_{ij}\right| \right\rangle_{w_{ij}}
= \frac{c}{2}\bigl(|1-w_{ij}| + |-1-w_{ij}|\bigr)
= c.
\end{equation}
At the maximal perturbation $c=1$, every affected edge is reassigned to a random sign.

In the fixed target experiments used for the main text figures, the perturbation configuration is
\begin{equation}
d = 10,
\qquad
g = 50,
\qquad
c = 1.
\end{equation}
The target count $g$ corresponds to $1\%$ of the nodes at $N=5000$.

\noindent\textbf{Perturbation sparsity and overlap.}

If $U_q$ denotes the target set of shock $q$ and $M_q$ the number of directly perturbed edges under that shock, then
\begin{equation}
M_q = \sum_{u \in U_q} k_u^{\mathrm{out}},
\end{equation}
where $k_u^{\mathrm{out}}$ is the out-degree of node $u$.
Conditioned on the realized degree sequence,
\begin{equation}
\left\langle M_q \right\rangle_{\{k_i^{\mathrm{out}}\}_{i=1}^{N}}
= \sum_{i=1}^{N}P(i \in U_q)\,k_i^{\mathrm{out}}
= \frac{g}{N}\sum_{i=1}^{N} k_i^{\mathrm{out}}
= \frac{g}{N}M,
\end{equation}
where $M$ is the total number of edges in the network.
Thus the direct perturbation touches a fraction $g/N$ of the realized edge set in expectation over target choice, regardless of degree heterogeneity.
For the canonical setting this fraction is $50/5000 = 0.01$.

Across $d$ independently sampled shocks, the probability that a given node is never chosen as a target is
\begin{equation}
\left(1-\frac{g}{N}\right)^d,
\end{equation}
so the probability that it is targeted at least once is
\begin{equation}
1-\left(1-\frac{g}{N}\right)^d.
\end{equation}
For the canonical parameters $d=10$ and $g/N=0.01$, this becomes
\begin{equation}
1-(0.99)^{10} \approx 0.0956.
\end{equation}
Even across all $10$ shocks, most nodes are therefore never directly targeted.

\subsection{How shocks reshape the attractor landscape}

We repeated the attractor census under shocks, at $\gamma = 1.8$.
For each of $50$ networks, we followed the same $n_{\textrm{IC}}=10^3$ initial conditions under the control weights and under $d=10$ independent shocks drawn by the construction above (\cref{fig:si-shock-census}).
The initial condition seeds do not depend on the perturbation,
so every comparison below is paired at the level of individual initial conditions.
This census generates its own networks and shocks and is not linked to the classification experiments.
The question is: what of the control landscape survives a shock? %

At the level of exact attractor cycles, a shock rebuilds the landscape completely.
Across $7088$ tests of a control attractor against a shocked landscape,
not one control cycle was ever reached by any of the $n_{\textrm{IC}}=10^3$ trajectories of the shocked system, in any of the $50$ networks.
In contrast, the overall shape of the landscape changes only mildly.
Shocks lengthen transient time (median $+72$ steps across networks)
and slightly increase the number of sampled attractors (median $+1.2$),
but the largest basin share shows no systematic change (median $-0.01$).
Each line in \cref{fig:si-shock-census}a is one network,
connecting the share of initial conditions its largest control basin absorbs
to the mean of that share over its $10$ shocked landscapes.
The lines tilt in both directions with no systematic trend.
Recurrence behaves the same way.
Each cell of \cref{fig:si-shock-census}b is one network under one perturbation
colored by the fraction of the $10^3$ initial conditions that reach an attractor within $T=10^3$ steps.
Most rows are uniformly dark,
so recurrence is mostly a property of the network rather than of the shock.
We found that $37$ of the $42$ networks whose control dynamics recur for every initial condition also recur under all $10$ shocks.
Only $13$ of their $42\times 10=420$ shocked landscapes lose any recurrence at all.
But the exceptions are dramatic.
In one network a single shock cuts recurrence from all initial conditions to $4\%$,
and some networks flip in both directions under different shocks.

Comparing the two landscapes any further needs a distance between attractors of two different systems,
and exact cycle identity is useless for this because no cycle survives.
We use activity fingerprints.
The fingerprint of an attractor is a vector of length $N$ whose entry $j$ is the fraction of one attractor period that node $j$ spends in the active state.
For a fixed point, the fingerprint is the state itself.
The distance between two attractors is the mean absolute difference between their fingerprints, averaged over nodes.
It lies in $[0,1]$,
it is exactly the per node Hamming distance when both attractors are fixed points,
and it generalizes that distance to cycles by comparing time averaged activity.
When we summarize a whole shocked landscape,
each shocked attractor is weighted by the fraction of initial conditions that reach it.
This is the basin weighting of \cref{fig:si-shock-census}c.

What survives a shock is node level structure but not basin identity.
The fingerprints of the new shocked attractors sit close to the control fingerprints.
The basin weighted distance to the nearest control attractor is a median of $0.07$ per node,
against $0.46$ for a ``shuffled'' null, a nearly $7$ fold pinning of the new landscape to the old (\cref{fig:si-shock-census}c).
The null shuffles the node coordinates of each shocked fingerprint before measuring the same distance.
So the shuffled null keeps the distribution of activity values and destroys only their assignment to nodes.
The first column of \cref{fig:si-shock-census}c shows that two distinct attractors of the same control landscape sit a median of $0.01$ apart. %
A shocked attractor is therefore genuinely displaced, several fold beyond the control landscape's own spacing, yet far closer to the control set than chance.
Restricted to the nodes that are not frozen in the control landscape,
the distance from each shocked attractor to its nearest control attractor is $0.10$ against a matched null of $0.42$.
So even the active portion of the fingerprint is largely preserved (\cref{fig:si-shock-census}d).

In contrast, basin identity carries almost nothing across the shock.
Every initial condition has a control basin label and a shocked destination;
\cref{fig:si-shock-census}e treats the first as a predictor of the second.
A baseline predictor ignores the label and always guesses the largest basin of the shocked landscape.
An informed predictor knows the control basin and guesses the shocked attractor that initial conditions from that basin reach most often.
The gain of the informed predictor over the baseline is a median of less than $0.01$ across the $44$ networks with more than one control basin
and enough paired recurrent trajectories to run the test (\cref{fig:si-shock-census}e).

Finally, we counted pinned nodes---the nodes frozen at a constant value in every attractor of every shocked landscape whose value is nonetheless not the same everywhere.
When the value differs across the attractors of a single shock, the node is basin informative.
When the value differs only between shocks, it is shock informative.
Such nodes would be a perfect steady reference because reading them once identifies a basin or a shock with no averaging over time or trials.
Pinned nodes exist but are rare:
a median of $2$ per network, none at all in $14$ networks, up to about $200$ in a few, and every one of them is drawn from the control frozen set (\cref{fig:si-shock-census}f).

\begin{figure}
\centering
\includegraphics[width=0.98\textwidth]{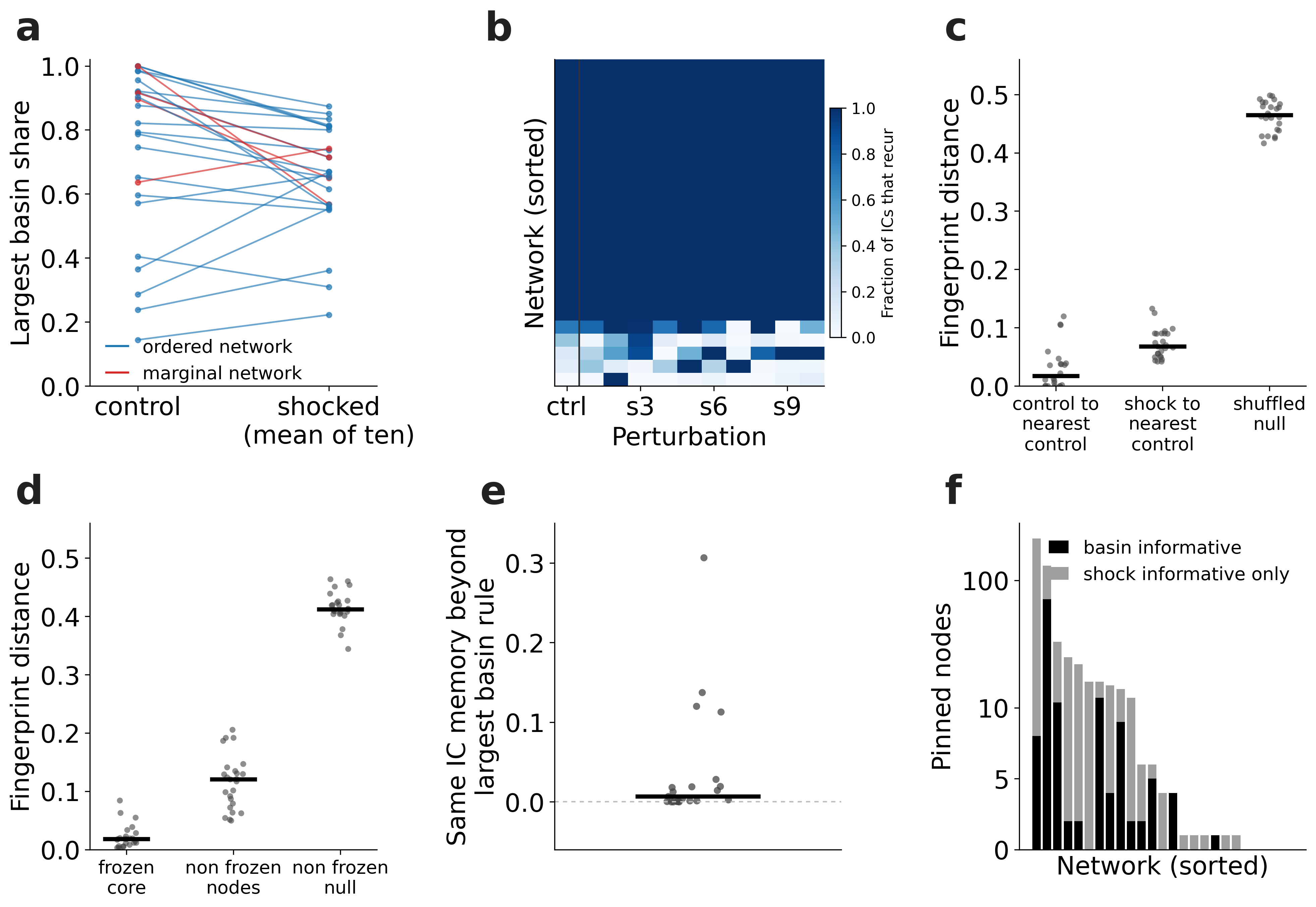}
\caption{
\textbf{Shocks rebuild the attractor landscape, but the control fingerprint still pins it.}
Shock census at $\gamma = 1.8$: $50$ networks, $10^3$ shared initial conditions, control plus $10$ independent shocks per network.
\textbf{a}, Largest basin share under control and averaged over the $10$ shocks, one line per network.
Red marks networks whose control dynamics recur for fewer than half of the initial conditions. %
\textbf{b}, Fraction of initial conditions that recur within $2\times 10^4$ steps, per network (rows) and shock (column).
Most fully recurrent networks stay fully recurrent under every shock,
a few are broken by individual shocks,
and the rest of the networks flip in both directions.
\textbf{c}, Basin weighted per node fingerprint distance from each shocked attractor to its nearest control attractor,
against a shuffled node null and against the spacing among control attractors.
No shocked attractor repeats any control cycle exactly,
yet the new attractors sit far closer to the control set than chance.
\textbf{d}, The distance from each shocked attractor to its nearest control attractor,
restricted to the control frozen core,
to the non frozen nodes,
and to the matched null for the non frozen nodes.
\textbf{e}, The gain in predicting an initial condition's shocked destination from its control basin versus always guessing the largest shocked basin.
\textbf{f}, Pinned nodes per network, the nodes frozen in every shocked attractor whose value differs across attractors within a shock (basin informative) or only across shocks.
All pinned nodes are contained in the control frozen set.
}
\label{fig:si-shock-census}
\end{figure}

\section{Classification}

\subsection{Canonical simulation sizes}

All fixed target experiments of the main text use networks of $N = 5\times 10^3$ nodes
and $50$ independent network realizations.
Each network receives $d=10$ independent shocks (in parallel) from the control network, resulting in the control plus $10$ additional networks.
Together with the matched control this defines $11$ perturbation classes.
For every class, dynamics are run from $n_{\textrm{IC}}=10$ initial conditions.
The final $L=10$ states of each trajectory of $T=10^3$ steps are retained as observation snapshots.
Each class therefore contributes $10 \times 10 = 10^2$ snapshots,
each network contributes $11 \times 100 = 1.1\times 10^3$,
and the full data set holds $5.5\times 10^4$ labeled snapshots across the $50$ networks.
Because the class sizes are balanced, the chance baseline for classification is $1/11 \approx 0.09$.

\subsection{Observation and inference}

\noindent\textbf{Reporter panels and retained observations.}

If $R=\{j_1,\dots,j_m\}$ is a selected reporter panel, the observation at time $t$ is the restricted state vector
\begin{equation}
\bm r_R(t) = \bigl(\sigma_{j_1}(t),\dots,\sigma_{j_m}(t)\bigr).
\end{equation}
The main question of the paper is how to choose $R$ when only a small number $m$ of nodes can be monitored jointly.
At any one time, a reporter panel maps the full system state into $\{0,1\}^{m}$.
A classifier is trained not on the full network trajectory but on the collection of retained reporter snapshots
pooled across replicates for a fixed network and perturbation class.
This places the emphasis on separability among the settled states of the dynamics.

\noindent\textbf{Classification pipeline.}

Classification is performed separately for each underlying network realization.
For a given network and reporter panel,
all retained snapshots are assembled into a dataset $\mathcal D \equiv \{(X_i, Y_i)\}_i$ where $(X_i,Y_i) \in \mathcal X \times \mathcal Y$
with $\mathcal X = \{0,1\}^m$ are the possible snapshots and $\mathcal Y$ are the $|\mathcal Y|=11$ perturbation class labels (including the control).

The classifier is a random forest with $100$ trees.
For each class, $50$ retained snapshots are sampled for training and the remaining snapshots are held out for testing.
To classify one test trial for a fixed class, we sample $50$ withheld snapshots from that class, predict the class of each snapshot, and assign a single label by plurality vote.
This procedure is repeated $10$ times per class, and accuracy is averaged across classes, repetitions, and networks.

\part{Our Findings}

\section{Reporters}

\subsection{Genetic optimization of reporter panels}

Reporter panels are optimized separately for each network realization and each panel size $m$.
The optimization variable is the set of selected node labels
\begin{equation}
R \subseteq \{1,\dots,N\},
\qquad
|R| = m.
\end{equation}
The fitness of a candidate panel is the mean classification accuracy returned by the random forest evaluation described above.
The search space has size
\begin{equation}
\binom{N}{m},
\end{equation}
which is already combinatorially enormous for moderate $m$.
This is why we use a genetic algorithm rather than exhaustive enumeration.

The genetic algorithm is population based.
The population holds $100$ candidate panels.
The genetic algorithm runs for $30$ generations.
Each member of a copied panel mutates with probability $0.1$.
Selection/sampling probabilities are proportional to an exponential map of accuracy,
\begin{equation}
P(\text{select individual } a)
\propto \exp\!\left(\frac{\mathrm{accuracy}(a)}{T_{\mathrm{sel}}}\right),
\end{equation}
with selection temperature $T_{\mathrm{sel}} = 0.1$.

At each generation, the top $10\%$ of individuals (i.e. the elites) are retained unchanged.
The remaining population is split evenly between two reproduction modes.
In copying with mutation, one parent is sampled and each feature is retained or replaced by a randomly sampled node.
In crossover, features are assembled by repeatedly drawing parent features from the current population.
Every offspring is rescored by rerunning the classifier.
The fitness of a panel is its mean accuracy over $10$ fresh train and test splits, redrawn at every evaluation.
The retained elites are rescored every generation. %
Fitness remains stochastic because the random forest and the splits are stochastic,
so the optimization acts on a noisy but consistently defined objective.

The optimization converges within roughly $10$ to $15$ generations at every noise level (\cref{fig:si-convergence}).

\begin{figure}
\centering
\includegraphics[width=0.6\textwidth]{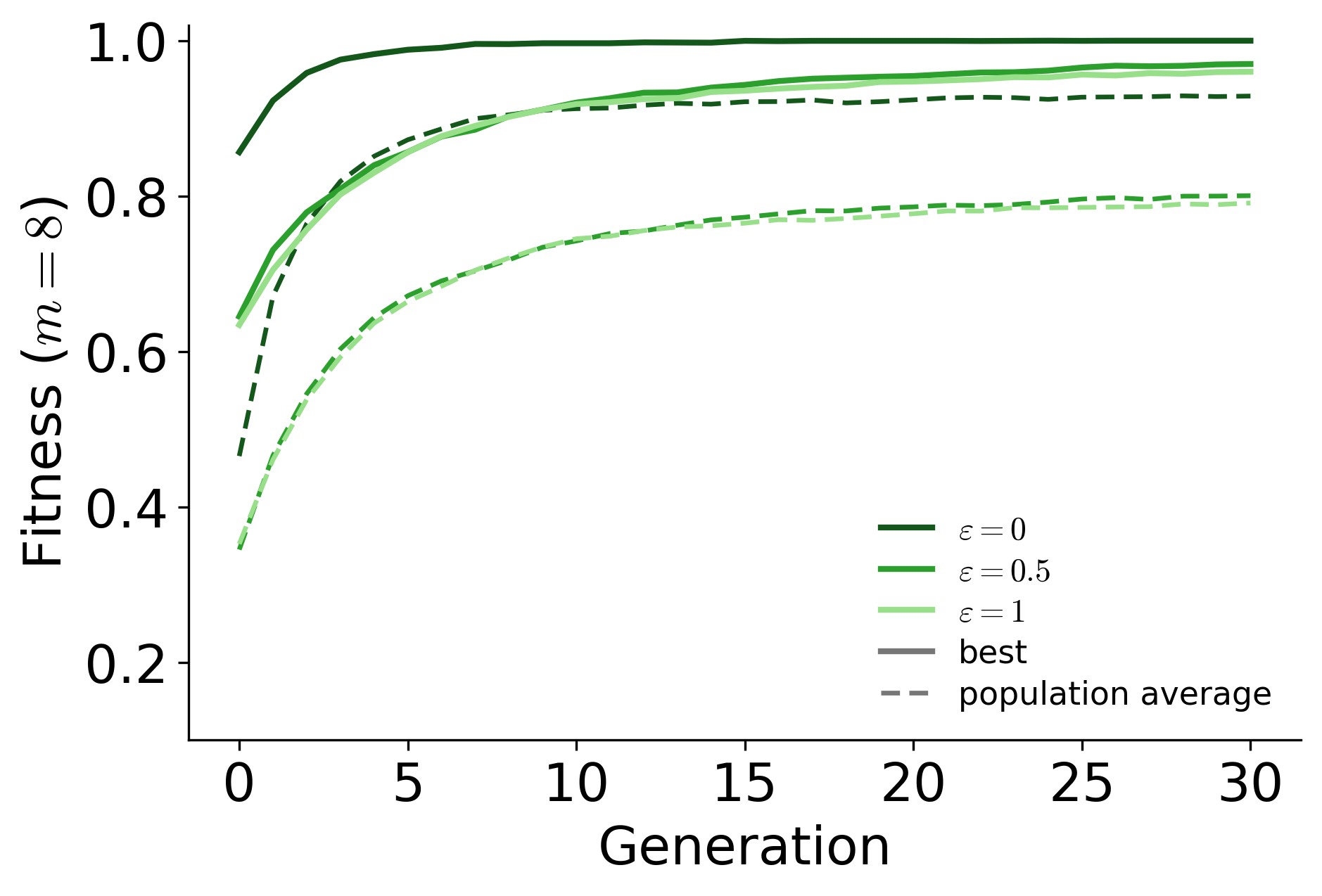}
\caption{
\textbf{Convergence of the genetic algorithm.}
Best and population average fitness at $m = 8$ by generation, averaged over $50$ networks, at the $3$ canonical noise levels.
Solid curves are the best individual and dashed curves the population average; the darker the shade, the lower the noise.
Fitness averages $10$ fresh splits per evaluation.
Elites are rescored every generation.
}
\label{fig:si-convergence}
\end{figure}

We searched over panel sizes that vary over powers of $2$ up until (but not surpassing) $N$.
In the small panel regime emphasized in the main text, the relevant sizes are $1$, $2$, $4$, $8$, $16$, and $32$.
Members of a panel are distinct nodes.

\noindent\textbf{Splitting by snapshot rather than by trajectory.}

The evaluator draws $50$ snapshots per class for training and tests on the remaining snapshots of the same network
without grouping them by initial condition,
so snapshots from one trajectory can fall on both sides of the split.
To measure what this leakage costs, we repeated the evaluation holding out whole initial conditions---%
training on $5$ replicates of each condition and testing on the other $5$---%
which leaves the training and test sets at the same $50$ rows per class and changes only the grouping.

Averaged over $50$ networks and $10$ splits at $m = 8$, evolved panels fall from $0.960$ to $0.945$ at $\varepsilon = 0.5$ and from $0.944$ to $0.931$ at $\varepsilon = 1$.
These starting values come from the $10$ splits used here rather than the $30$ splits used elsewhere in the paper.
The other selection algorithms move by no more than $2$ points in the same direction:
from $0.714$ to $0.694$ and from $0.697$ to $0.692$ for selecting the $8$ most sensitive nodes,
and from $0.364$ to $0.362$ and from $0.354$ to $0.351$ for selecting random panels.
So the ordering between selection algorithms is unchanged at every noise level.

The node class asymmetry is likewise unaffected.
Removing one sensitive member costs more than removing one insensitive member by $0.025$ under the snapshot split and by $0.028$ under the held out condition split at $\varepsilon = 0.5$,
and by $0.017$ and $0.016$ at $\varepsilon = 1$,
each difference paired within a panel and averaged across networks. %
This test splits members at the sensitivity cutoff rather than by response breadth.
The premium of \textbf{Figure 6b} uses the breadth classes and is therefore a different quantity.
The held out condition split also trains on $5$ distinct trajectories rather than $10$.

At $\varepsilon = 0$ the comparison is empty by construction because a perfectly reproduced initial condition makes all $10$ replicates of a condition identical and no trajectory can be held out.

\subsection{Sensitivity}

\noindent\textbf{Sensitivity of single nodes to single shocks.}

For a fixed network, node $j$, shock $q$, replicate $r$, and retained snapshot index $\ell$ (i.e. a relative time step), let
\begin{equation}
x_{j,q,r,\ell} :=
\left|\sigma^{(q)}_{j,r,\ell} - \sigma^{(0)}_{j,r,\ell}\right|,
\end{equation}
where $q=0$ denotes the control and $q=1,\dots,d$ denote the shocks.
The sensitivity of node $j$ to an individual perturbation $q$ is
\begin{equation}
S_{j,q}
:= \frac{1}{n_{\mathrm{IC}}L}
\sum_{r=1}^{n_{\mathrm{IC}}}
\sum_{\ell=1}^{L}
x_{j,q,r,\ell}.
\end{equation}
The sensitivity of each node used throughout the paper is
\begin{equation}
S_j
:= \frac{1}{d}
\sum_{q=1}^{d}
S_{j,q}.
\end{equation}
Equivalently, $S_j$ is the average nodewise Hamming distance between the shocked and control trajectories after averaging over shocks, replicates, and retained samples.

This definition is relative to the control trajectory on the same underlying network and with the same noisy initial condition.
It therefore solely measures shock response. %
The quantities $S_{j,q}$ are the wedges in the disk plots of main text \textbf{Figure 3f} and \textbf{Figure 4}.

\noindent\textbf{Panel summary statistics used in the main text analyses.}

When a sensitivity cutoff $\theta$ is used to polarize nodes into sensitive and insensitive classes for descriptive plots,
the fraction of sensitive nodes in a panel $R$ is
\begin{equation}
\phi_{\theta}(R)
:= \frac{1}{|R|}\sum_{j \in R} \mathbf{1}\!\left\{S_j > \theta\right\}.
\end{equation}

For ablation analyses, let $R_m^{\star}$ denote an optimized panel of nominal size $m$, and let $A \subseteq R_m^{\star}$ be the subset of removed nodes.
The loss in accuracy after removing $A$ is
\begin{equation}
\Delta_{\mathrm{acc}}(A;R_m^{\star})
:= \mathrm{Acc}(R_m^{\star}) - \mathrm{Acc}(R_m^{\star}\setminus A),
\end{equation}
where $\mathrm{Acc}(R')$ is the random forest classifier accuracy when using only the node states in $R'$ for training and testing.

\noindent\textbf{Defining sensitivity thresholds.}

The distribution of node sensitivities is bimodal at every noise level (\cref{fig:si-bdist}).
A large group of nodes has low mean deviation, responding strongly to at most $1$ or $2$ shocks.
A second group responds to nearly every shock at a saturating level.
There is a clear valley between these two groups.
The sensitivity cutoff/threshold $\theta$ is placed at the antimode---the minimum between the two modes.
The cutoff varies only between $0.27$ and $0.36$ across the full noise range.
The main conclusions of the paper do not rely on its exact value.

The two modes have a simple mechanistic origin,
and the gap is deeper within a single network than the pooled histogram suggests (\cref{fig:si-bimodality}).
The low mode is dominated by the frozen population.
About half of all nodes hold a constant state in the control dynamics, and a node that never moves without a shock has nothing to report.
Those are the unresponsive nodes, which are a class of their own and not in the dormant class.
The dormant reporters that panels recruit stay active in control (\cref{fig:si-bimodality}a).
The high mode is set by decorrelation.
Writing $v$ for the variance of a node's control state,
a shocked trajectory that decorrelates from control while keeping the same activity produces a mean absolute difference of $2v$,
which is at most $0.5$ since $v\leq 0.25$.
Active nodes lie along that line with a correlation of $r=0.92$ between the sensitivity $S$ and $2v$.
Thus the high mode is the pile up of decorrelated nodes against the ceiling at $S = 0.5$.
Nodes above the line are not merely decorrelated, they have shifted their mean state, and frozen nodes can only respond in that way (\cref{fig:si-bimodality}c).
Each network has its own antimode, ranging from $0.07$ to $0.45$, so pooling networks blurs the valley.
The relative depth of the dip is $0.46$ pooled but $0.62$ in the median individual network,
We find that $43$ of $50$ networks show a clear dip on their own (\cref{fig:si-bimodality}b).
Aligning each network on its own antimode before pooling recovers the gap (\cref{fig:si-bimodality}d).

\begin{figure}
\centering
\includegraphics[width=0.98\textwidth]{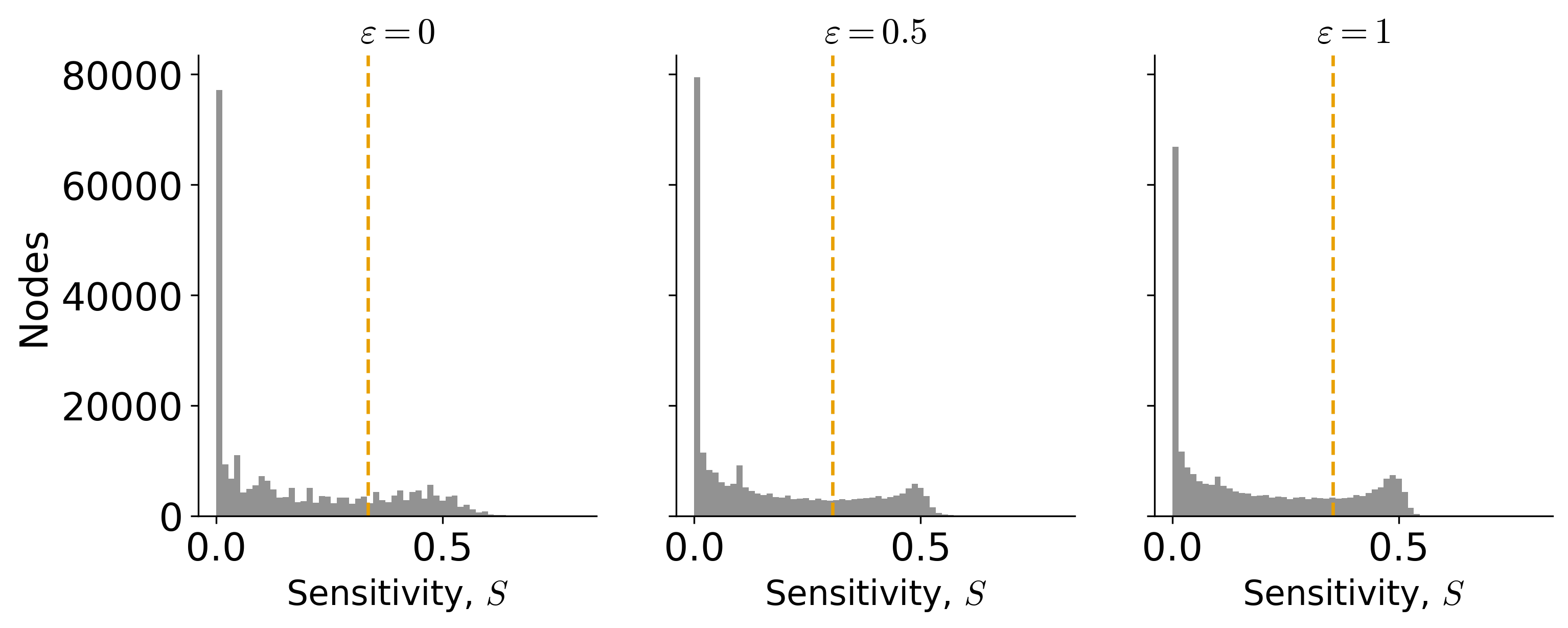}
\caption{
\textbf{Sensitivity distributions across noise levels.}
Histograms of the node sensitivity $S$ pooled over $50$ networks at $3$ noise levels $\varepsilon$. %
The frozen spike in the first bin holds about a quarter of all nodes, towering over the rest of the distribution.
Dashed lines mark the antimode cutoffs used to classify nodes as sensitive or insensitive. %
At $\varepsilon = 0$ every replicate starts from the same state, so $S$ takes fewer distinct values and the histogram is comb-like.
}
\label{fig:si-bdist}
\end{figure}

\begin{figure}
\centering
\includegraphics[width=0.98\textwidth]{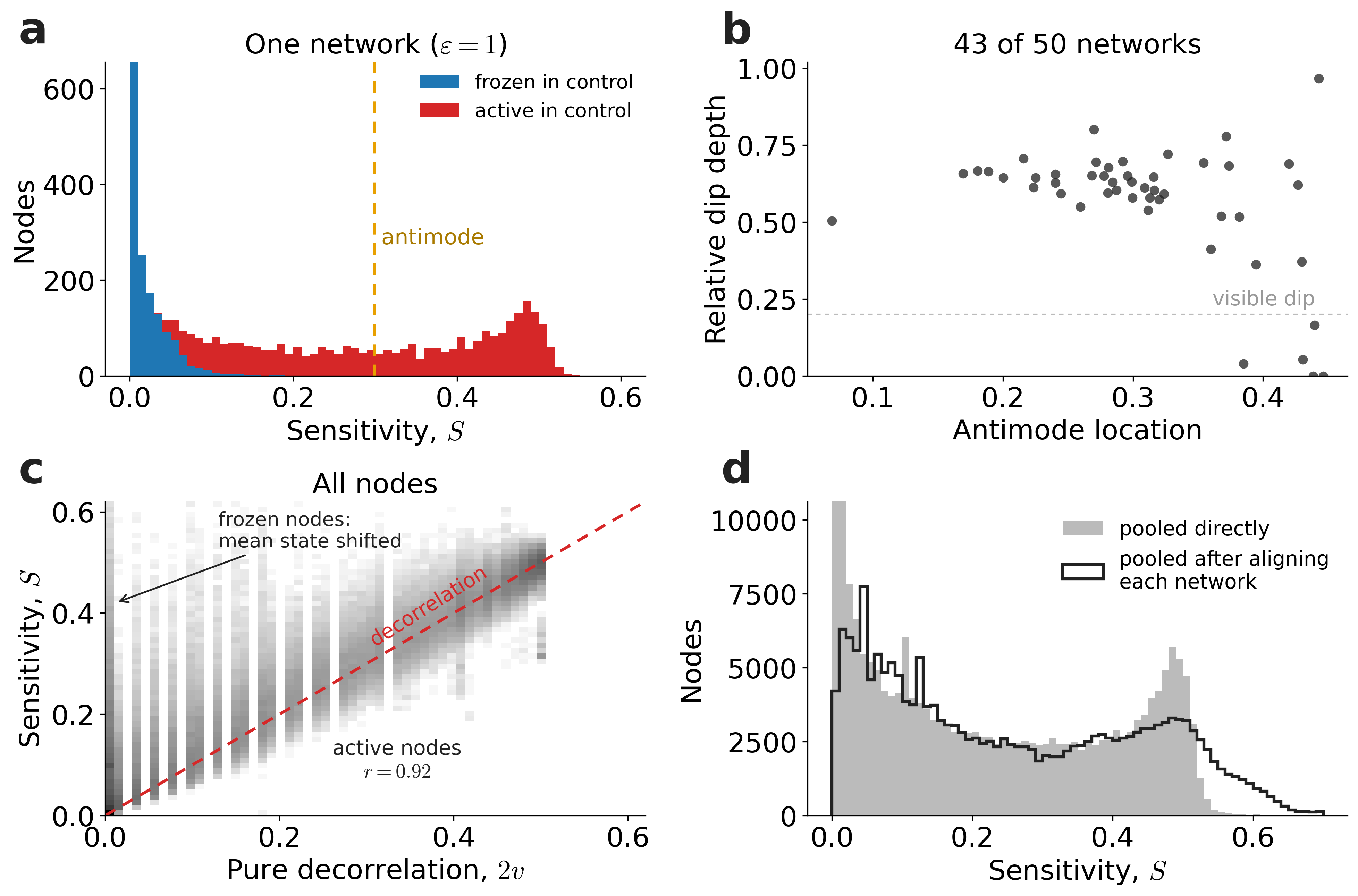}
\caption{
\textbf{Where the two modes of the sensitivity distribution come from.}
All panels at $\varepsilon = 1$.
\textbf{a}, An example network split by whether a node is frozen in the control dynamics (control variance below $0.02$) or active.
The low mode is dominated by frozen nodes and the high mode is entirely active ones.
The active nodes that fall in the low mode are the dormant reporters. %
The dashed line is this network's antimode.
\textbf{b}, Each network has its own antimode, and $43$ of $50$ networks show a dip of relative depth above $0.2$ on their own.
Writing $h$ as height, relative depth is $1 - h_{\mathrm{antimode}} / \min(h_{\mathrm{left}}, h_{\mathrm{right}})$ on a smoothed histogram,
so $0$ is no dip and $1$ is an empty antimode bin.
\textbf{c}, Sensitivity $S$ against $2v$, where $v$ is the variance of the node's control state.
A shocked trajectory that decorrelates from control at unchanged activity gives $S = 2v$, the dashed line, which saturates at $0.5$.
Active nodes lie along it, which places the high mode at the decorrelation ceiling.
Nodes above the line have shifted their mean state rather than merely decorrelated, and frozen nodes, at $v = 0$, can respond only in that way.
\textbf{d}, Pooling networks directly blurs the valley because each network's antimode sits in a different place.
Aligning each network on its own antimode before pooling recovers it.
}
\label{fig:si-bimodality}
\end{figure}

\noindent\textbf{Sensitivity reflects the dynamical regime, not the topology.}

A node's sensitivity is almost fully determined by its activity in the unperturbed control dynamics.
The variance of a node's state across control trajectories correlates with its sensitivity at the level of $r=0.87$ to $r=0.93$ (\cref{fig:si-activity}).
Between $51\%$ and $63\%$ (by noise level) of nodes are frozen in the control run.
These nodes have mean sensitivity near $0.05$ compared to $0.32$ to $0.37$ (respectively) for active nodes.

Evolved panels are topologically unremarkable in every generic sense (\cref{fig:si-connectivity}).
Node degrees carries almost no sensitivy information.
The correlation between sensitivity and in-degree is at most $+0.06$, and between sensitivity and out-degree it is $0.000$.
So there is insignificant association between degree and sensitivity.
Sensitivity is therefore a property of the dynamical configuration.

The mean graph undirected distance between panel members matches size matched random panels (\cref{fig:si-connectivity}a).
The fraction of the network within a given undirected distance (measured in hops) of the panel lies inside the random band at every radius.
The degree distributions of panel members trace the network background (\cref{fig:si-connectivity}b).
The one structural signature is specific to the shocks.
Following edge direction,
the mean directed distance from a shock's target nodes to the nearest panel member is $1.15$ for evolved panels at $\varepsilon = 1$ and $1.22$ at $\varepsilon = 0.5$.
For random panels, it is $1.41$ to $1.45$ (\cref{fig:si-connectivity}c).
Random panels matched for sensitivity composition show $1.40$ to $1.43$, indistinguishable from unconstrained random panels.
Thus the alignment to the shock geometry is a genuine placement effect and not a byproduct of selecting promiscuous nodes.
At $\varepsilon = 0$ the evolved distance rises to $1.37$ and the alignment largely disappears, consistent with reproducible trials rewarding readout strength over placement.
The worst covered shock (largest distance from panel) is also closer.

\begin{figure}
\centering
\includegraphics[width=0.85\textwidth]{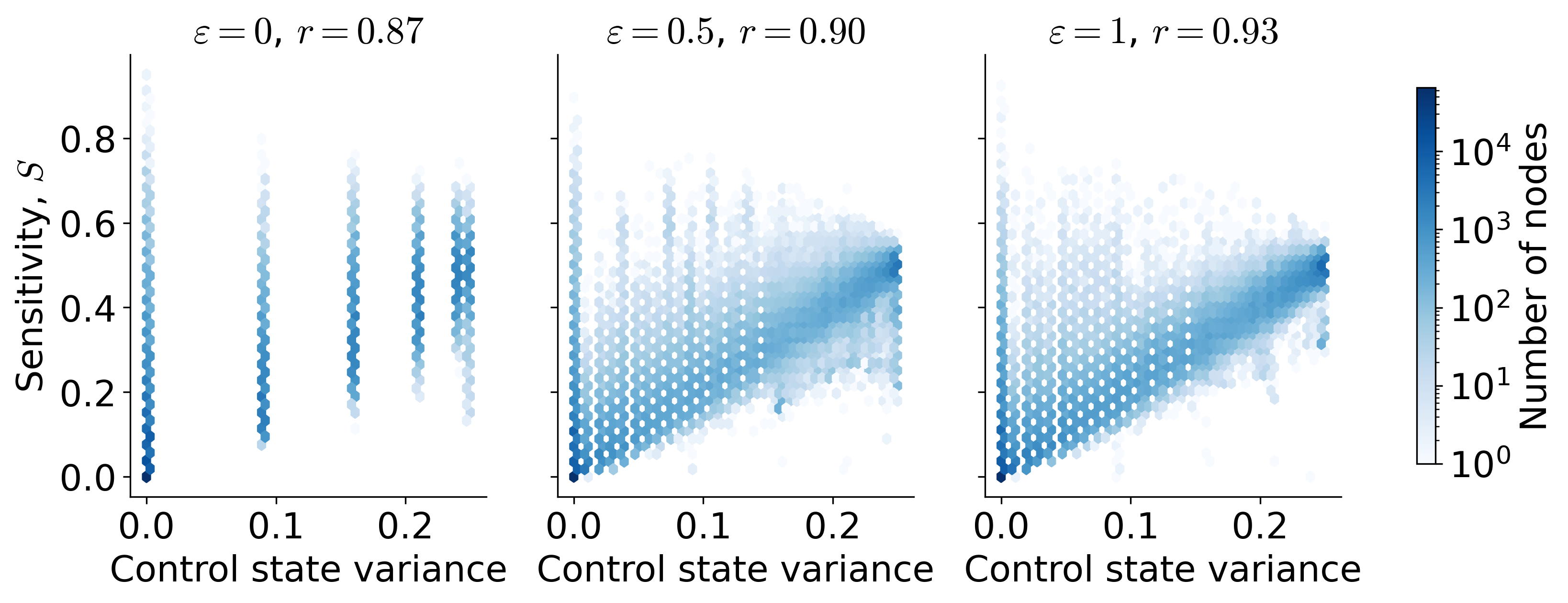}
\caption{
\textbf{Control run activity predicts shock sensitivity.}
Two dimensional histograms of the control state variance of each node against sensitivity $S$, pooled over $50$ networks, at $\varepsilon = 0$, $0.5$, and $1$.
Most nodes that are frozen in the control dynamics answer no shock.
The share is between $69\%$ and $77\%$ across the $3$ noise levels.
The dormant nodes are low sensitivity nodes that remain active in control.
}
\label{fig:si-activity}
\end{figure}

\begin{figure}
\centering
\includegraphics[width=0.9\textwidth]{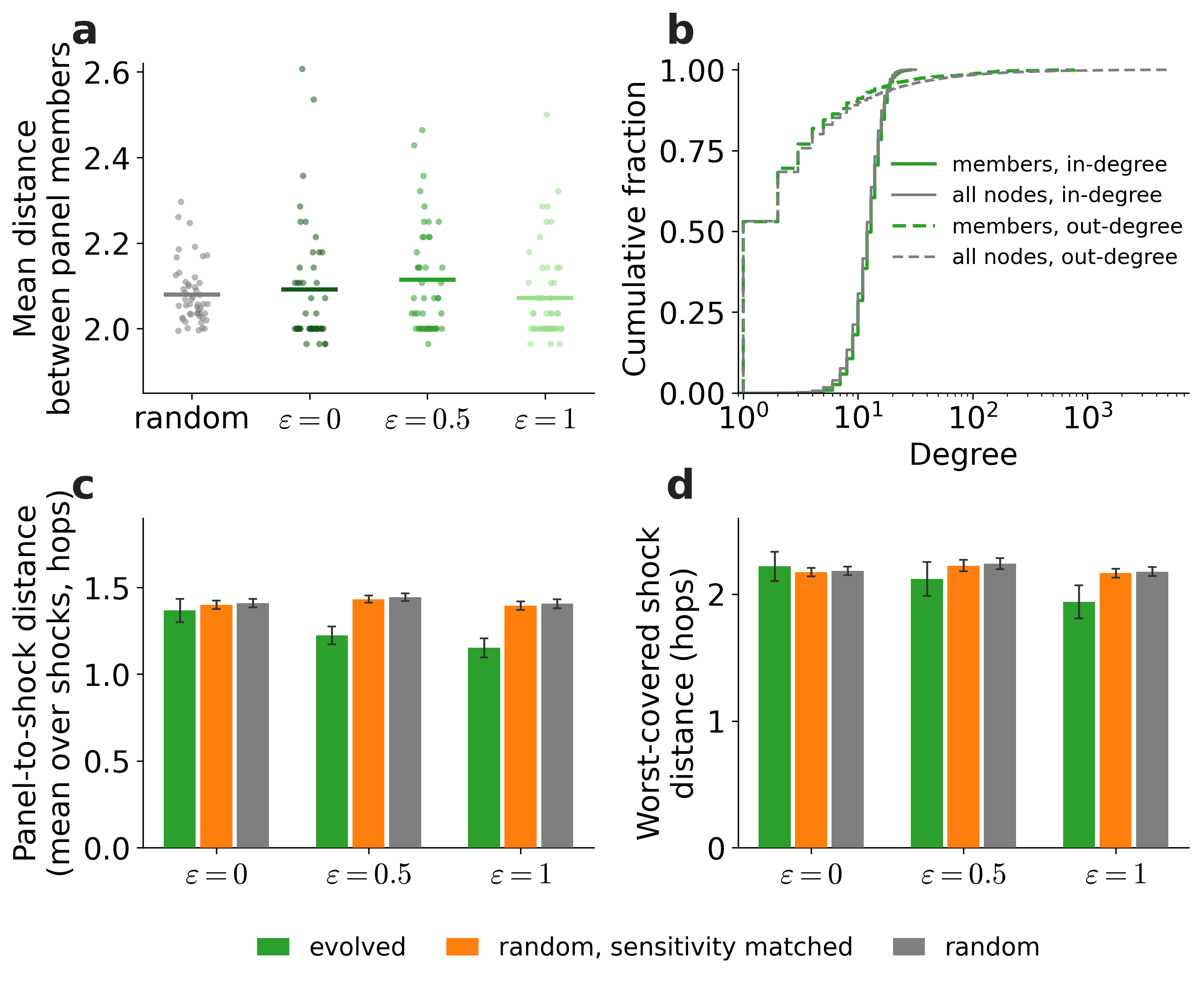}
\caption{
\textbf{Evolved panels align to the shocks rather than to the topology.}
\textbf{a}, Mean graph distance between panel members for evolved panels and size matched random panels.
\textbf{b}, Degree distributions of panel members trace the network background.
\textbf{c}, Mean distance from each shock's targets to the nearest panel member, following edge direction, for evolved panels, sensitivity matched random panels, and unconstrained random panels.
\textbf{d}, Distance of the worst covered shock, same comparison.
Error bars are $95\%$ confidence intervals over networks.
}
\label{fig:si-connectivity}
\end{figure}

\subsection{Promiscuous and dormant reporters}

\noindent\textbf{Evolved panel members are informative and complementary.}

The information structure of an evolved panel can be summarized by the marginal entropy of each member and the mutual information between member pairs,
computed over all retained snapshots (\cref{fig:si-redundancy}).
The (marginal) entropy for a node $j$ is $H(X_j)$ where $X_j$ is the random variable of the state of node $j$ in the snapshots.
Then the mean entropy is $\bar H:=\frac1N\sum_j H(X_j)$.
Evolved panels have higher marginal entropy per member than random panels matched for sensitivity composition%
---making random panels using the same proportion of sensitive versus insensitive nodes---%
$0.81$ to $0.88$ bits against $0.68$ to $0.76$ for various noise levels (\cref{fig:si-redundancy}a).
At the same time their members share less pairwise information.
The pairwise entropy between two nodes $i$ and $j$ is $I(X_i;X_j)$.
The mean pairwise mutual information is lower than in composition matched panels,
and the largest pairwise mutual information within a panel is lower still. 
Unconstrained random panels show even smaller mutual information, but only because their members are mostly frozen and carry little information of any kind.
The genetic algorithm therefore selects members that are individually informative and mutually complementary.

\begin{figure}
\centering
\includegraphics[width=0.98\textwidth]{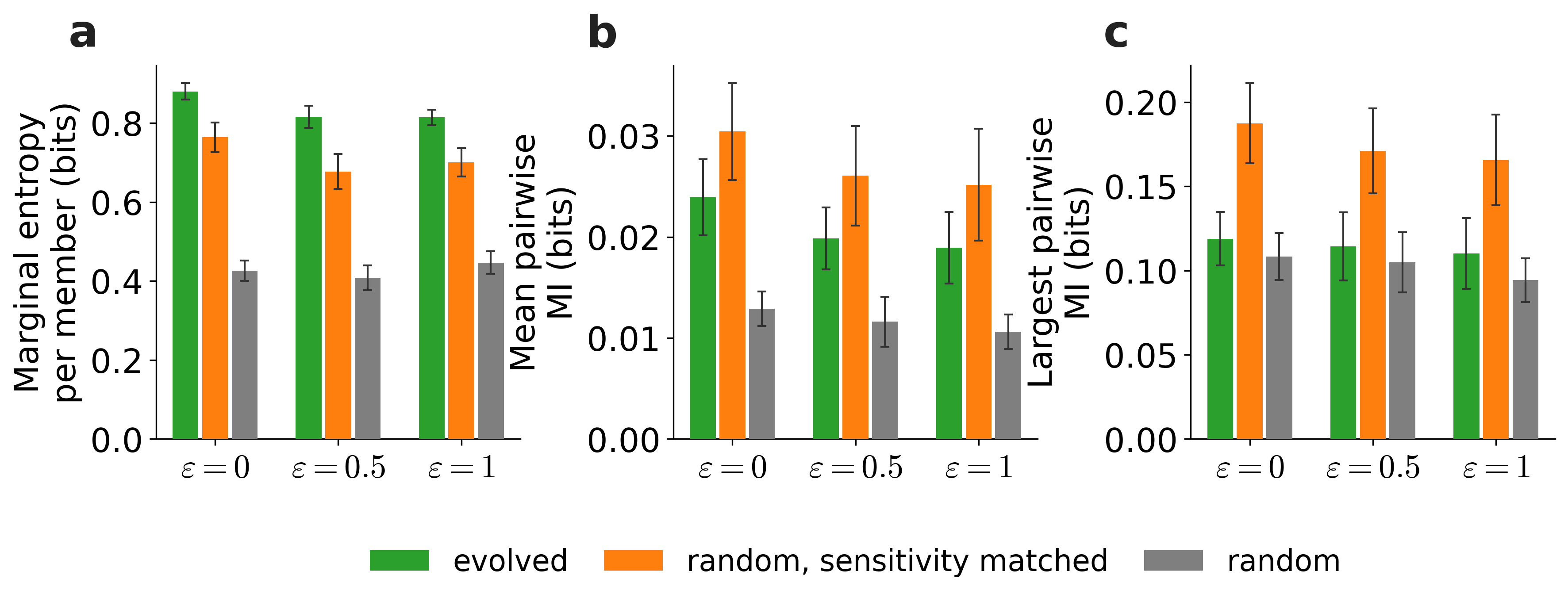}
\caption{
\textbf{Information structure within panels.}
\textbf{a}, Mean marginal entropy per panel member.
\textbf{b}, Mean pairwise mutual information between members.
\textbf{c}, Largest pairwise mutual information within the panel.
Bars show means over networks with $95\%$ confidence intervals, for evolved panels, random panels matched for the number of sensitive members, and unconstrained random panels, at $\varepsilon = 0$, $0.5$, and $1$.
}
\label{fig:si-redundancy}
\end{figure}

\noindent\textbf{Credit assignment by Shapley value.}

Our removal (ablation) experiments measure what a reporter contributes given the rest of the panel, averaged without bias over all possible removal sequences.
We use another notion of ``value added'' for the reporters to strengthen our main claim.
The Shapley value averages a member's marginal contribution over every possible sub panel \cite{shapley1953value}.
To compute the Shapley value,
we treat a panel as a cooperative game whose value is the classification accuracy of a sub panel.
More concretely, for a set of $m$ reporters $R$, the Shapley value of reporter $i$ is
\begin{equation}
  \varphi_i := \frac1m\sum_{R'\subseteq R \setminus \{i\}}{\binom{m-1}{|R'|}}^{-1}\times \left(\mathrm{Acc}(R'\cup \{i\})-\mathrm{Acc}(R')\right)
\end{equation}
Among the many nice properties of $\varphi_i$, we note that $\sum_{i=1}^m \varphi_i = \mathrm{Acc}(R) - \mathrm{Acc}(\emptyset)$.
The empty panel is assigned chance accuracy.
So the Shapley value is the credit of reporter $i$ in the panel $R$.

Our result is a sharp version of the main claim (\cref{fig:si-shapley}).
With a perfectly reproduced initial condition,
sensitive members carry more than twice the credit of insensitive ones, $0.132$ against $0.061$.
That gap closes almost immediately.
At $\varepsilon = 0.01$,
the two classes are already within $10$ percent of each other,
$0.108$ against $0.100$.
From $\varepsilon = 0.05$ to $\varepsilon = 1$, the two classes are pretty much indistinguishable (\cref{fig:si-shapley}b).
The convergence is driven from both sides.
The credit assigned to sensitive members falls with noise while the credit assigned to insensitive members rises.
Counting members rather than averaging credit gives the same picture.
The fraction of insensitive members that contribute more than the median sensitive member of their own panel rises from $0.03$ at $\varepsilon = 0$ to about $0.5$ once noise is present,
meaning a typical insensitive member is as valuable as a typical sensitive one (\cref{fig:si-shapley}c).
The insensitive group here is the whole pool below the cutoff, so it holds dormant and unresponsive members alike.
Since unresponsive members carry almost no credit, pooling them makes every comparison in this figure conservative.
This is an independent confirmation of the step reported in main text \textbf{Figure 6b}.

The same figure shows where the genetic algorithm recruits below the cutoff.
They are not drawn from the bottom of the low sensitivity pool.
Their median sensitivity is $0.231$ against $0.059$ across all nodes below the cutoff.
So the genetic algorithm reaches for the most sensitive of those and avoids the unresponsive nodes that never move at all.
This also explains why reserving picks for the least sensitive nodes fails.
Such a rule selects for minimal sensitivity and lands in the frozen pool.

\begin{figure}
\centering
\includegraphics[width=0.98\textwidth]{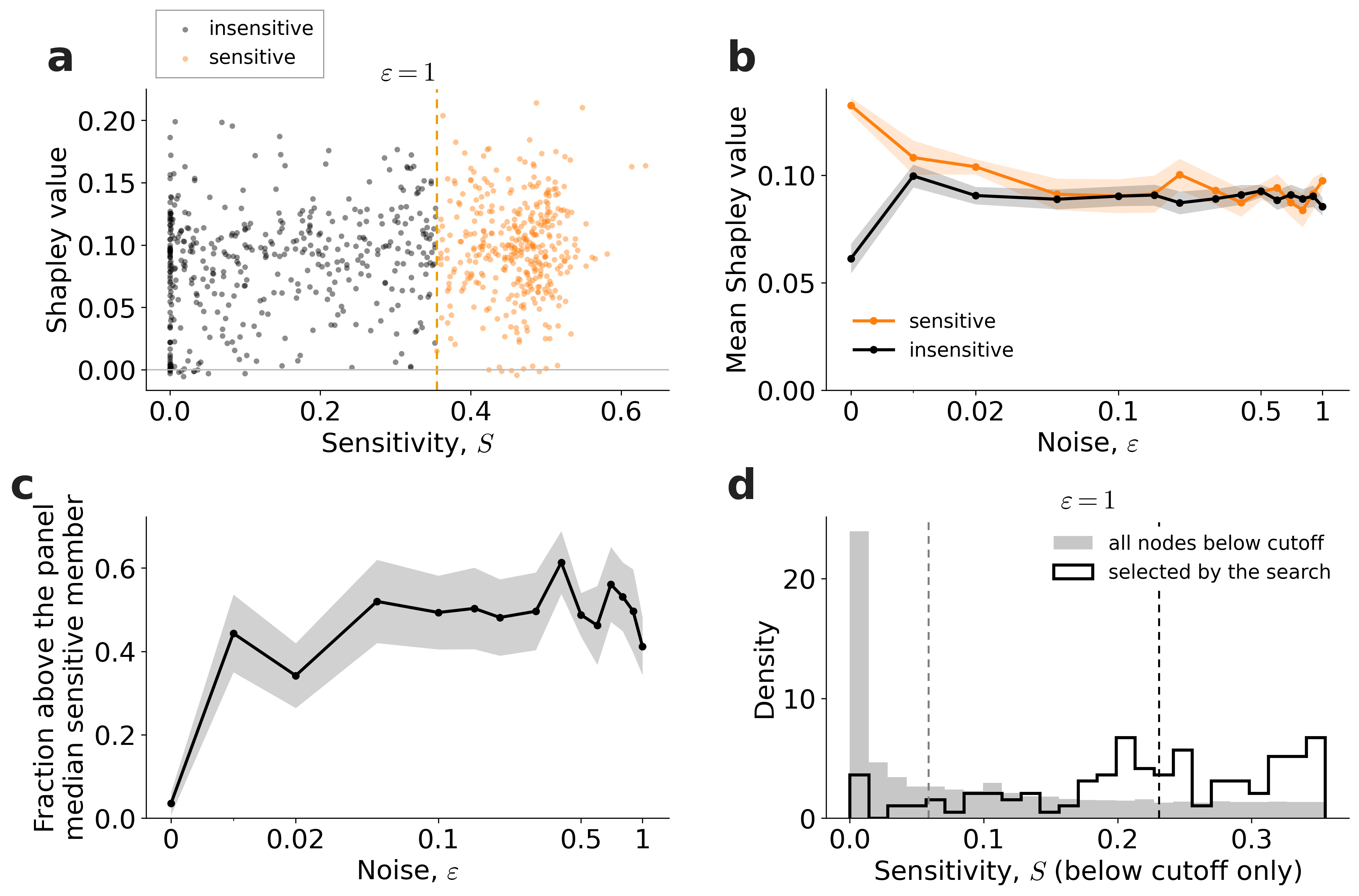}
\caption{
\textbf{Exact Shapley credit assignment within evolved panels.}
\textbf{a}, Shapley value against sensitivity for every member of every evolved panel at $\varepsilon = 1$, with the sensitivity cutoff dashed.
Insensitive members carry credit comparable to sensitive members.
The insensitive group is the whole pool below the cutoff, so the points at $S = 0$ are unresponsive members.
\textbf{b}, Mean Shapley value by sensitivity class across every noise level. %
The classes converge by $\varepsilon \approx 0.05$ and remain together thereafter.
\textbf{c}, Fraction of insensitive members whose Shapley value exceeds the median sensitive member of their own panel.
\textbf{d}, Where the genetic algorithm recruits below the sensitivity cutoff at $\varepsilon = 1$.
Selected members (outline) against all nodes below the cutoff (filled), with medians dashed.
Bands and intervals are $95\%$ confidence intervals across panels.
}
\label{fig:si-shapley}
\end{figure}

\noindent\textbf{Classifying reporters by response breadth.}

The sensitivity $S_j$ averages a node's response over the $10$ shocks,
so a node that answers every shock weakly and a node that answers one shock enormously can share the same $S$.
We therefore classify nodes by how many shocks they answer.
A node answers shock $q$ when its deviation reaches the sensitivity cutoff,
$\Delta_{j,q} \geq \theta$, where $\theta$ is the antimode of the pooled $S$ distribution defined in the main text Methods section.
Writing $n_j$ for the number of shocks a node answers,
\begin{equation}
n_j := \bigl| \{ q : \Delta_{j,q} \geq \theta \} \bigr|.
\end{equation}
A node is unresponsive when $n_j = 0$, dormant when $1 \leq n_j \leq 5$, and promiscuous when $n_j \geq 6$.
The rule introduces no threshold beyond $\theta$, which the sensitivity classes already use.
It also enforces the requirement that a dormant node must answer something strongly.
A node whose deviations never reach $\theta$ cannot be dormant however concentrated its profile is.

The majority split at $6$ is not arbitrary.
The distribution of $n_j$ is ``U'' shaped with an interior minimum, and that minimum sits at $n_j = 6$ or $7$ at every noise level (\cref{fig:si-entropy}a).
Pooled over all $50$ networks, $49.3\%$ of nodes are unresponsive, $29.5\%$ dormant, and $21.2\%$ promiscuous at $\varepsilon = 1$, with the same ordering at $\varepsilon = 0$ and $\varepsilon = 0.5$.
The unresponsive class pools two kinds of node that the measurement cannot separate.
At $\varepsilon = 1$, $15.2\%$ of nodes never deviate at all, while $34.1\%$ deviate under some shock but never far enough to reach $\theta$.
Both are useless as reporters, so pooling them costs nothing.
An unresponsive node never crosses the threshold.
It is not necessarily a node that never moves.

Breadth and mean sensitivity agree on $97\%$ of nodes.
We have that $79\%$ of the disagreements lie within $0.05$ of the cutoff (\cref{fig:si-entropy}b).
So the reclassification sharpens the main text results.
Where the two rules differ is exactly where the mean is least informative:
in the band around $\theta$ in which a moderate average can come either from many small responses or from a few large ones.

Composition measured this way reproduces the main text result,
provided the comparison is made against the rule the paper actually argues with (\cref{fig:si-entropy}c,d).
Selecting the $8$ most sensitive nodes yields $7.96$ promiscuous members and $0.04$ dormant members per panel.
The most sensitive heuristic essentially never picks a selective node, and the panel it produces is $8$ promiscuous nodes whose profiles overlap heavily.
Evolved panels instead carry $4.34$ promiscuous, $3.40$ dormant, and $0.26$ unresponsive members at $\varepsilon = 1$, about $85$ times the dormant count of the heuristic.
The same ordering holds at $\varepsilon = 0$ and $\varepsilon = 0.5$.
Evolved panels also carry more dormant members than random panels do, $3.40$ against $2.24$.

\begin{figure}
\centering
\includegraphics[width=0.98\textwidth]{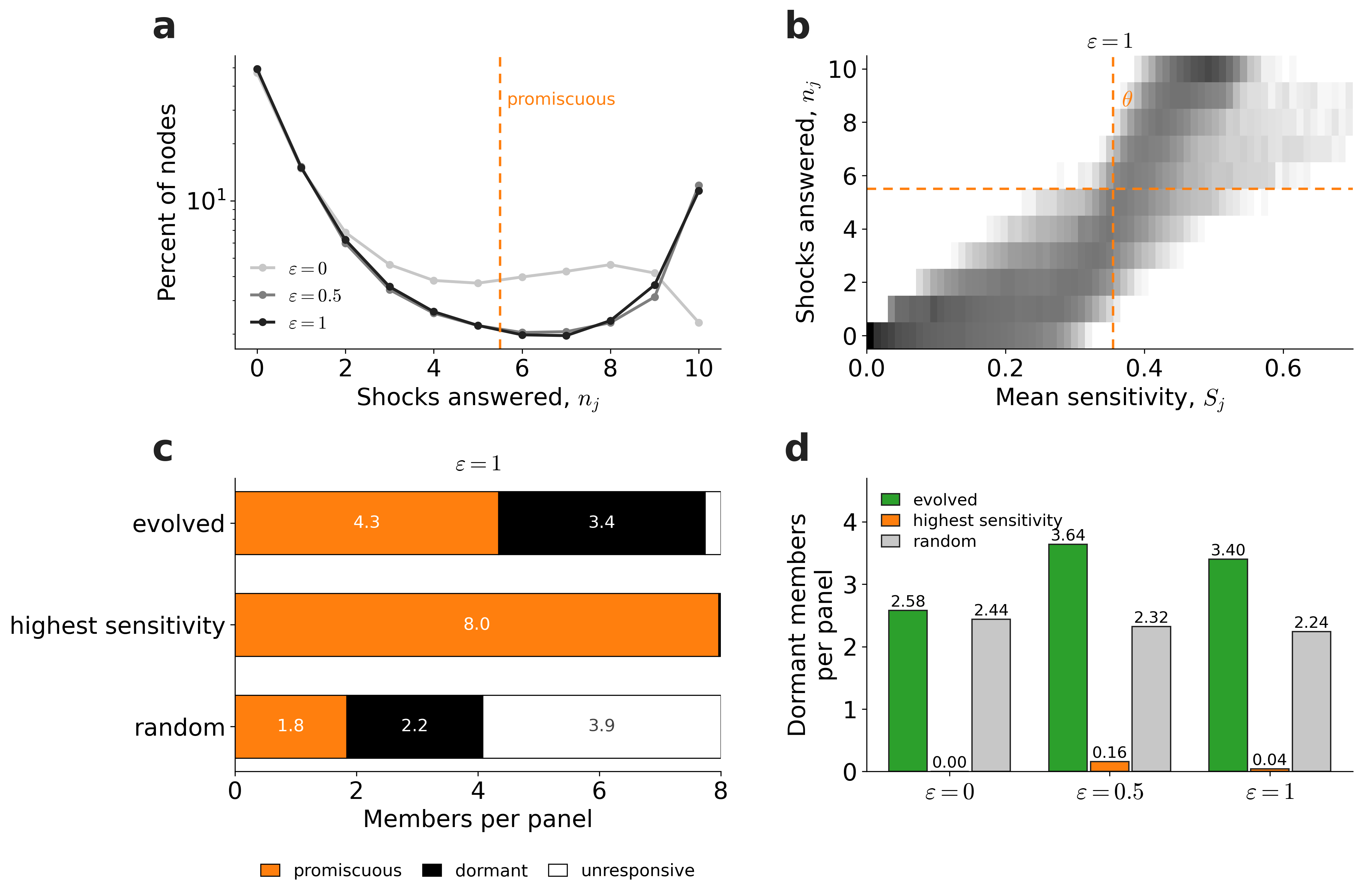}
\caption{
\textbf{Reporters classified by response breadth rather than by mean response.}
A node answers shock $q$ when $\Delta_{j,q} \geq \theta$, and $n_j$ counts the shocks it answers.
\textbf{a}, Distribution of $n_j$ across all nodes of all $50$ networks, at the $3$ noise levels, on a logarithmic vertical scale.
The distribution is ``U'' shaped and its interior minimum sits at the majority split, marked dashed, so the boundary between dormant and promiscuous falls in a valley.
\textbf{b}, Mean sensitivity against breadth at $\varepsilon = 1$, on a logarithmic density scale, with $\theta$ and the majority split marked.
The two classifications agree on $97\%$ of nodes, and the disagreements concentrate in the band around $\theta$ where a moderate mean can come either from many small responses or from a few large ones.
\textbf{c}, Mean composition of an $8$ member panel under $3$ selection rules at $\varepsilon = 1$.
The sensitivity heuristic returns almost no dormant members.
The genetic algorithm returns more than $3$ per panel.
\textbf{d}, Dormant members per panel at $3$ noise levels, for the same $3$ rules.
The genetic algorithm beats both baselines at every noise level, and the sensitivity heuristic, not random selection, is the informative one.
}
\label{fig:si-entropy}
\end{figure}

\noindent\textbf{Do dormant reporters divide the shock space?}

A dormant node answers a minority of the $10$ shocks.
So a panel of dormant nodes could in principle partition the shock space, each member covering what the others miss.
Whether the evolved panels partition the shock space depends entirely on the null comparison. %

Write $A_j$ for the set of shocks node $j$ answers, so $n_j = |A_j|$.
The Jaccard index between two node sets $A_i$ and $A_j$ is defined as
\begin{equation}
  J(A_i, A_j) := \frac{|A_i\cap A_j|}{|A_i\cup A_j|}.
\end{equation}
The Jaccard index measures the similarity or overlap between two sets.
The obvious null experiment replaces each dormant member's answered shocks by a uniformly random subset of the $10$ shocks of the same size.
Against that null nothing happens: the union and overlaps of the set of shocks are indistinguishable from chance at every noise level.
That null is not insightful.
The $10$ shocks are not interchangeable because some are answered by far more dormant nodes than others. %

We therefore compared against $2$ better nulls.
The \emph{matched shock} null draws each replacement shock with its own popularity among that network's dormant nodes.
The \emph{dormant pool} null draws replacement members from the network's own dormant pool with the same $n_j$.
Against both nulls the dormant members of an evolved panel overlap less than chance
(\cref{fig:si-coverage}c).
Coverage is larger in the evolved panels versus the nulls 
(\cref{fig:si-coverage}b).

Dormant reporters also do not target the shocks that the promiscuous members confuse.
Call a shock pair confusable when every promiscuous member of the panel assigns the two shocks the same sensitivity cutoff signature.
In other words,
if $P=\{j_1,\ldots,j_p\}\subseteq R$ are the promiscuous members of reporter panel $R$,
and $C_q:=(\bm1\{\Delta_{j_\ell,q}\geq \theta\})_{\ell=1}^{p}$ is the cutoff signature for shock $q$,
then $(q_1, q_2)$ is confusable when $C_{q_1} = C_{q_2}$.
Let the specialization index be the fraction of confusable pairs the dormant members separate minus
the fraction of already separable pairs they separate.
The specialization index is $+0.001$, $-0.001$, and $+0.027$ at $\varepsilon = 0$, $0.5$, and $1$.
In contrast, a control that picks dormant class nodes greedily to cover the confusable pairs reaches $+0.27$, $+0.13$, and $+0.16$ (respectively).

Thus dormant reporters are not a second discriminator, consistent with the reserve picture.
If dormant reporters were a second discriminator they should concentrate on the pairs the first discriminator (promiscuous reporters) cannot separate, and they do not.

\begin{figure}
\centering
\includegraphics[width=0.98\textwidth]{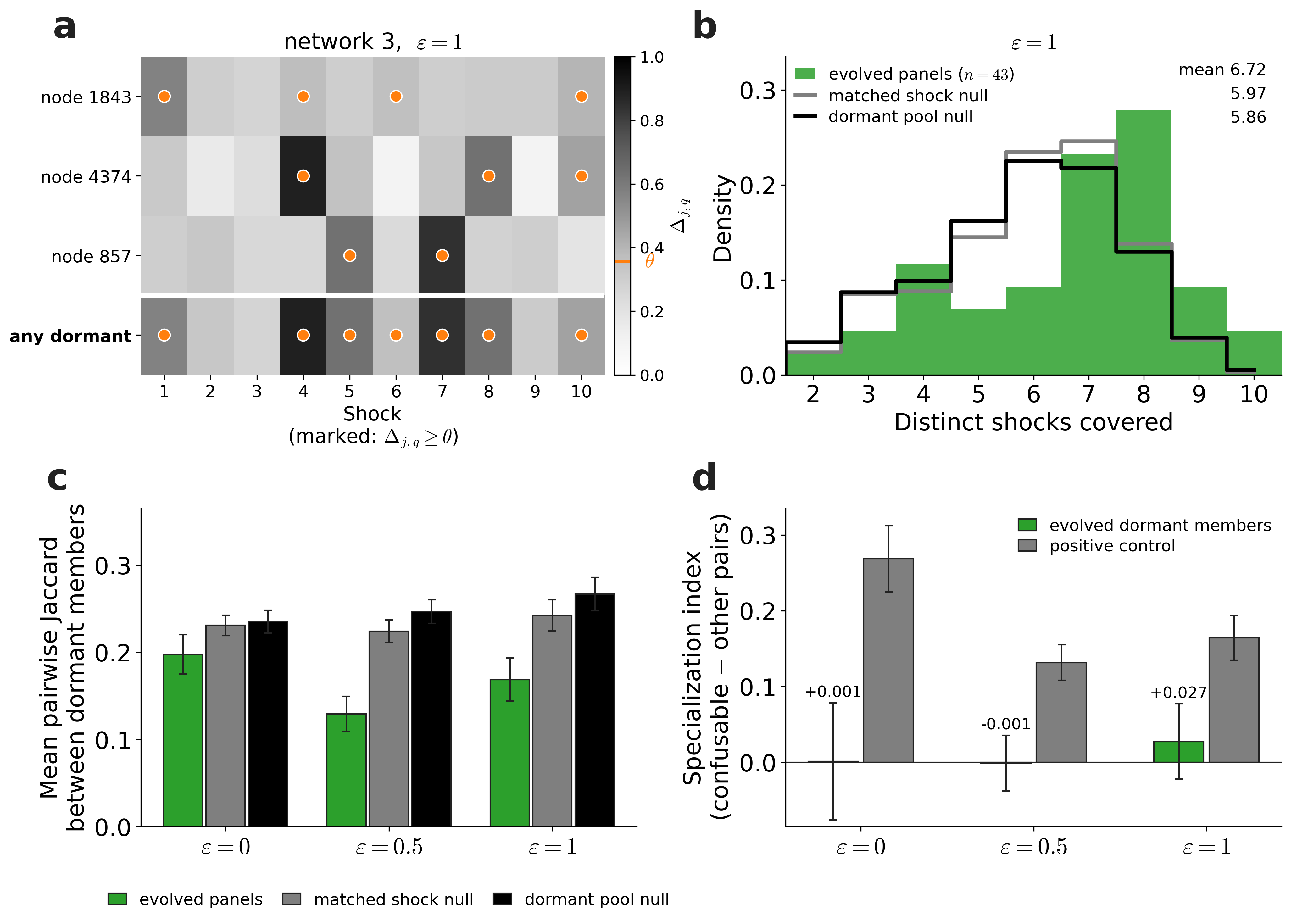}
\caption{
\textbf{Dormant reporters spread over the shock space without tiling it.}
\textbf{a}, An example network at $\varepsilon = 1$, chosen as the panel whose union and overlap are closest to the medians: its dormant members against the $10$ shocks, shaded by the deviation $\Delta_{j,q}$, with a marker on every cell reaching $\theta$.
The bottom row is the union of the rows.
\textbf{b}, Distinct shocks answered by at least $1$ dormant member, across networks, against both nulls.
\textbf{c}, Mean pairwise Jaccard between the answered sets, observed against both nulls, at the $3$ noise levels.
\textbf{d}, The specialization index against a greedy positive control.
Bars and intervals are means with $95\%$ confidence intervals across panels.
}
\label{fig:si-coverage}
\end{figure}

\subsection{How hard is choosing the panel?}

Every panel in this paper was chosen by a genetic algorithm rather than by exact optimization.
We state the selection task as a decision problem.
We prove that it is \NPcomplete.
We then say what that does and does not imply for the ensemble we study.

\medskip
\noindent\textbf{Two decision problems.}
We first recall the problem we reduce from \cite{gareyjohnson1979computers}.

\begin{quote}
\textsc{Minimum Test Collection} (\textsc{MTC}).\\
\emph{Instance}: a finite set $S$ of items, a collection $\mathcal{C} = \{C_1, \dots, C_N\}$ of subsets of $S$ called tests, and a positive integer $m \leq N$.\\
\emph{Question}: is there a subcollection $\mathcal{C}' \subseteq \mathcal{C}$ with $|\mathcal{C}'| \leq m$ such that for every pair of distinct items $s, s' \in S$ there is some $C \in \mathcal{C}'$ containing exactly one of $s$ and $s'$?
\end{quote}

\noindent A subcollection with that property is said to \emph{separate} $S$.
We now state the selection task in the same style.
The input is the response matrix, that is, the deviations a pilot experiment measures, rather than the network itself; the distinction matters and we return to it below.

\begin{quote}
\textsc{Reporter Panel}.\\
\emph{Instance}: a number of shocks $d$, a number of candidate reporters $N$, a response matrix $A \in \{0,1\}^{N \times d}$ with $A_{jq} = 1$ when node $j$ answers shock $q$, a panel size $m \leq N$, and a rational threshold $\eta \geq 0$.\\
\emph{Question}: is there a set $P \subseteq [N]$ with $|P| = m$ such that $H(Q \mid X_P) \leq \eta$, where $Q$ is uniform on the $d$ shocks and $X_P = (A_{jQ})_{j \in P}$ is what the panel observes?
\end{quote}

\noindent Conditional entropy is the quantity a classifier estimates, and its accuracy is proxy for it.

\begin{theorem}
\textup{\textsc{Reporter Panel}} is \NPcomplete.
\end{theorem}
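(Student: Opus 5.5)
Membership in \NP{} and hardness are handled separately; the hardness comes from a reduction from \textsc{Minimum Test Collection} with the threshold $\eta = 0$.

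\emph{Membership.} The certificate is the panel $P$ itself. Given $P$, the verifier computes the observation vector $x_q = (A_{jq})_{j\in P}$ for each of the $d$ shocks. It then groups shocks by identical vectors, which takes time polynomial in $d$ and $m$. Since $Q$ is uniform and $X_P$ is a deterministic function of $Q$, each group $g$ of size $n_g$ has probability $n_g/d$ and contributes conditional entropy $\log_2 n_g$. Hence
\[
H(Q\mid X_P) = \frac{1}{d}\sum_g n_g\log_2 n_g .
\]
The comparison of this quantity with the rational $\eta$ needs care because it involves logarithms. I would clear denominators and exponentiate: the test $H \le \eta$ with $\eta = a/b$ becomes $\prod_g n_g^{\,b\,n_g} \le 2^{a d}$. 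Both sides are integers whose bit lengths are polynomial in the input size, so the comparison is exact and polynomial. I expect this precision issue to be the only delicate point in the membership argument.

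\emph{Hardness.} Given an \textsc{MTC} instance $(S,\mathcal C,m)$, I set $d = |S|$ and identify the shocks with the items. Each test $C_j$ becomes a candidate reporter with $A_{jq} = \mathbf 1\{q\in C_j\}$, so $N = |\mathcal C|$. I keep the same $m$ and set $\eta = 0$. The key observation is that $H(Q\mid X_P)=0$ holds exactly when $q\mapsto x_q$ is injective, which is exactly when every pair of distinct items is separated by some test in $P$.

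One mismatch remains: \textsc{MTC} asks for at most $m$ tests, while \textsc{Reporter Panel} demands exactly $m$. Adding tests never merges groups, so a separating collection of size less than $m$ can be padded with arbitrary unused tests; this is possible because $m \le N$. Conversely, a panel of exactly $m$ tests is in particular a collection of at most $m$. So yes-instances correspond to yes-instances, and the construction is clearly polynomial. Duplicate tests in $\mathcal C$ cause no problem, because rows of $A$ may repeat. Since \textsc{MTC} is \NPcomplete{} \cite{gareyjohnson1979computers}, the theorem follows.

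I expect no real obstacle in the reduction. The step most worth writing carefully is the exact rational comparison in the verifier, since a naive floating-point evaluation of the entropy would not give a rigorous polynomial-time check.
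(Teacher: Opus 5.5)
Your proposal follows the paper's proof step for step. It uses the panel $P$ as certificate, groups shocks by signature to get $H(Q\mid X_P)=\frac1d\sum_g n_g\log n_g$, and reduces from \textsc{Minimum Test Collection} with $\eta=0$, tests as rows of $A$, and padding to turn ``at most $m$'' into ``exactly $m$''. The hardness half is correct as written.

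The gap is in the one step where you go beyond the paper: the exact comparison with $\eta$. You claim that $\prod_g n_g^{b n_g}$ and $2^{ad}$ have bit lengths polynomial in the input size, but this is false for a rational $\eta=a/b$ encoded in binary. The input spends only about $\log_2 a+\log_2 b$ bits on $\eta$. These integers, however, have about $b\sum_g n_g\log_2 n_g$ (up to $b\,d\log_2 d$) and $ad$ bits. That is linear in the \emph{values} of $a$ and $b$ and so exponential in their encoding length, which makes your verifier only pseudo-polynomial. Accepting outright when $\eta\ge\lceil\log_2 d\rceil$ is valid, since $H\le\log_2 d$, and it caps $a$ at $b\lceil\log_2 d\rceil$. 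It does nothing, though, about a huge denominator $b$. The paper's own proof has the same weak spot: it speaks of ``arithmetic operations on rationals'' although $\log n_g$ is generally irrational. So this gap is shared with the paper, and your proposed fix does not close it.

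Two repairs work.
\begin{enumerate}
\item The reduction only uses $\eta=0$. You can therefore restrict $\eta$ to a numerator and denominator bounded by a polynomial in $d$, or encode it in unary. NP-hardness is unaffected, and your exponentiation check is then genuinely polynomial.
\item For arbitrary binary $\eta$, set $M=\prod_g n_g^{n_g}\le d^d$. If $M$ is a power of two, $\log_2 M$ is an integer and $b\log_2 M\le ad$ is an exact integer comparison. Otherwise $\Lambda=b\ln M-ad\ln 2\neq 0$. Baker's theory of linear forms in two logarithms (e.g.\ the Baker--W\"ustholz bound) then gives $|\Lambda|\ge\exp\bigl(-C\ln M\cdot\ln\max(ad,b,3)\bigr)$ with an absolute constant $C$. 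Hence computing $\ln M$ and $\ln 2$ to a number of bits polynomial in $d$, $\log a$ and $\log b$ decides the sign of $\Lambda$, and with it whether $H\le\eta$.
\end{enumerate}
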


\begin{proof}
\emph{Membership in \NP.}
Take the panel $P$ itself as the certificate, which has size $m \log N$.
To verify it, group the $d$ shocks by their response signatures (i.e. panel observations):
shocks $q$ and $q'$ fall in the same group exactly when $A_{jq} = A_{jq'}$ for every $j \in P$.
Since $Q$ is uniform and $X_P$ determines the group, the posterior on $Q$ given a response signature is uniform on that group.
So if the groups have sizes $n_1, \dots, n_k$ then using the map $G$ that maps the response signature to a shock,
\begin{align}
H(Q \mid X_P) &= H(Q \mid G(X_P)) \\
&= -\sum_{i=1}^k\sum_{q=1}^dP(G(X_p)=g_i)\times P(Q=q\mid G(X_p)=g_i)\times \log P(Q=q\mid G(X_p)=g_i)  \\
&= -\sum_{i=1}^k\sum_{q\in g_i}(n_i/d)\times (1/n_i)\times \log 1/n_i  \\
&=\frac{1}{d}\sum_{i=1}^{k} n_i \log n_i .
\label{eq:condent}
\end{align}
Forming the groups takes $O(md\log d)$ time by sorting the $d$ response signatures restricted to $P$.
Evaluating \cref{eq:condent}
and comparing it to $\eta$ takes $O(d)$ further arithmetic operations on rationals.
Verification is therefore polynomial in the input size, so the problem is in \NP.

\emph{Proof of \NPhardness.}
We reduce from \textsc{MTC}.
Let $(S, \mathcal{C}, m)$ be an instance with $|S| = d$ and $|\mathcal{C}| = N$, and write $S = \{s_1, \dots, s_d\}$.
Construct the \textsc{Reporter Panel} instance with $d$ shocks, $N$ candidate reporters, panel size $m$, threshold $\eta = 0$, and
\begin{equation}
A_{jq} = \begin{cases} 1 & s_q \in C_j, \\ 0 & \text{otherwise.} \end{cases}
\end{equation}
The construction writes $Nd$ bits and so runs in time polynomial in the size of the \textsc{MTC} instance.

By \cref{eq:condent}, $H(Q \mid X_P) = 0$ holds exactly when every group is a singleton, since each term $n_i \log n_i$ is non-negative and vanishes only for $n_i = 1$.
Every group being a singleton is exactly the statement that the map $q \mapsto (A_{jq})_{j \in P}$ is injective, which in turn says that for every pair $q \neq q'$ there is some $j \in P$ with $A_{jq} \neq A_{jq'}$.
We will show that \textsc{MTC} is a \texttt{YES} instance if and only if \textsc{Reporter Panel} is a \texttt{YES} instance.

Suppose the \textsc{MTC} instance is a \texttt{YES} instance, with $\mathcal{C}'$ separating $S$ and $|\mathcal{C}'| \leq m$.
Let $P$ be the set of reporters indexed by $\mathcal{C}'$, padded with arbitrary further indices to size exactly $m$, which is possible because $m \leq N$.
Padding is harmless because adding coordinates cannot merge two distinct response signatures.
For any $q \neq q'$ the test separating $s_q$ from $s_{q'}$ contains exactly one of them, so the corresponding row of $A$ differs in those two columns, and $H(Q \mid X_P) = 0 \leq \eta$.

Conversely suppose $P$ with $|P| = m$ achieves $H(Q \mid X_P) = 0$.
Then the map above is injective, so for every pair $s_q \neq s_{q'}$ some $j \in P$ has $A_{jq} \neq A_{jq'}$, which says that $C_j$ contains exactly one of $s_q$ and $s_{q'}$.
The subcollection indexed by $P$ therefore separates $S$ and has size $m$.

The two instances are thus equivalent, and \textsc{MTC} is \NPcomplete \cite{gareyjohnson1979computers}, so \textsc{Reporter Panel} is \NPhard.

Being both in \NP\ and \NPhard, \textsc{Reporter Panel} is \NPcomplete.
\end{proof}

\noindent We give $3$ remarks on the scope of the theorem.

Firstly, the classification task in the main text carries a control condition as an eleventh alternative, which the statement above omits.
Including it only adds the requirement that no shock produce the all-zero response signature, so it constrains the panel further and cannot make the problem easier.

Secondly, and more importantly, the input here is the measured response matrix, not the weight matrix $W$.
Taking $W$ as the input instead gives a problem that is at least as hard in practice but that we do not claim is in \NP, because verifying a candidate panel would require the distribution of trajectories induced by the noisy initial condition, an average over $2^N$ initial states.
The theorem therefore describes the layer an experimenter actually optimizes, namely which of the measured reporters to keep, and not the physical problem of reading the answer off the network.

Thirdly, in the actual problem of this paper, our instances are not adversarial.
The weights are drawn independently and uniformly on $[-1,1]$ and the shock targets uniformly at random.
So the theorem bounds what an algorithm must handle in general rather than what it faces on a typical draw from this model.
Whether a polynomial time algorithm finds a near optimal panel on such a draw is a question we do not resolve.
What we observe is that a genetic algorithm run for $30$ generations beats every polynomial time heuristic we implemented,
that the margin grows with noise,
and that the heuristics coming closest are the ones using response statistics rather than topology.

\subsection{What makes a panel good, and what does not}

A natural explanation of mixed panels is a division of labor.
Dormant nodes would report the state of the system.
Promiscuous nodes would then report the shock, given that state.
Our measurements rule this out.

Consider only which members answered, and ignore how strongly.
A panel can distinguish at most $\log_2 10 = 3.32$ bits this way.
A panel of $8$ dormant nodes chosen for separation resolves $9.78$ of the $10$ shocks.
That is $3.29$ bits, or almost complete identification.
The $8$ most sensitive nodes resolve $2.56$ shocks, or $1.36$ bits.
So dormant reporters carry nearly all the information a thresholded readout can carry.
A promiscuous node answers most shocks.
Whatever it contributes lives in how strongly it responds.

No statistic of the mean response orders the panels by accuracy (\cref{fig:si-panel-quality}a).
The panel designed for margin holds the largest worst case margin, $0.865$ against $0.403$ for the evolved panels.
It also holds the most thresholded information, $3.29$ bits against $2.80$.
It still classifies worse, $0.757$ against $0.832$.
The evolved panels beat it in $66\%$ of networks.
A second dormant design covers all $10$ shocks in every network.
It reaches only $0.550$.
Panels of $8$ dormant nodes drawn at random reach $0.507$.
Across all $300$ panels the worst case margin predicts accuracy only moderately, at $\rho = 0.50$.
The count of distinct thresholded patterns predicts it barely at all, at $\rho = 0.19$.

This rules out any account written in terms of the mean response alone.
We therefore measured what the mean response leaves out.

We first measured reproducibility, node by node.
For each node we took the spread of its reading across replicates.
We then took the standardized separation between every pair of alternatives.
Dormant reporters match the promiscuous mean separation, $0.193$ against $0.198$.
They do so with about half the trial to trial spread, $0.073$ against $0.126$.
Their standardized separation is therefore higher, $2.09$ against $1.82$.
Reliability alone is still not enough.
Combining these single node values as if members were independent puts the margin design above the evolved one, $2.50$ against $2.07$.
That is the wrong order.
The independent combination also tracks accuracy only at $\rho = 0.59$.
Redundancy does not explain the gap either.
The mean absolute correlation between members is about the same in the two designs, $0.199$ against $0.193$.

We then measured the same separation with the full member covariance.
The $10^{\text{th}}$ percentile over pairs orders all $6$ designs exactly as accuracy does.
It correlates with accuracy at $\rho = 0.84$ across the $300$ panels.
The best statistic of the mean response reaches only $\rho = 0.50$.
The result holds as the covariance shrinkage varies from $0.2$ to $0.8$.

The off diagonal terms of the covariance are the natural candidate \cite{averbeck2006neural,morenobote2014information}.
An exact attribution rules them out.
We split the off diagonal into $3$ blocks and computed each block's Shapley contribution.
The whole off diagonal contributes about $1\%$ of the statistic.
Most of it sits between pairs of dormant members.
The block between pairs of promiscuous members is the smallest.
The diagonal alone predicts accuracy exactly as well as the full covariance, $\rho = 0.835$ for both.

The effect is a normalization rather than a covariance.
The statistic that works divides each member's separation by its variance pooled across all conditions.
The statistic that fails divides by the variance of the two conditions being compared.
A dormant reporter is quiet in most conditions and variable where it flips.
Judged against its local noise it looks unreliable exactly when it acts.
Judged against its overall variability it barely fluctuates at all.
The classifier follows the pooled normalization.

What makes a panel good is not a property of its members one at a time.
It is not sensitivity, breadth, coverage, margin, or any single member's reliability.
It is whether the panel keeps every pair of alternatives separated relative to its members' overall variability.
The binding constraint is the worst separated pairs.
That constraint cannot be evaluated for one node in isolation.
The genetic algorithm only ever sees accuracy, and it finds panels that satisfy the criterion.
No rule that ranks nodes one at a time can do the same.
The set level version of the quantity does work as a selection rule.
That rule is the spring rule of main text \textbf{Figure 7}.

One detail matters at zero noise.
With a perfectly reproduced initial condition the across replicate variance is zero.
The standardized separation then degenerates.
The unregularized rule collapses to chance level selection, at $0.51$.
Including the member covariance with a small ridge regularizes the denominator.
The rule then reaches $0.97$, $0.92$, and $0.90$.
That is within a few points of the search at every noise level.

\begin{figure}
\centering
\includegraphics[width=0.98\textwidth]{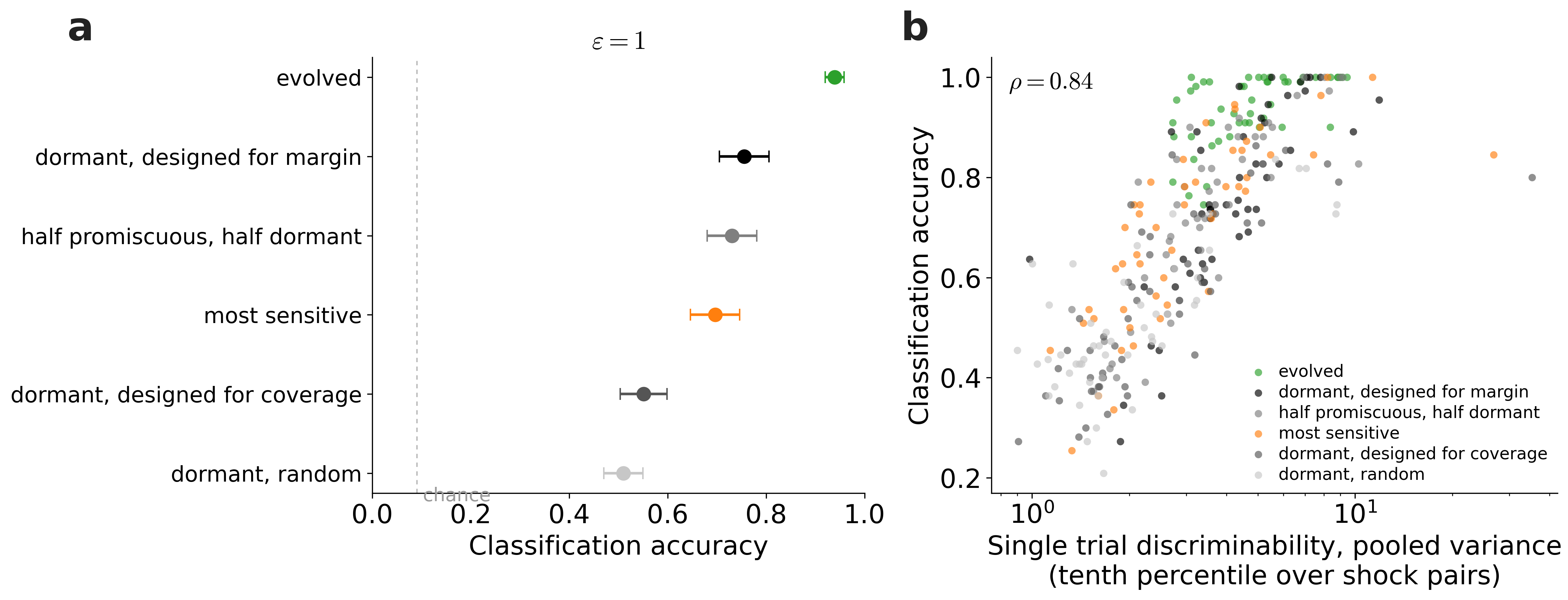}
\caption{
\textbf{Panel accuracy tracks a set level discriminability, not any member property.}
\textbf{a}, Classification accuracy at $\varepsilon = 1$ for $6$ ways of building an $8$ member panel, means with $95\%$ confidence intervals across the $50$ networks.
\textbf{b}, Accuracy against the $10^{\text{th}}$ percentile over shock pairs of the panel's single trial discriminability with each member's separation normalized by its trial to trial variance pooled across conditions, one point per network and design.
Spearman $\rho = 0.84$ across the $300$ panels, and the mean of this statistic orders the $6$ designs exactly as their accuracy does.
The off diagonal of the member covariance contributes about $1\%$ of the statistic, so the prediction is carried by the pooled normalization and not by correlations between members.
}
\label{fig:si-panel-quality}
\end{figure}

\section{Experimental considerations}

\subsection{Selection heuristics}

Main text \textbf{Figure 7} compares the leading selection strategies and compresses the remaining heuristics into a faint gray color.
Every curve reports the accuracy of finished panels rescored on fresh random splits.
The genetic algorithm averages $10$ splits per fitness evaluation and rescores its elites every generation.
So its reported best runs only about $2$ points above the rescored value at full noise.
The genetic algorithm leads at every noise level, $1.00$, $0.95$, and $0.94$ (respectively) at $m = 8$,
with greedy information gain within $2$ to $3$ points at a small fraction of the evaluations.
\Cref{fig:si-strategies-full} shows every strategy we evaluated at all $3$ noise levels and all panel sizes.

The heuristics separate into $3$ tiers.

Structural rankings that use only the topology track the random baseline or fall below it:
highest in degree, highest out degree, upstream coverage, Jaccard coverage, and greedy influence maximization.
The two degree rules take the top of the distribution,
the most regulated nodes and the largest hubs.

Rankings built from the dynamics do substantially better.
The highest sensitivity panel reaches accuracy $0.90$, $0.71$, and $0.70$ at $m=8$ for $\varepsilon = 0$, $0.5$, and $1$, respectively.
Adding a pairwise redundancy penalty buys a small further gain, to $0.92$, $0.72$, and $0.73$, respectively.
That rule---which we call ``entropy with diversity''---adds at each step the node maximizing $H_j - \beta \overline{I}_j$,
where $H_j$ is the marginal entropy of node $j$'s state in the control condition and
$\overline{I}_j$ is its mean pairwise mutual information with the nodes already chosen.
It reads the control dynamics only and never sees a shock.
A third rule of the same family---which we call ``greedy minimum mean squared error (MMSE)''---%
selects control snapshot covariance columns greedily and lands between the structural tier and the sensitivity ranking.

The decisive ingredient is conditioning,
meaning scoring a candidate by what it adds given the nodes already chosen.
``Greedy information gain'' is one conditional strategy that adds at each step the node that most reduces the conditional entropy $H(\text{shock} \mid \text{panel})$.
Note that this is a different quantity from the one the previous rule maximizes:
entropy with diversity scores the marginal entropy of a member's own control state,
whereas information gain scores the entropy of the shock label given the whole panel.
A second conditioning strategy reserves a fraction of its picks for \emph{quiet} nodes before conditioning on the rest.
A quiet pick maximizes $-S_j - \beta \overline{I}_j$, so it takes the least responsive node available while staying decorrelated from those already held. %
These two conditional strategies reach $0.97$ and $0.89$ at $\varepsilon=0$ and $0.92$ and $0.81$ at $\varepsilon=1$, each respectively.

The figure also contains an instructive failure.
The strategy ``reserve least sensitive nodes'' is the simplest implementation of the reservation idea.
At panel size $m$ it reserves a $25\%$ of the picks for nodes of minimal sensitivity $S$,
chosen with a mutual information diversity penalty,
and fills the rest with the most sensitive nodes.
It performs worse than the highest sensitivity panel alone,
reaching only $0.77$, $0.59$, and $0.56$, respectively for noise levels, at $m=8$.
The reason is that minimizing $S$ does not find dormant reporters.
About $15\%$ percent of nodes have $S$ exactly zero because they are frozen in every condition---%
these are unresponsive nodes which carry almost no information. %
The dormant nodes that evolved panels and the conditional strategies recruit are different.
They respond weakly to shocks yet remain active and reproducible in the control condition.
Low sensitivity alone is therefore not the criterion for a dormant reporter.
What makes one valuable is a reproducible and informative control state, and identifying it requires looking at the control dynamics, not only at the response magnitude.

Every heuristic in \cref{fig:si-strategies-full} is implemented in \texttt{scripts/heuristic-node-selection.py} of the code repository, selected by its \texttt{-{}-strategy} flag.
The genetic algorithm, the random baseline, and the spring rule are implemented in \texttt{scripts/genetic-algorithm-selection.py}, \texttt{scripts/random-node-selection.py}, and \texttt{scripts/rule-prefix-curves.py} (respectively).
All of them are scored through the same evaluator in \texttt{scripts/classifier.py}.

\begin{figure}
\centering
\includegraphics[width=0.98\textwidth]{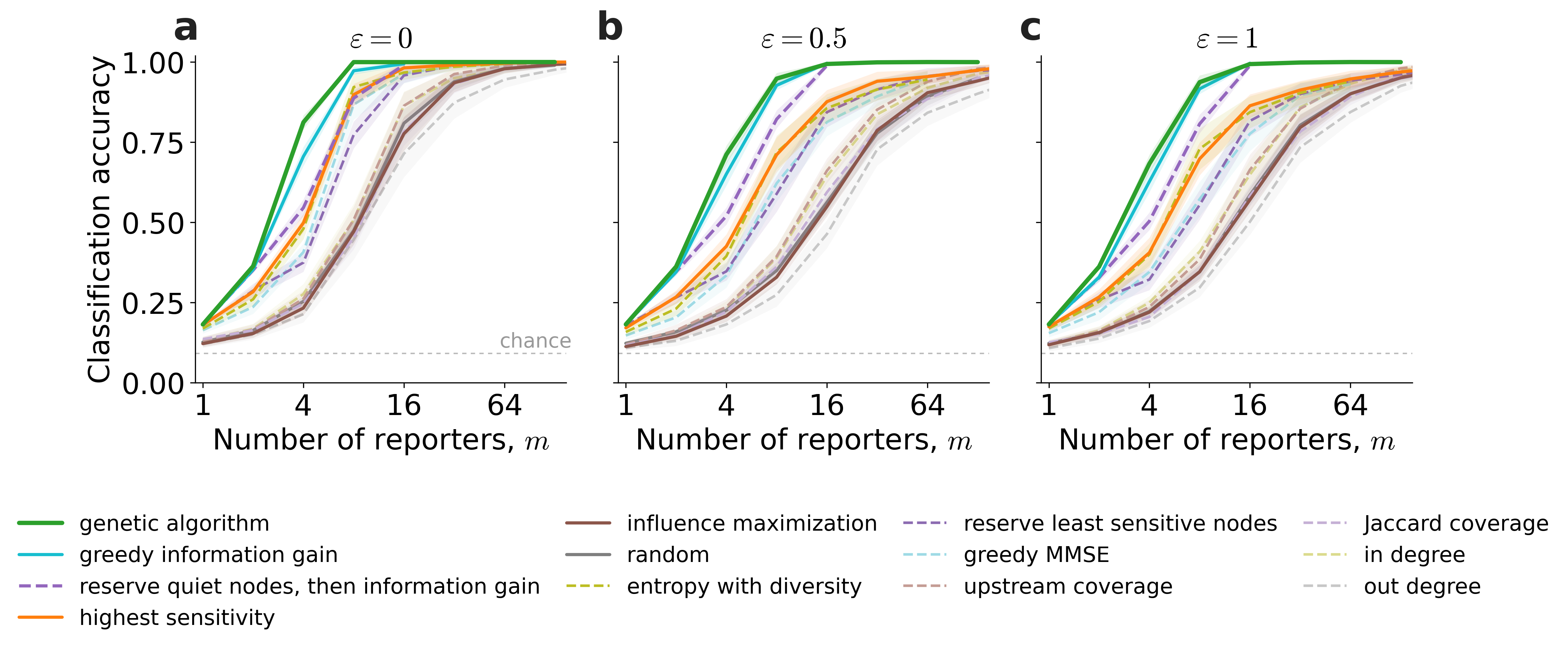}
\caption{
\textbf{All selection heuristics at all panel sizes.}
Classification accuracy against panel size $m$ at $3$ noise levels.
Curves are means over $50$ networks and bands are $95\%$ confidence intervals.
Solid curves repeat the strategies of main text \textbf{Figure 7}.
Dashed curves are the heuristics that the main figure shows in gray.
The entropy based strategies stop at $m=16$ where the estimator they optimize is reliable.
}
\label{fig:si-strategies-full}
\end{figure}

\subsection{Reporter selection in experimental practice}

In practice, reporter placement is usually settled before any quantitative comparison of candidates.
The stated rationales fall into a few recurring patterns, which we group by domain, and \cref{fig:si-practice} summarizes.
In $18$ studies across $5$ domains,
exactly one selects its monitored units by classification performance.

\begin{figure}
\centering
\includegraphics[width=0.98\textwidth]{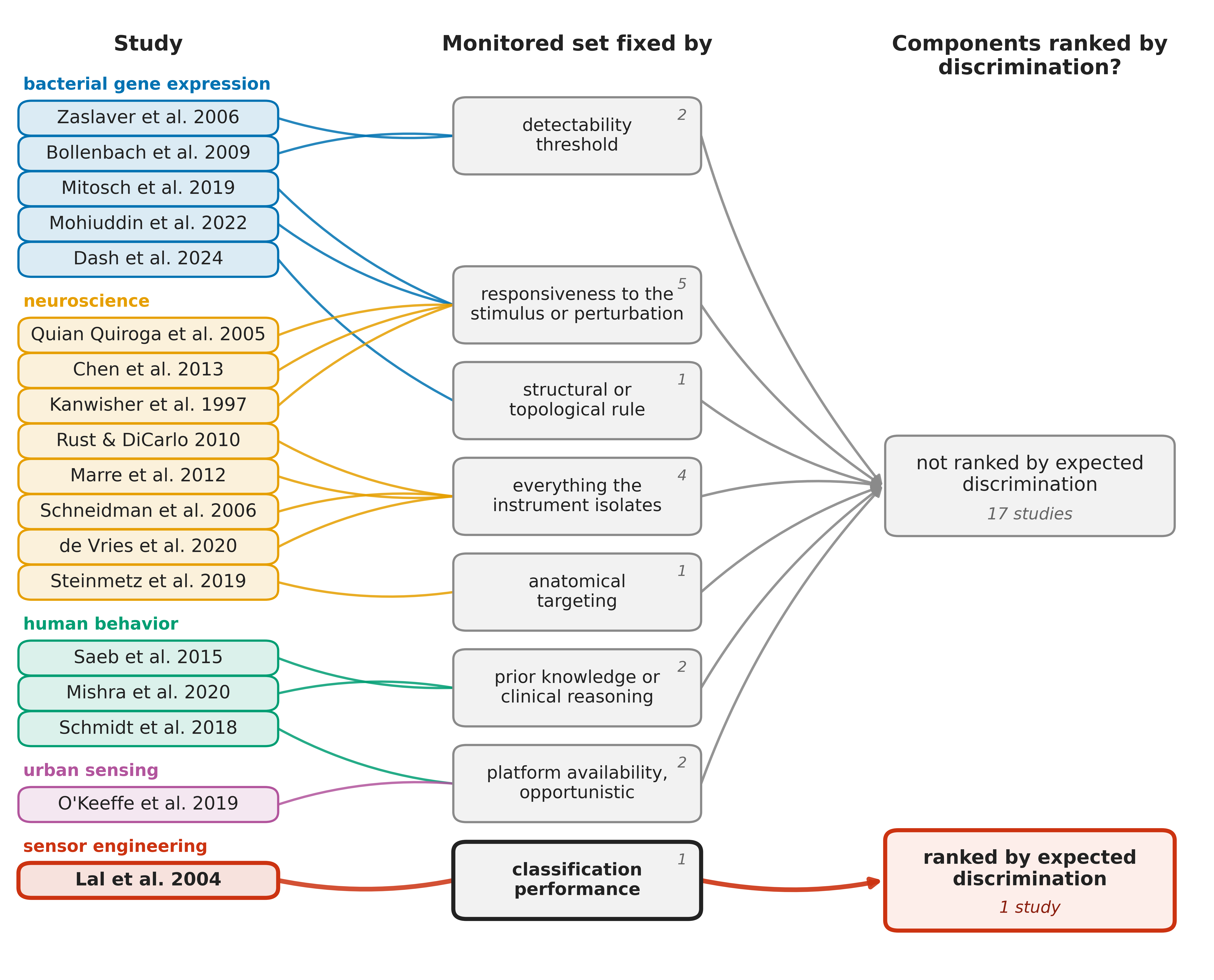}
\caption{
\textbf{How monitored sets are chosen in practice.}
We list $18$ studies across bacterial gene expression, neuroscience, human behavior, urban sensing, and sensor engineering, from top to bottom \cite{zaslaver2006library,bollenbach2009nonoptimal,mitosch2019temporal,mohiuddin2022high,dash2024library,quiroga2005invariant,chen2013ultrasensitive,kanwisher1997fusiform,rust2010selectivity,marre2012mapping,schneidman2006weak,devries2020large,steinmetz2019distributed,saeb2015mobile,mishra2020presymptomatic,schmidt2018introducing,okeeffe2019quantifying,lal2004support}.
Each study flows into the criterion that fixed its analyzed set.
Where a pipeline applies several filters in sequence, the node shown is the last one.
The full pipelines are given in the text.
The small number on each criterion node counts its studies.
The criteria then converge on the question this paper studies: whether components were ranked by their expected ability to discriminate the conditions of interest.
Only $1$ study of the $18$ ranks by discrimination.
}
\label{fig:si-practice}
\end{figure}

\medskip
\noindent\textbf{Biology.}
In bacterial gene expression,
the standard instrument is a promoter reporter library.
Zaslaver et al. built theirs by a structural rule, amplifying every intergenic region longer than $40$ bp \cite{zaslaver2006library}.
Detectability then narrows the analyzed set.
About $60\%$ of their strains gave fluorescence above background at all time points.
Bollenbach et al. took promoters from almost $200$ genes ``representing key cellular functions including DNA stress response, metabolism and ribosome regulation and synthesis'',
then kept only promoters with a clearly detectable signal, leaving $110$ \cite{bollenbach2009nonoptimal}.
Mitosch et al. selected promoters ``that were strongly induced in response to at least one of the antibiotics'',
and for chromosomal integration chose promoters expressed highly enough for reliable detection of dynamic changes \cite{mitosch2019temporal}.
Mohiuddin et al. screened the same library and defined hits by a twofold increase in expression \cite{mohiuddin2022high}.
Dash et al. used topology instead, defining their targets as the $22$ transcription factors ``with 40 or more target genes each'' \cite{dash2024library}.
On the reporter component itself, Lopreside et al. state that the choice ``was largely based on trial and error with very few systematic side-by-side investigations reported'' \cite{lopreside2019comprehensive}.

\medskip
\noindent\textbf{Neuroscience.}
In neuroscience, the recording site and the analyzed units follow different rules.
Quian Quiroga et al. worked with patients whose electrode placement ``was determined exclusively by clinical criteria'',
then screened for responsive units, and $132$ of $993$ units passed \cite{quiroga2005invariant}.
Chen et al. applied a brightness threshold and then a responsiveness threshold defined by significant stimulus driven fluorescence changes \cite{chen2013ultrasensitive}.
Kanwisher et al. used activation by faces over objects to define a region of interest in each subject, then ran new conditions there \cite{kanwisher1997fusiform}.

The opposite practice is also documented.
Rust and DiCarlo took care that ``any neuron whose waveform could be isolated would be recorded, regardless of baseline or visually elicited firing rate'' \cite{rust2010selectivity}.
Marre et al. recorded ``up to 95\% of the ganglion cells over the area of the array'' and reported cells with ``no measurable receptive field'' \cite{marre2012mapping}.
de Vries et al. decoded grating direction ``using all neurons, responsive and unresponsive'',
and report that about $34\%$ of neurons were not reliably responsive to any stimulus \cite{devries2020large}.
Steinmetz et al. targeted probes anatomically and note that ``neurons are included in analyses regardless of receptive field location'' \cite{steinmetz2019distributed}.

These $4$ studies decline a responsiveness filter.
None of them ranks components by expected discriminability.
The case against filtering has also been argued directly.
Analyses of cortical recordings report that discarding weakly tuned or untuned neurons loses information,
that decoding improves by about $10\%$ when all neurons are retained,
and that the threshold separating tuned from untuned units is largely arbitrary because the underlying distribution is unimodal \cite{zylberberg2017untuned}.
That argument and ours reach the same practical recommendation from different mechanisms.
There, untuned neurons carry no stimulus information on their own and contribute only because their variability is correlated with that of the tuned neurons,
so their contribution arrives through the ensemble's noise correlation structure rather than through any tuning of their own \cite{leavitt2017correlated}.
Here, dormant reporters carry shock specific information of their own,
because the dormant nodes that panels recruit flip completely under the one or two shocks they respond to,
and the covariance attribution above shows that correlations between members contribute almost nothing.
Thus the correlational channel is not the mechanism in this system.
Consistent with the information carrying reading, single trial analyses show that neurons lacking classical task responses are still informative about both the stimulus the animal received and the choice it made \cite{insanally2019spike}.

Even the joint analyses that come closest to our set level view take the recorded set as given.
Schneidman et al. showed that weak pairwise correlations coexist with strongly correlated states in groups of $10$ or more retinal cells,
a population level insight computed on whichever cells the electrode array isolated \cite{schneidman2006weak}.

\medskip
\noindent\textbf{Cities.}
Urban sensing states the placement tradeoff openly.
O'Keeffe et al. quantify the sensing power of taxi fleets, whose sensors ride wherever fares go.
The reach they measure is remarkable: over a single day, a random draw of just $10$ taxis passes over roughly one in three of Manhattan's street segments.
They also find a structural blind spot: segments ``never traversed by taxis'' are ``permanently out of reach'' of the fleet \cite{okeeffe2019quantifying}.
Coverage is set by traffic, not by what the city needs to distinguish. 
This is the urban version of placing reporters by response magnitude rather than by discriminability.

\medskip
\noindent\textbf{Human behavior.}
Human behavioral sensing fixes the signals in advance, from physiology and from what devices export.
Saeb et al. specified $10$ features from clinical reasoning,
focusing on ``behavioral markers related to movement through geographic space'' \cite{saeb2015mobile}.
Mishra et al. built detection on resting heart rate, citing earlier work that ``heart rate and skin temperature can be used to detect viral respiratory infections'' \cite{mishra2020presymptomatic}.
Schmidt et al. took their modality set from two commercial devices and compared $16$ combinations afterwards \cite{schmidt2018introducing}.
Formal selection does appear in Lal et al. who used recursive feature elimination to reduce the number of EEG channels \cite{lal2004support}.

\medskip
\noindent\textbf{Optimization and control.}
A separate literature treats placement as optimization.
Joshi \& Boyd choose $k$ of $m$ measurements to minimize ``the error in estimating some parameters'' \cite{joshi2009sensor}.
Summers et al. exploit submodularity of Gramian metrics over candidate placements \cite{summers2016submodularity}.
Liu et al. and Montanari et al. compute minimal sensor sets for structural and for functional observability \cite{liu2013observability,montanari2022functional}.
Haber et al. optimize a determinant of the observability Jacobian and state that they do not model measurement noise \cite{haber2018state}.
In each case the objective is reconstruction or control of the state.

Optimal experiment design in systems biology instead discriminates among rival models \cite{kreutz2009systems,melykuti2010discriminating,vanlier2014optimal}. 
Two studies come closest to our setting.
Brunton et al. optimize placement for classification, using labeled training data with a fixed set of classes \cite{brunton2016sparse}.
Cheng et al. compute ``the minimum number of sensor nodes to discriminate attractors'' under a worst case bound on flipped nodes \cite{cheng2021discrimination}.

\medskip
\noindent\textbf{The amount of prior knowledge an experimenter needs.}
Our panel design comparison also quantifies what each amount of prior knowledge buys, at $\varepsilon = 1$ and $m = 8$.
An experimenter who knows only the class labels and samples dormant reporters at random reaches $0.51$.
Knowing additionally which shocks each dormant node answers, and covering distinct shocks, reaches $0.55$.
Picking the most sensitive nodes with no class knowledge reaches $0.70$.
Mixing the classes in fixed proportion reaches $0.74$ to $0.76$.
The spring rule, which consumes all control statistics, reaches $0.90$.
The genetic algorithm which additionally uses shock statistics reaches $0.94$.
The gap between the label based recipes and the rule is the paper's central claim made practical.
Class labels are knowledge about nodes in isolation, and panel quality is not a property of isolated nodes.
Thus no amount of per-component statistics can replace an experiment that measures the candidates jointly in some capacity.

\medskip
\noindent\textbf{What we are not claiming.}
We do not claim that the reporter selection methods used in the literature we have examined are wrong.
Each study had its own goal.
Shock classification was rarely that goal.
Our claim is narrower.
Derivative work that reuses these libraries or future work that reuses the methods from these studies for the purpose of shock classification inherits a monitored set that was never ranked for that task.
Our results say such a set is unlikely to be optimal for it.

\subsection{Connection to experimental noise}

The noisy initial condition model is not meant to imply that all noise in real systems is literally a perturbation of the initial condition.
Rather, it is a minimal representation of an experimentally ubiquitous problem:
repeated trials differ before the response to the perturbation has fully developed.
In cell experiments, this variability can come from noise in gene expression, differences in cell cycle stage, variation in growth state, and microenvironmental fluctuations.
In neural recordings, it can arise from ongoing background activity and unobserved inputs.
In behavioral data, it can arise from latent context, finite sampling, and uncontrolled external factors.

The parameter $\varepsilon$ therefore measures how much of the preparation escapes control before the perturbation is applied.
The main result of the paper should be interpreted as a statement about measurement design when trials cannot be replicated exactly.
This interpretation is closely aligned with experimental settings in which the perturbation itself is well specified,
but the state before the perturbation is not perfectly reproducible
\cite{paulsson2004summing,paulsson2005models,lestas2010fundamental,golding2005realtime,wang2010robust}.

\processdelayedfloats
\clearpage

{\setlength{\bibsep}{0\baselineskip}

}

\begin{thebibliography}{10}

\bibitem{kauffman1969metabolic}
Stuart~A. Kauffman.
\newblock Metabolic stability and epigenesis in randomly constructed genetic
  nets.
\newblock {\em Journal of Theoretical Biology}, 22(3):437--467, 1969.

\bibitem{kauffman1969homeostasis}
Stuart~A. Kauffman.
\newblock Homeostasis and differentiation in random genetic control networks.
\newblock {\em Nature}, 224(5215):177--178, 1969.

\bibitem{may1972stable}
Robert~M. May.
\newblock Will a large complex system be stable?
\newblock {\em Nature}, 238(5364):413--414, 1972.

\bibitem{levin1998ecosystems}
Simon~A. Levin.
\newblock Ecosystems and the biosphere as complex adaptive systems.
\newblock {\em Ecosystems}, 1(5):431--436, 1998.

\bibitem{newman2003structure}
M.~E.~J. Newman.
\newblock The structure and function of complex networks.
\newblock {\em SIAM Review}, 45(2):167--256, 2003.

\bibitem{menczer2020first}
Filippo Menczer, Santo Fortunato, and Clayton~A. Davis.
\newblock {\em A First Course in Network Science}.
\newblock Cambridge University Press, Cambridge, 2020.

\bibitem{delriochanona2020supply}
R.~M. del Rio-Chanona, P.~Mealy, A.~Pichler, F.~Lafond, and J.~D. Farmer.
\newblock Supply and demand shocks in the {COVID-19} pandemic: An industry and
  occupation perspective.
\newblock {\em Oxford Review of Economic Policy}, 36(Supplement\_1):S94--S137,
  2020.

\bibitem{watts1998collective}
Duncan~J. Watts and Steven~H. Strogatz.
\newblock Collective dynamics of `small-world' networks.
\newblock {\em Nature}, 393(6684):440--442, 1998.

\bibitem{barabasi1999emergence}
Albert-L{\'a}szl{\'o} Barab{\'a}si and R{\'e}ka Albert.
\newblock Emergence of scaling in random networks.
\newblock {\em Science}, 286(5439):509--512, 1999.

\bibitem{strogatz2001exploring}
Steven~H. Strogatz.
\newblock Exploring complex networks.
\newblock {\em Nature}, 410(6825):268--276, 2001.

\bibitem{albert2002statistical}
R{\'e}ka Albert and Albert-L{\'a}szl{\'o} Barab{\'a}si.
\newblock Statistical mechanics of complex networks.
\newblock {\em Reviews of Modern Physics}, 74(1):47--97, 2002.

\bibitem{boccaletti2006complex}
Stefano Boccaletti, Vito Latora, Yamir Moreno, Mario Chavez, and D.-U. Hwang.
\newblock Complex networks: Structure and dynamics.
\newblock {\em Physics Reports}, 424(4--5):175--308, 2006.

\bibitem{pastor2015epidemic}
Romualdo Pastor-Satorras, Claudio Castellano, Piet Van~Mieghem, and Alessandro
  Vespignani.
\newblock Epidemic processes in complex networks.
\newblock {\em Reviews of Modern Physics}, 87(3):925--979, 2015.

\bibitem{aldana2003boolean}
Maximino Aldana.
\newblock Boolean dynamics of networks with scale-free topology.
\newblock {\em Physica D: Nonlinear Phenomena}, 185(1):45--66, 2003.

\bibitem{aldana2003natural}
Maximino Aldana and Philippe Cluzel.
\newblock A natural class of robust networks.
\newblock {\em Proceedings of the National Academy of Sciences},
  100(15):8710--8714, 2003.

\bibitem{oikonomou2006topology}
Panos Oikonomou and Philippe Cluzel.
\newblock Effects of topology on network evolution.
\newblock {\em Nature Physics}, 2(8):532--536, 2006.

\bibitem{parmer2022influence}
Thomas Parmer, Luis~M. Rocha, and Filippo Radicchi.
\newblock Influence maximization in boolean networks.
\newblock {\em Nature Communications}, 13:3457, 2022.

\bibitem{chalfie1994green}
Martin Chalfie, Yuan Tu, Ghia Euskirchen, William~W. Ward, and Douglas~C.
  Prasher.
\newblock Green fluorescent protein as a marker for gene expression.
\newblock {\em Science}, 263(5148):802--805, 1994.

\bibitem{cluzel2000ultrasensitive}
Philippe Cluzel, Michael Surette, and Stanislas Leibler.
\newblock An ultrasensitive bacterial motor revealed by monitoring signaling
  proteins in single cells.
\newblock {\em Science}, 287(5458):1652--1655, 2000.

\bibitem{golding2005realtime}
Ido Golding, Johan Paulsson, Scott~M. Zawilski, and Edward~C. Cox.
\newblock Real-time kinetics of gene activity in individual bacteria.
\newblock {\em Cell}, 123(6):1025--1036, 2005.

\bibitem{le2005realtime}
Thuc~T. Le, S{\'e}bastien Harlepp, C{\u a}lin~C. Guet, Kimberly Dittmar,
  Thierry Emonet, Tao Pan, and Philippe Cluzel.
\newblock Real-time {RNA} profiling within a single bacterium.
\newblock {\em Proceedings of the National Academy of Sciences},
  102(26):9160--9164, 2005.

\bibitem{wang2010robust}
Ping Wang, Lydia Robert, James Pelletier, Wei~Lien Dang, Fran{\c c}ois Taddei,
  Andrew Wright, and Suckjoon Jun.
\newblock Robust growth of {Escherichia coli}.
\newblock {\em Current Biology}, 20(12):1099--1103, 2010.

\bibitem{zaslaver2006library}
Alon Zaslaver, Anat Bren, Michal Ronen, Shalev Itzkovitz, Ilya Kikoin, Seagull
  Shavit, Wolfram Liebermeister, Michael~G. Surette, and Uri Alon.
\newblock A comprehensive library of fluorescent transcriptional reporters for
  {Escherichia coli}.
\newblock {\em Nature Methods}, 3(8):623--628, 2006.

\bibitem{potvintrottier2016synchronous}
Laurent Potvin-Trottier, Nathan~D. Lord, Glenn Vinnicombe, and Johan Paulsson.
\newblock Synchronous long-term oscillations in a synthetic gene circuit.
\newblock {\em Nature}, 538(7626):514--517, 2016.

\bibitem{balleza2018systematic}
Enrique Balleza, J.~Mark Kim, and Philippe Cluzel.
\newblock Systematic characterization of maturation time of fluorescent
  proteins in living cells.
\newblock {\em Nature Methods}, 15(1):47--51, 2018.

\bibitem{korobkova2004molecular}
Ekaterina Korobkova, Thierry Emonet, Jose M.~G. Vilar, Thomas~S. Shimizu, and
  Philippe Cluzel.
\newblock From molecular noise to behavioural variability in a single
  bacterium.
\newblock {\em Nature}, 428(6982):574--578, 2004.

\bibitem{paulsson2004summing}
Johan Paulsson.
\newblock Summing up the noise in gene networks.
\newblock {\em Nature}, 427(6973):415--418, 2004.

\bibitem{paulsson2005models}
Johan Paulsson.
\newblock Models of stochastic gene expression.
\newblock {\em Physics of Life Reviews}, 2(2):157--175, 2005.

\bibitem{lestas2010fundamental}
Ioannis Lestas, Glenn Vinnicombe, and Johan Paulsson.
\newblock Fundamental limits on the suppression of molecular fluctuations.
\newblock {\em Nature}, 467(7312):174--178, 2010.

\bibitem{hilfinger2011separating}
Andreas Hilfinger and Johan Paulsson.
\newblock Separating intrinsic from extrinsic fluctuations in dynamic
  biological systems.
\newblock {\em Proceedings of the National Academy of Sciences},
  108(29):12167--12172, 2011.

\bibitem{schneidman2006weak}
Elad Schneidman, Michael~J. Berry~II, Ronen Segev, and William Bialek.
\newblock Weak pairwise correlations imply strongly correlated network states
  in a neural population.
\newblock {\em Nature}, 440(7087):1007--1012, 2006.

\bibitem{watanabe2013pairwise}
Takamitsu Watanabe, Satoshi Hirose, Hiroyuki Wada, Yoshio Imai, Toru Machida,
  Ichiro Shirouzu, Seiki Konishi, Yasushi Miyashita, and Naoki Masuda.
\newblock A pairwise maximum entropy model accurately describes resting-state
  human brain networks.
\newblock {\em Nature Communications}, 4:1370, 2013.

\bibitem{insanally2019spike}
Michele~N. Insanally, Ioana Carcea, Rachel~E. Field, Chris~C. Rodgers, Brian
  DePasquale, Kanaka Rajan, Michael~R. DeWeese, Badr~F. Albanna, and Robert~C.
  Froemke.
\newblock Spike-timing-dependent ensemble encoding by non-classically
  responsive cortical neurons.
\newblock {\em eLife}, 8:e42409, 2019.

\bibitem{leavitt2017correlated}
Matthew~L. Leavitt, Florian Pieper, Adam~J. Sachs, and Julio~C.
  Martinez-Trujillo.
\newblock Correlated variability modifies working memory fidelity in primate
  prefrontal neuronal ensembles.
\newblock {\em Proceedings of the National Academy of Sciences},
  114(12):E2494--E2503, 2017.

\bibitem{wang2014studentlife}
Rui Wang, Fanglin Chen, Zhenyu Chen, Tianxing Li, Gabriella Harari, Stefanie
  Tignor, Xia Zhou, Dror Ben-Zeev, and Andrew~T. Campbell.
\newblock Studentlife: assessing mental health, academic performance and
  behavioral trends of college students using smartphones.
\newblock In {\em Proceedings of the 2014 ACM International Joint Conference on
  Pervasive and Ubiquitous Computing}, pages 3--14, New York, NY, 2014.
  Association for Computing Machinery.

\bibitem{hilfinger2016constraints}
Andreas Hilfinger, Thomas~M. Norman, Glenn Vinnicombe, and Johan Paulsson.
\newblock Constraints on fluctuations in sparsely characterized biological
  systems.
\newblock {\em Physical Review Letters}, 116(5):058101, 2016.

\bibitem{maclaren2025observing}
Neil~G. MacLaren, Baruch Barzel, and Naoki Masuda.
\newblock Observing network dynamics through sentinel nodes.
\newblock {\em Nature Communications}, 16:10211, 2025.

\bibitem{emonet2008relationship}
Thierry Emonet and Philippe Cluzel.
\newblock Relationship between cellular response and behavioral variability in
  bacterial chemotaxis.
\newblock {\em Proceedings of the National Academy of Sciences},
  105(9):3304--3309, 2008.

\bibitem{park2010interdependence}
Heungwon Park, William Pontius, Calin~C. Guet, John~F. Marko, Thierry Emonet,
  and Philippe Cluzel.
\newblock Interdependence of behavioural variability and response to small
  stimuli in bacteria.
\newblock {\em Nature}, 468(7325):819--823, 2010.

\bibitem{bollenbach2009nonoptimal}
Tobias Bollenbach, Selwyn Quan, Remy Chait, and Roy Kishony.
\newblock Nonoptimal microbial response to antibiotics underlies suppressive
  drug interactions.
\newblock {\em Cell}, 139(4):707--718, 2009.

\bibitem{wood2014uncovering}
Kevin~B. Wood, Kris~C. Wood, Satoshi Nishida, and Philippe Cluzel.
\newblock Uncovering scaling laws to infer multidrug response of resistant
  microbes and cancer cells.
\newblock {\em Cell Reports}, 6(6):1073--1084, 2014.

\bibitem{masuda2026tipping}
Naoki Masuda.
\newblock Detecting and forecasting tipping points from sample variance alone.
\newblock {\em PNAS Nexus}, 5(4):pgag126, 2026.

\bibitem{masuda2024anticipating}
Naoki Masuda, Kazuyuki Aihara, and Neil~G. MacLaren.
\newblock Anticipating regime shifts by mixing early warning signals from
  different nodes.
\newblock {\em Nature Communications}, 15(1):1086, 2024.

\bibitem{kempe2003maximizing}
David Kempe, Jon Kleinberg, and {\'E}va Tardos.
\newblock Maximizing the spread of influence through a social network.
\newblock In {\em Proceedings of the Ninth ACM SIGKDD International Conference
  on Knowledge Discovery and Data Mining}, pages 137--146, New York, NY, 2003.
  Association for Computing Machinery.

\bibitem{kitsak2010identification}
Maksim Kitsak, Lazaros~K. Gallos, Shlomo Havlin, Fredrik Liljeros, Lev Muchnik,
  H.~Eugene Stanley, and Hern{\'a}n~A. Makse.
\newblock Identification of influential spreaders in complex networks.
\newblock {\em Nature Physics}, 6(11):888--893, 2010.

\bibitem{morone2015influence}
Flaviano Morone and Hern{\'a}n~A. Makse.
\newblock Influence maximization in complex networks through optimal
  percolation.
\newblock {\em Nature}, 524(7563):65--68, 2015.

\bibitem{joshi2009sensor}
Siddharth Joshi and Stephen Boyd.
\newblock Sensor selection via convex optimization.
\newblock {\em IEEE Transactions on Signal Processing}, 57(2):451--462, 2009.

\bibitem{liu2013observability}
Yang-Yu Liu, Jean-Jacques Slotine, and Albert-L{\'a}szl{\'o} Barab{\'a}si.
\newblock Observability of complex systems.
\newblock {\em Proceedings of the National Academy of Sciences},
  110(7):2460--2465, 2013.

\bibitem{summers2016submodularity}
Tyler~H. Summers, Fabrizio~L. Cortesi, and John Lygeros.
\newblock On submodularity and controllability in complex dynamical networks.
\newblock {\em IEEE Transactions on Control of Network Systems}, 3(1):91--101,
  2016.

\bibitem{yang2016observability}
Yang Yang and Filippo Radicchi.
\newblock Observability transition in real networks.
\newblock {\em Physical Review E}, 94(3):030301, 2016.

\bibitem{pearson1901lines}
Karl Pearson.
\newblock On lines and planes of closest fit to systems of points in space.
\newblock {\em The London, Edinburgh, and Dublin Philosophical Magazine and
  Journal of Science}, 2(11):559--572, 1901.

\bibitem{jolliffe2016principal}
Ian~T. Jolliffe and Jorge Cadima.
\newblock Principal component analysis: a review and recent developments.
\newblock {\em Philosophical Transactions of the Royal Society A: Mathematical,
  Physical and Engineering Sciences}, 374(2065):20150202, 2016.

\bibitem{brunton2016sparse}
Bingni~W. Brunton, Steven~L. Brunton, Joshua~L. Proctor, and J.~Nathan Kutz.
\newblock Sparse sensor placement optimization for classification.
\newblock {\em SIAM Journal on Applied Mathematics}, 76(5):2099--2122, 2016.

\bibitem{cheng2021discrimination}
Xiaoqing Cheng, Wai-Ki Ching, Sini Guo, and Tatsuya Akutsu.
\newblock Discrimination of attractors with noisy nodes in boolean networks.
\newblock {\em Automatica}, 130:109630, 2021.

\bibitem{radicchi2017superblockers}
Filippo Radicchi and Claudio Castellano.
\newblock Fundamental difference between superblockers and superspreaders in
  networks.
\newblock {\em Physical Review E}, 95(1):012318, 2017.

\bibitem{garg2008synchronous}
Abhishek Garg, Alessandro Di~Cara, Ioannis Xenarios, Luis Mendoza, and Giovanni
  De~Micheli.
\newblock Synchronous versus asynchronous modeling of gene regulatory networks.
\newblock {\em Bioinformatics}, 24(17):1917--1925, 2008.

\bibitem{albert2014boolean}
R{\'e}ka Albert and Juilee Thakar.
\newblock Boolean modeling: a logic-based dynamic approach for understanding
  signaling and regulatory networks and for making useful predictions.
\newblock {\em WIREs Systems Biology and Medicine}, 6(5):353--369, 2014.

\bibitem{park2026succession}
Kyu~Hyong Park and R{\'e}ka Albert.
\newblock Succession-diagram-based {Markov} chains reveal the attractor
  landscape of asynchronous boolean networks.
\newblock {\em npj Systems Biology and Applications}, 2026.

\bibitem{zanudo2015cell}
Jorge G.~T. Za{\~n}udo and R{\'e}ka Albert.
\newblock Cell fate reprogramming by control of intracellular network dynamics.
\newblock {\em PLOS Computational Biology}, 11(4):e1004193, 2015.

\bibitem{gedeon2024network}
Tom{\'a}{\v s} Gedeon.
\newblock Network topology and interaction logic determine states it supports.
\newblock {\em npj Systems Biology and Applications}, 10:98, 2024.

\bibitem{kadelka2025critical}
Claus Kadelka and Kishore Hari.
\newblock Critical assessment of the ability of boolean threshold models to
  describe gene regulatory network dynamics.
\newblock {\em PNAS Nexus}, 4(8):pgaf228, 2025.

\bibitem{fraisse2025representation}
Achille Fraisse, Diego~A. Oyarz{\'u}n, and Meriem El~Karoui.
\newblock Representation learning of single-cell time-series with deep
  variational autoencoders.
\newblock {\em bioRxiv}, 2025.

\bibitem{breiman2001random}
Leo Breiman.
\newblock Random forests.
\newblock {\em Machine Learning}, 45(1):5--32, 2001.

\bibitem{pedregosa2011scikit}
Fabian Pedregosa, Ga{\"e}l Varoquaux, Alexandre Gramfort, Vincent Michel,
  Bertrand Thirion, Olivier Grisel, Mathieu Blondel, Peter Prettenhofer, Ron
  Weiss, Vincent Dubourg, Jake Vanderplas, Alexandre Passos, David Cournapeau,
  Matthieu Brucher, Matthieu Perrot, and {\'E}douard Duchesnay.
\newblock Scikit-learn: machine learning in {Python}.
\newblock {\em Journal of Machine Learning Research}, 12:2825--2830, 2011.

\bibitem{derrida1986random}
Bernard Derrida and Yves Pomeau.
\newblock Random networks of automata: a simple annealed approximation.
\newblock {\em Europhysics Letters}, 1(2):45--49, 1986.

\bibitem{rohlf2002criticality}
Thimo Rohlf and Stefan Bornholdt.
\newblock Criticality in random threshold networks: annealed approximation and
  beyond.
\newblock {\em Physica A: Statistical Mechanics and its Applications},
  310(1--2):245--259, 2002.

\bibitem{shapley1953value}
Lloyd~S. Shapley.
\newblock A value for n-person games.
\newblock In Harold~W. Kuhn and Albert~W. Tucker, editors, {\em Contributions
  to the Theory of Games, Volume II}, number~28 in Annals of Mathematics
  Studies, pages 307--317. Princeton University Press, 1953.

\bibitem{gareyjohnson1979computers}
Michael~R. Garey and David~S. Johnson.
\newblock {\em Computers and Intractability: A Guide to the Theory of
  {NP}-Completeness}.
\newblock W. H. Freeman, San Francisco, 1979.

\bibitem{averbeck2006neural}
Bruno~B. Averbeck, Peter~E. Latham, and Alexandre Pouget.
\newblock Neural correlations, population coding and computation.
\newblock {\em Nature Reviews Neuroscience}, 7:358--366, 2006.

\bibitem{morenobote2014information}
Rub{\'e}n Moreno-Bote, Jeffrey Beck, Ingmar Kanitscheider, Xaq Pitkow, Peter
  Latham, and Alexandre Pouget.
\newblock Information-limiting correlations.
\newblock {\em Nature Neuroscience}, 17:1410--1417, 2014.

\bibitem{mitosch2019temporal}
Karin Mitosch, Georg Rieckh, and Tobias Bollenbach.
\newblock Temporal order and precision of complex stress responses in
  individual bacteria.
\newblock {\em Molecular Systems Biology}, 15(2):e8470, 2019.

\bibitem{mohiuddin2022high}
Sayed~Golam Mohiuddin, Aslan Massahi, and Mehmet~A. Orman.
\newblock High-throughput screening of a promoter library reveals new persister
  mechanisms in {E}scherichia coli.
\newblock {\em Microbiology Spectrum}, 10(1):e02253--21, 2022.

\bibitem{dash2024library}
Suchintak Dash, Rahul Jagadeesan, Ines S.~C. Baptista, Vatsala Chauhan, Vinodh
  Kandavalli, Samuel M.~D. Oliveira, and Andre~S. Ribeiro.
\newblock A library of reporters of the global regulators of gene expression in
  {E}scherichia coli.
\newblock {\em mSystems}, 9(6):e00065--24, 2024.

\bibitem{lopreside2019comprehensive}
Antonia Lopreside, Xinyi Wan, Elisa Michelini, Aldo Roda, and Baojun Wang.
\newblock Comprehensive profiling of diverse genetic reporters with application
  to whole-cell and cell-free biosensors.
\newblock {\em Analytical Chemistry}, 91(23):15284--15292, 2019.

\bibitem{quiroga2005invariant}
Rodrigo Quian~Quiroga, Leila Reddy, Gabriel Kreiman, Christof Koch, and Itzhak
  Fried.
\newblock Invariant visual representation by single neurons in the human brain.
\newblock {\em Nature}, 435(7045):1102--1107, 2005.

\bibitem{chen2013ultrasensitive}
Tsai-Wen Chen, Trevor~J. Wardill, Yi~Sun, Stefan~R. Pulver, Sabine~L.
  Renninger, Amy Baohan, Eric~R. Schreiter, Rex~A. Kerr, Michael~B. Orger,
  Vivek Jayaraman, Loren~L. Looger, Karel Svoboda, and Douglas~S. Kim.
\newblock Ultrasensitive fluorescent proteins for imaging neuronal activity.
\newblock {\em Nature}, 499(7458):295--300, 2013.

\bibitem{kanwisher1997fusiform}
Nancy Kanwisher, Josh McDermott, and Marvin~M. Chun.
\newblock The fusiform face area: a module in human extrastriate cortex
  specialized for face perception.
\newblock {\em Journal of Neuroscience}, 17(11):4302--4311, 1997.

\bibitem{rust2010selectivity}
Nicole~C. Rust and James~J. DiCarlo.
\newblock Selectivity and tolerance (``invariance'') both increase as visual
  information propagates from cortical area {V4} to {IT}.
\newblock {\em Journal of Neuroscience}, 30(39):12978--12995, 2010.

\bibitem{marre2012mapping}
Olivier Marre, Dario Amodei, Nikhil Deshmukh, Kolia Sadeghi, Frederick Soo,
  Timothy~E. Holy, and Michael~J. Berry~II.
\newblock Mapping a complete neural population in the retina.
\newblock {\em Journal of Neuroscience}, 32(43):14859--14873, 2012.

\bibitem{devries2020large}
Saskia E.~J. de~Vries, Jerome~A. Lecoq, Michael~A. Buice, Peter~A. Groblewski,
  Gabriel~K. Ocker, Michael Oliver, David Feng, et~al.
\newblock A large-scale standardized physiological survey reveals functional
  organization of the mouse visual cortex.
\newblock {\em Nature Neuroscience}, 23(1):138--151, 2020.

\bibitem{steinmetz2019distributed}
Nicholas~A. Steinmetz, Peter Zatka-Haas, Matteo Carandini, and Kenneth~D.
  Harris.
\newblock Distributed coding of choice, action and engagement across the mouse
  brain.
\newblock {\em Nature}, 576(7786):266--273, 2019.

\bibitem{zylberberg2017untuned}
Joel Zylberberg.
\newblock The role of untuned neurons in sensory information coding.
\newblock bioRxiv, 2018.
\newblock Preprint, not peer reviewed.

\bibitem{okeeffe2019quantifying}
Kevin~P O'Keeffe, Amin Anjomshoaa, Steven~H Strogatz, Paolo Santi, and Carlo
  Ratti.
\newblock Quantifying the sensing power of vehicle fleets.
\newblock {\em Proceedings of the National Academy of Sciences},
  116(26):12752--12757, 2019.

\bibitem{saeb2015mobile}
Sohrab Saeb, Mi~Zhang, Christopher~J. Karr, Stephen~M. Schueller, Marya~E.
  Corden, Konrad~P. Kording, and David~C. Mohr.
\newblock Mobile phone sensor correlates of depressive symptom severity in
  daily-life behavior: An exploratory study.
\newblock {\em Journal of Medical Internet Research}, 17(7):e175, 2015.

\bibitem{mishra2020presymptomatic}
Tejaswini Mishra, Meng Wang, Ahmed~A. Metwally, Gireesh~K. Bogu, Andrew~W.
  Brooks, Amir Bahmani, Arash Alavi, Alessandra Celli, Emily Higgs, Orit
  Dagan-Rosenfeld, Bethany Fay, Susan Kirkpatrick, Ryan Kellogg, Michelle
  Gibson, Tao Wang, Erika~M. Hunting, Petra Mamic, Ariel~B. Ganz, Benjamin
  Rolnik, Xiao Li, and Michael~P. Snyder.
\newblock Pre-symptomatic detection of {COVID}-19 from smartwatch data.
\newblock {\em Nature Biomedical Engineering}, 4(12):1208--1220, 2020.

\bibitem{schmidt2018introducing}
Philip Schmidt, Attila Reiss, Robert D{\"u}richen, Claus Marberger, and Kristof
  Van~Laerhoven.
\newblock Introducing {WESAD}, a multimodal dataset for wearable stress and
  affect detection.
\newblock In {\em Proceedings of the 20th ACM International Conference on
  Multimodal Interaction (ICMI '18)}, pages 400--408, Boulder, CO, USA, 2018.

\bibitem{lal2004support}
Thomas~Navin Lal, Michael Schr{\"o}der, Thilo Hinterberger, Jason Weston,
  Martin Bogdan, Niels Birbaumer, and Bernhard Sch{\"o}lkopf.
\newblock Support vector channel selection in {BCI}.
\newblock {\em IEEE Transactions on Biomedical Engineering}, 51(6):1003--1010,
  2004.

\bibitem{montanari2022functional}
Arthur~N. Montanari, Chao Duan, Luis~A. Aguirre, and Adilson~E. Motter.
\newblock Functional observability and target state estimation in large-scale
  networks.
\newblock {\em Proceedings of the National Academy of Sciences},
  119(1):e2113750119, 2022.

\bibitem{haber2018state}
Aleksandar Haber, Ferenc Molnar, and Adilson~E. Motter.
\newblock State observation and sensor selection for nonlinear networks.
\newblock {\em IEEE Transactions on Control of Network Systems}, 5(2):694--708,
  2018.

\bibitem{kreutz2009systems}
Clemens Kreutz and Jens Timmer.
\newblock Systems biology: experimental design.
\newblock {\em The FEBS Journal}, 276(4):923--942, 2009.

\bibitem{melykuti2010discriminating}
Bence M{\'e}lyk{\'u}ti, Elias August, Antonis Papachristodoulou, and Hana
  El-Samad.
\newblock Discriminating between rival biochemical network models: three
  approaches to optimal experiment design.
\newblock {\em BMC Systems Biology}, 4:38, 2010.

\bibitem{vanlier2014optimal}
Joep Vanlier, Christian~A. Tiemann, Peter A.~J. Hilbers, and Natal A.~W. van
  Riel.
\newblock Optimal experiment design for model selection in biochemical
  networks.
\newblock {\em BMC Systems Biology}, 8:20, 2014.

\end{thebibliography}
\end{document}